\newcommand{\Rate}{\mathcal{R}}
\begin{document}

\title{\LARGE \bf
Low-Complexity Scheduling Policies for Achieving Throughput and 
Asymptotic Delay Optimality in Multi-Channel Wireless Networks} 


\author{Bo~Ji, Gagan~R.~Gupta, Xiaojun~Lin, and Ness~B.~Shroff
\thanks{B. Ji is with AT\&T Labs. 
X. Lin is with School of ECE at Purdue University.
N. B. Shroff is with Departments of ECE 
and CSE at the Ohio State University.
Emails: ji.33@osu.edu, gagan.gupta@iitdalumni.com, linx@ecn.purdue.edu, shroff.11@osu.edu.}
}

\newcounter{theorem}
\newtheorem{theorem}{Theorem}
\newtheorem{proposition}[theorem]{\it Proposition}
\newtheorem{lemma}[theorem]{\it Lemma}
\newtheorem{corollary}[theorem]{\it Corollary}
\newcounter{definition}
\newtheorem{definition}{\it Definition}
\newcounter{assumption}
\newtheorem{assumption}{\it Assumption}

\newcommand{\Graph}{\mathcal{G}}
\newcommand{\Node}{\mathcal{V}}
\newcommand{\Vertex}{\mathcal{V}}
\newcommand{\Edge}{\mathcal{E}}
\newcommand{\Link}{\mathcal{L}}
\newcommand{\Flow}{\mathcal{S}}
\newcommand{\Route}{\mathcal{H}}
\newcommand{\A}{\mathcal{A}}
\newcommand{\B}{\mathcal{B}}
\newcommand{\C}{\mathcal{C}}
\newcommand{\bZ}{\mathbb{Z}}
\newcommand{\uZ}{\underline{Z}}
\newcommand{\uP}{\underline{P}}
\newcommand{\uR}{\underline{R}}
\newcommand{\uB}{\underline{B}}
\newcommand{\uU}{\underline{U}}
\newcommand{\uQ}{\underline{Q}}
\newcommand{\utP}{\tilde{\underline{P}}}
\newcommand{\utB}{\tilde{\underline{B}}}
\newcommand{\tB}{\tilde{B}}
\newcommand{\Hop}{{H}}
\newcommand{\Matching}{\mathcal{M}}
\newcommand{\Pair}{\mathcal{P}}
\newcommand{\Int}{\mathbb{Z}}
\newcommand{\Expect}{\mathbf{E}}
\newcommand{\System}{\mathcal{Y}}
\newcommand{\Markov}{\mathcal{X}}
\newcommand{\Prob}{\mathbb{P}}
\newcommand{\vM}{\vec{M}}
\newcommand{\bM}{\mathbf{M}}
\newcommand{\s}{{(s)}}
\newcommand{\sk}{{s,k}}
\newcommand{\hsk}{{\hat{s},\hat{k}}}
\newcommand{\rj}{{r,j}}
\newcommand{\xm}{{x_m}}
\newcommand{\xml}{{x_{m_l}}}
\newcommand{\UGraph}{{U}}
\newcommand{\UVertex}{{X}}
\newcommand{\UEdge}{{Y}}
\newcommand{\vlambda}{\vec{\lambda}}
\newcommand{\vpi}{\vec{\pi}}
\newcommand{\vphi}{\vec{\phi}}
\newcommand{\vpsi}{\vec{\psi}}
\newcommand{\blambda}{\mathbf{\lambda}}
\newcommand{\tM}{\tilde{M}}
\newcommand{\tpi}{\tilde{\pi}}
\newcommand{\tphi}{\tilde{\phi}}
\newcommand{\tpsi}{\tilde{\psi}}
\newcommand{\vmu}{\vec{\mu}}
\newcommand{\vnu}{\vec{\nu}}
\newcommand{\valpha}{\vec{\alpha}}
\newcommand{\vbeta}{\vec{\beta}}
\newcommand{\ve}{\vec{e}}
\newcommand{\vxi}{\vec{\xi}}
\newcommand{\hgamma}{\gamma}
\newcommand{\GMSLambda}{\Lambda_{\text{\it GMS}}}
\newcommand{\argmax}{\operatornamewithlimits{argmax}}
\newcommand{\card}{\aleph}
\newcommand{\V}{\mathcal{V}}
\newcommand{\X}{\mathcal{X}}
\newcommand{\Y}{\mathcal{Y}}
\newcommand{\Z}{\mathcal{Z}}
\newcommand{\mS}{\mathcal{S}}
\newcommand{\mM}{\mathcal{M}}
\newcommand{\mE}{\mathcal{E}}
\newcommand{\indicator}{\mathbb{1}}
\newcommand{\mA}{\mathcal{A}}

\maketitle

\begin{abstract}
In this paper, we study the scheduling problem for downlink 
transmission in a multi-channel (e.g., OFDM-based) wireless
network. We focus on a single cell, with the aim of developing 
a unifying framework for designing low-complexity scheduling 
policies that can provide optimal performance in terms of both 
throughput and delay. We develop new easy-to-verify sufficient 
conditions for rate-function delay optimality (in the many-channel 
many-user \emph{asymptotic} regime) and throughput optimality
(in general \emph{non-asymptotic} setting), respectively. 
The sufficient conditions allow us to prove rate-function delay
optimality for a class of Oldest Packets First (OPF) policies and 
throughput optimality for a large class of Maximum Weight in the 
Fluid limit (MWF) policies, respectively. By exploiting the special 
features of our carefully chosen sufficient conditions and intelligently 
combining policies from the classes of OPF and MWF policies, we 
design hybrid policies that are both rate-function delay-optimal 
and throughput-optimal with a complexity of $O(n^{2.5} \log n)$, 
where $n$ is the number of channels or users. Our sufficient condition 
is also used to show that a previously proposed policy called Delay 
Weighted Matching (DWM) is rate-function delay-optimal. However, DWM 
incurs a high complexity of $O(n^5)$. Thus, our approach yields 
significantly lower complexity than the only previously designed delay 
and throughput optimal scheduling policy. We also conduct numerical 
experiments to validate our theoretical results.
\end{abstract}


\section{Introduction} \label{sec:intro}
Designing high-performance scheduling algorithms has been a vital and 
challenging problem in wireless networks. Among the many dimensions 
of network performance, the most critical ones are perhaps throughput, 
delay, and complexity. However, it is in general extremely difficult, 
if not impossible, to develop scheduling policies that attain the optimal 
performance in terms of both throughput and delay, without the cost of 
high complexity \cite{shah11}.

In this paper, we focus on the setting of a single-hop multi-user multi-channel 
system. A practically important example of such a multi-channel system is the 
downlink of a single cell in 4G OFDM-based celluar networks (e.g., LTE and 
WiMax). Such a system typically has a large bandwidth that can be divided into 
multiple orthogonal sub-bands (or channels), which need to be allocated to a 
large number of users by a scheduling algorithm. The main question that we will 
attempt to answer in this paper is the following: \emph{How do we design efficient 
scheduling algorithms that simultaneously provide high throughput, small delay, 
and low complexity?} 

We consider a multi-channel system that has $n$ channels and a proportionally 
large number of users. This setting is referred to as the many-channel many-user 
asymptotic regime when $n$ goes infinity. The connectivity between each user and 
each channel is assumed to be time-varying, due to channel fading. We assume that 
the base station (BS) maintains separate First-in First-out (FIFO) queues that 
buffer the packets destined to each user. The \emph{delay} metric that we will 
focus on in this paper is the \emph{asymptotic decay-rate} (also called the 
\emph{rate-function} in the large-deviations theory) of the probability that the 
largest packet waiting time in the system exceeds a fixed threshold, as both the 
number of channels and the number of users go to infinity. (Refer to Eq.~(\ref{eq:rf}) 
for the precise definition.) 

Next, we overview some key related works. In \cite{tassiulas93}, the authors
considered a single-server model with time-varying channels, and showed that 
the longest-connected-queue (LCQ) algorithm minimizes the average delay for 
the special case of symmetric (\emph{i.i.d.} Bernoulli) arrival and channel. 
Later, the results were generalized for a multi-server model in \cite{ganti07}. 
The authors of \cite{kittipiyakul09} further generalized the multi-server 
model by considering more general permutation-invariant arrivals (that are not 
restricted to Bernoulli only) and multi-rate channel model. Hence, the problem 
of minimizing a general cost function of queue-lengths (includes minimizing the 
expected delay) studied in \cite{kittipiyakul09} becomes harder. There, for 
special cases of ON-OFF channel model with two users or allowing for fractional 
server allocation, an optimal scheduling algorithm was derived. Using the insights 
obtained from the analytical results in \cite{kittipiyakul09} for ON-OFF channel 
model, in \cite{kittipiyakul07} the same authors developed heuristic policies and 
showed through simulations that their proposed heuristic policies perform well 
under a general channel model. Note that in contrast to this paper, the above 
studies directly minimize queue-length or delay in a \emph{non-asymptotic} regime, 
which is an extremely difficult problem in general. 

As we do in this paper, another body of related works \cite{bodas09,bodas10,
bodas11a,bodas11b} focus on the many-channel many-user asymptotic regime, 
where the analysis may become more tractable. Even though the analysis for 
an asymptotic setting is very different from the non-asymptotic analysis in 
\cite{kittipiyakul09}, it is remarkable that some of the insights are consistent. 
For example, from a delay optimality perspective, the above two bodies of 
studies both point to the tradeoff between maximizing instantaneous throughput 
and balancing the queues. Thus, we believe that, collectively, these studies 
under different settings provide useful insights for designing efficient 
scheduling solutions in practice. 
 
In \cite{bodas09,bodas10,bodas11a,bodas11b}, a number of queue-length-based 
scheduling policies for achieving optimal or positive queue-length-based 
rate-unction\footnote{The queue-length-based rate-function is defined as the 
asymptotic decay-rate of the probability that the largest queue length in the 
system exceeds a fixed threshold.} were developed. In particular, an optimal 
scheduling policy that maximizes the queue-length-based rate-function has been 
derived with complexity $O(n^3)$ \cite{bodas11b}. However, these works have two 
key limitations. First, the schedulers' performance are proven under the 
assumption that the arrival process is \emph{i.i.d. not only across users, but 
also in time}, which does not model the temporal correlation present in most 
real network traffic. More importantly, it is well known that good queue-length 
performance does not necessarily translate to good delay performance 
\cite{sharma11b,sharma11,ji13c}. A recently developed scheduling policy called Delay 
Weighted Matching (DWM) \cite{sharma11,sharma11b}, which makes scheduling decisions 
by maximizing the sum of the delays of the scheduled packets in each time-slot, 
focuses directly on the delay performance as we do in this paper. It has been shown 
that the DWM policy is rate-function delay-optimal in some cases. However, DWM has 
the following two key drawbacks: 1) it is unclear whether DWM is rate-function 
delay-optimal in general; and 2) DWM yields a very high complexity of $O(n^5)$ 
and is thus not amenable for practical implementations.

Hence, the state-of-the-art does not satisfactorily answer our main
question of how to design scheduling policies with a low complexity,
while guaranteeing \emph{provable optimality} for both throughput 
and delay. In this paper, we address this challenge, and provide the 
following key intellectual contributions.
%

First, we characterize \emph{easy-to-verify} sufficient conditions for 
rate-function delay optimality in the many-channel many-user asymptotic 
regime and for throughput optimality in general non-asymptotic settings. 
The sufficient conditions allow us to prove rate-function delay optimality 
for a class of \emph{Oldest Packets First (OPF)} policies and throughput 
optimality for a large class of \emph{Maximum Weight in the Fluid limit 
(MWF)} policies. Moreover, the sufficient conditions can be used to show 
that a slightly modified version of the DWM policy is both rate-function 
delay-optimal and throughput-optimal.

Second, we develop an $O(n^{2.5} \log n)$-complexity scheduling policy
called DWM-$n$. The DWM-$n$ policy shares the high-level similarity with 
the DWM policy \cite{sharma11,sharma11b}, but makes scheduling decisions 
in each time-slot by maximizing the sum of the delays of the scheduled 
packets over only the $n$ oldest packets in the system, rather than over 
all the packets as in the DWM policy. We show that DWM-$n$ is an OPF policy 
and is thus rate-function delay-optimal. However, DWM-$n$ is \emph{not} 
throughput-optimal in general, and may perform poorly when $n$ is not large.

Third, by exploiting the special features of our carefully-chosen sufficient 
conditions and intelligently combining policies from the classes of OPF and 
MWF policies, we develop a class of two-stage hybrid policies that are both 
throughput-optimal and rate-function delay-optimal. 
In particular, we can adopt the DWM-$n$ policy in stage~1 and the Delay-based MaxWeight 
Scheduling (D-MWS) policy in stage~2, respectively, so as to \emph{design an optimal 
hybrid policy with a low complexity of $O(n^{2.5} \log n)$}.  

Finally, we conduct numerical experiments to validate our theoretical 
results in different scenarios.

\section{System Model} \label{sec:model}
We consider a multi-channel system with $n$ orthogonal channels and $n$ 
users, which can be modeled as a multi-queue multi-server system with 
stochastic connectivity, as shown in Fig.~\ref{fig:system}. \emph{For
ease of presentation, the number of users is assumed to be equal to 
the number of channels. Our analysis for rate-function delay optimality 
follows similarly if the number of users scales linearly with the number 
of channels.} Throughout the rest of the paper, we will use the terms 
``user" and ``queue" interchangeably, and use the terms ``channel" and 
``server" interchangeably. We assume that time is slotted. In a time-slot, 
a server can be allocated to only one queue, but a queue can get service 
from multiple servers. The connectivity between queues and servers is 
time-varying, i.e., it can change between ``ON" and ``OFF" from time to 
time. We assume that perfect channel state information (i.e., whether each 
channel is ON or OFF for each user in each time-slot) is known at the BS. 
This is a reasonable assumption in the downlink scenario of a single cell 
in a multi-channel cellular system with dedicated feedback channels.

\begin{figure}[t]
\centering
\epsfig{file=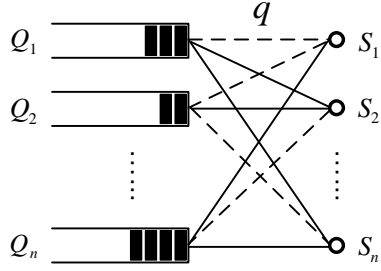,width=0.3\linewidth}
\caption{System model. The connectivity between each pair of queue $Q_i$ and 
server $S_j$ is ``ON" (denoted by a solid line) with probability $q$, and ``OFF" 
(denoted by a dashed line) otherwise.}
\label{fig:system}
\end{figure}

The notations used in this paper are as follows. We let $Q_i$ denote 
the FIFO queue (at the BS) associated with the $i$-th user, and let 
$S_j$ denote the $j$-th server. We assume infinite buffer for all the
queues. Let $A_i(t)$ denote the number of packet arrivals to queue 
$Q_i$ in time-slot $t$, let $A(t)=\sum_{i=1}^n A_i(t)$ denote the 
cumulative arrivals to the entire system in time-slot $t$, and let 
$A(t_1,t_2) =\sum_{\tau=t_1}^{t_2} A(\tau)$ denote the cumulative 
arrivals to the system from time $t_1$ to $t_2$. We let $\lambda_i$ 
be the mean arrival rate of queue $Q_i$, and let $\lambda \triangleq 
[\lambda_1,\lambda_2,\dots,\lambda_n]$ denote the arrival rate vector.
We assume that packet 
arrivals occur at the beginning of each time-slot, and packet departures 
occur at the end of each time-slot. We let $Q_i(t)$ denote the length 
of queue $Q_i$ at the beginning of time-slot $t$ immediately after packet 
arrivals. Also, let $Z_{i,l}(t)$ denote the delay (i.e., waiting time) 
of the $l$-th packet of queue $Q_i$ at the beginning of time-slot $t$, 
\emph{which is measured since the time when the packet arrived to queue 
$Q_i$ until the beginning of time-slot $t$}. Note that at the end of each 
time-slot, the packets still present in the system will have their delays 
increased by one due to the elapsed time. We then let $W_i(t)=Z_{i,1}(t)$ 
denote the head-of-line (HOL) packet delay of queue $Q_i$ at the beginning 
of time-slot $t$. Further, we use $C_{i,j}(t)$ to denote the capacity of 
the link between queue $Q_i$ and server $S_j$ in time-slot $t$, i.e., the 
maximum number of packets that can be served by server $S_j$ from queue 
$Q_i$ in time-slot $t$. 
Finally, we let $\mathbb{1}_{\{ \cdot \}}$ denote the indicator function, 
and let $\mathbb{Z}^+$ denote the set of positive integers.


We now state the assumptions on the arrival processes. The throughput analysis 
is carried out under very general conditions (Assumption~\ref{ass:arr_slln}) 
similar to that of \cite{andrews04}. 

\begin{assumption}
\label{ass:arr_slln}
For each user $i \in \{1,2,\dots,n \}$, the arrival process $A_i(t)$ 
is an irreducible and positive recurrent Markov chain with countable 
state space, and satisfies the Strong Law of Large Numbers: That is, 
with probability one,
\begin{equation}
\label{eq:slln}
\lim_{t \rightarrow \infty} \frac {\sum^{t-1}_{\tau=0} A_i(\tau)} {t} = \lambda_i.
\end{equation}
We also assume that the arrival processes are mutually independent
across users (which can be relaxed for showing throughput optimality, 
as discussed in \cite{andrews04}.)
\end{assumption}

Assumptions~\ref{ass:arr_bound} and \ref{ass:arr_ld} (stated below) 
will be used for rate-function delay analysis.

\begin{assumption}
\label{ass:arr_bound}
There exists a finite $L$ such that $A_i(t) \le L$ for any $i$ and $t$, 
i.e., arrivals are bounded. Further, we assume $\Prob (A(s,s+t-1) = Lnt) 
> 0$ for any $s$, $t$ and $n$.
\end{assumption}

\begin{assumption}
\label{ass:arr_ld}
The arrival processes are \emph{i.i.d.} across users, and $\lambda_i=p$ for any 
user $i$. Given any $\epsilon>0$ and $\delta>0$, there exists $T_B(\epsilon,\delta)>0$, 
$N_B(\epsilon,\delta)>0$, and a positive function $I_B(\epsilon,\delta)$ independent 
of $n$ and $t$ such that
\[
\Prob \left( \frac {\sum_{\tau=1}^{t} \mathbb{1}_{\{|A(\tau) - pn|
> \epsilon n \}}} {t} > \delta \right) < \exp (-nt I_B(\epsilon,\delta)),
\]
for all $t \ge T_B(\epsilon,\delta)$ and $n \ge N_B(\epsilon,\delta)$. 
\end{assumption}

Assumptions~\ref{ass:arr_bound} and \ref{ass:arr_ld} are relatively mild. 
The first part of Assumption~\ref{ass:arr_bound} and Assumption~\ref{ass:arr_ld}
have also been used in the previous work \cite{sharma11,sharma11b} for 
rate-function delay analysis. 
In Assumption~\ref{ass:arr_bound}, the first part requires that the arrivals 
in each time-slot have bounded support; and the second part guarantees that 
there is a positive probability that all users have the maximum number of
arrivals in any time-interval with any length. 
Assumption~\ref{ass:arr_ld} allows the arrivals for each user to be 
correlated over time (e.g., arrivals driven by a two-state Markov chain), 
which is more general than the arrival processes (\emph{i.i.d.} in time) 
considered in \cite{bodas09,bodas10,bodas11a,bodas11b}. 

We then describe our channel model as follows.

\begin{assumption}
\label{ass:channel}
In any time-slot $t$, $C_{i,j}(t)$ is modeled as a Bernoulli 
random variable with a parameter $q \in (0,1)$, i.e.,
\[
C_{i,j}(t) = \left\{
\begin{array}{ll}
1, & \text{with probability}~ q,\\
0, & \text{with probability}~ 1-q.
\end{array}
\right. 
\]
All the random variables $C_{i,j}(t)$ are assumed to be mutually independent 
across all the variables $i,j$ and $t$. 
\end{assumption}

We assume unit channel capacity as above. Under this assumption, 
we will also let $C_{i,j}(t)$ denote the connectivity between queue $Q_i$ and 
server $S_j$ in time-slot $t$, without causing confusions. As in the previous 
works \cite{bodas09,bodas10,bodas11a,bodas11b,sharma11,sharma11b}, in this paper 
we assume \emph{i.i.d.} channels for the analytical results only. 
Moreover, we will show through simulations that our proposed low-complexity 
solution also performs well in more general scenarios, e.g., when the channel 
condition follows a two-state Markov chain that allows correlation over time. 
Further, we will briefly discuss how to generalize our solution to more general 
scenarios towards the end of this paper.


Next, we define the \emph{optimal throughput region} (or \emph{stability 
region}) of the system for any fixed integer $n>0$. As in \cite{andrews04}, 
a stochastic queueing network is said to be \emph{stable} if it can be 
described as a discrete-time countable Markov chain and the Markov chain 
is stable in the following sense: The set of positive recurrent states is 
nonempty, and it contains a finite subset such that with probability one, 
this subset is reached within finite time from any initial state. When all 
the states communicate, stability is equivalent to the Markov chain being 
positive recurrent. The \emph{throughput region} of a scheduling policy is 
defined as the set of arrival rate vectors for which the network remains 
stable under this policy. Further, the \emph{optimal throughput region} is 
defined as the union of the throughput regions of all possible scheduling 
policies. We let $\Lambda^{*}$ denote the optimal throughput region. A 
scheduling policy is \emph{throughput-optimal}, if it can stabilize any 
arrival rate vector $\lambda$ strictly inside $\Lambda^*$. 
For more discussions on the characterization of $\Lambda^{*}$ please refer 
to Appendix~\ref{app:otr}.

For delay analysis, we consider the many-channel many-user asymptotic regime. 
Let $W(t)$ denote the largest HOL delay over all the queues (i.e., the largest 
or worst packet waiting time in the system) at the beginning of time-slot $t$, 
i.e., $W(t) \triangleq \max_{1 \le i \le n} W_i(t)$. Assuming that the system 
is stationary and ergodic, we define the \emph{rate-function} for integer 
threshold $b \ge 0$ as
\begin{equation}
\label{eq:rf}
I(b) \triangleq \lim_{n \rightarrow \infty} \frac {-1} {n} \log \Prob (W(0)>b).
\end{equation}
We can then estimate $\Prob (W(0)>b) \approx \exp (-nI(b))$ when $n$ is 
large, and the estimation accuracy tends to be higher as $n$ increases. 
Clearly, for large $n$ a larger value of the rate-function leads to better 
delay performance, i.e., a smaller probability that the largest HOL delay 
exceeds a certain threshold. A scheduling policy is \emph{rate-function 
delay-optimal} if for any fixed integer threshold $b\ge0$, it achieves 
the maximum rate-function over all possible scheduling policies. 

Note that \emph{the rate-function optimality is studied in the asymptotic 
regime, i.e., when $n$ goes to infinity. Although the convergence of the 
rate-function is typically fast, the throughput performance may be poor 
for small to moderate values of $n$.} As a matter of fact, a rate-function 
delay-optimal policy may not even be throughput-optimal for a fixed $n$ 
(e.g., the DWM-$n$ policy that we will propose in Section~\ref{sec:suff}). 
To that end, we are interested in designing scheduling policies that maximize 
both the throughput (for any fixed $n$) and the rate-function (in the 
many-channel many-user asymptotic regime).

\section{An Upper Bound on The Rate-Function} \label{sec:ub}
In this section, we derive an upper bound on the rate-function that can be 
achieved by any scheduling algorithm. Then, later in Section~\ref{sec:suff},
we will provide a sufficient condition for achieving this upper bound and
thus achieving the optimal rate-function.

As in \cite{sharma11,sharma11b}, for any integer $t>0$ and any real number 
$x \ge 0$, we define the quantity
\[
I_A(t,x) \triangleq \sup_{\theta>0} [\theta(t+x)-\lambda_{A_i(-t+1,0)}(\theta)],
\]
where $\lambda_{A_i(-t+1,0)}(\theta) = \log \Expect [e^{\theta {A_i(-t+1,0)}}]$ 
is the cumulant-generating function of $A_i(-t+1,0)$ and $A_i(-t+1,0) = 
\sum_{\tau=-t+1}^{0} A_i(\tau)$. From Cramer's Theorem, this quantity, $I_A(t,x)$, 
is equal to the asymptotic decay-rate of the probability that in any interval of 
$t$ time-slots, the total number of packet arrivals to the system is no smaller 
than $n(t+x)$, as $n$ tends to infinity, i.e.,  
\begin{equation}
\label{eq:IAtx}
\lim_{n \rightarrow \infty} \frac {-1}{n} \log \Prob(A(-t+1,0) \ge n(t+x)) = I_A(t,x).
\end{equation}

Define the following for the case of $L>1$. For any integer $x \ge 0$, 
we define $t_x$ as 
\[
t_x \triangleq \frac{x} {L-1}.
\]
Then, we define $\Psi_b \triangleq \{c \in \{0,1,\dots,b \} ~|~ 
t_{b-c} \in \mathbb{Z}^+ \}$. It will later become clear why the values 
of $c$ in the set $\Psi_b$ are important and need to be considered separately.  
Let $I_X \triangleq \log \frac {1} {1-q}$. Then, for any integer $b \ge 0$, 
we define the quantity
\begin{equation}
\begin{split}
I_0(b) \triangleq \min \{ &(b+1)I_X, \\
& \min_{0 \le c \le b} \{\inf_{t>t_{b-c}} I_A(t,b-c) + c I_X \}, \\
& \min_{c \in \Psi_b} \{ I_A(t_{b-c},b-c) + (c+1) I_X \} \}.
\end{split}
\end{equation}
Further, for any given integer $L \ge 1$, we define
\[
I^*_0(b) \triangleq \left\{
\begin{array}{ll}
(b+1) I_X, &~\text{if}~ L=1,\\
I_0(b), &~\text{if}~ L>1.
\end{array}
\right. 
\]

In the following theorem, we show that for any given integer threshold 
$b \ge 0$, $I^*_0(b)$ is an upper bound of the rate-function that can 
be achieved by any scheduling policy.
\begin{theorem}
\label{thm:ub}
Given the system model described in Section~\ref{sec:model}, for any 
scheduling algorithm, we have
\[
\limsup_{n \rightarrow \infty} \frac {-1}{n} \log \Prob(W(0)>b) \le I^*_0(b),
\]
for any given integer threshold $b \ge 0$.
\end{theorem}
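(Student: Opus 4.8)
The plan is to establish the upper bound on the rate-function by exhibiting, for each of the three terms appearing in the definition of $I_0(b)$ (and the single term when $L=1$), a ``bad event'' that forces $W(0) > b$ and whose probability decays no faster than $\exp(-n \cdot (\text{that term}))$. Since $\Prob(W(0) > b)$ dominates the probability of each such event, taking logarithms, dividing by $-n$, and sending $n \to \infty$ yields $\limsup_n \frac{-1}{n}\log\Prob(W(0) > b)$ no larger than each term, hence no larger than their minimum $I_0^*(b)$. The key point is that these bad events must be chosen so that they defeat \emph{every} scheduling policy simultaneously — they should be events in which the arrivals and channel states are so adverse that no allocation of the $n$ servers can clear enough of the backlog to keep all HOL delays at most $b$.

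The first bad event targets the term $(b+1)I_X$ with $I_X = \log\frac{1}{1-q}$. Consider the event that some fixed queue, say $Q_1$, receives at least one arrival in each of time-slots $-b, -b+1, \dots, 0$ — more precisely, receives enough arrivals that the packet which arrived at slot $-b$ is still HOL at slot $0$ unless served — and that the channels connecting $Q_1$ to \emph{all} $n$ servers are OFF in \emph{every} slot from $-b$ through $0$. Conditioned on the (positive-probability, by Assumption~\ref{ass:arr_bound}) arrival pattern, the probability that all $n$ links of $Q_1$ are OFF in a single slot is $(1-q)^n$, and over $b+1$ slots this is $(1-q)^{n(b+1)}$, whose decay rate is exactly $(b+1)I_X$. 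On this event $Q_1$ cannot be served at all for $b+1$ slots, so its HOL packet has delay exceeding $b$ regardless of the policy; this handles the $L=1$ case in full and contributes the first term of $I_0(b)$ when $L>1$.

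For $L>1$ the other two terms combine a \emph{large-arrivals} event with a \emph{channel-outage} event: intuitively, within an interval of $t$ slots roughly $n(t + (b-c))$ packets flood in — more than the at most $nt$ that all servers together could possibly serve even with perfect connectivity — so a residual backlog of order $n(b-c)$ remains; then for the final $c$ (or $c+1$) slots one unlucky queue that still holds an old packet has all its links OFF. The arrival-flood probability has decay rate $\inf_{t > t_{b-c}} I_A(t, b-c)$ by Cramér's theorem (Eq.~(\ref{eq:IAtx})), the length constraint $t > t_{b-c} = \frac{b-c}{L-1}$ being exactly what is needed so that $n(t + (b-c))$ exceeds the $nt$ service capacity by the overshoot that cannot be absorbed given $A_i(t) \le L$; multiplying by the $c$-slot outage factor $(1-q)^{nc}$ gives $\inf_{t>t_{b-c}} I_A(t,b-c) + cI_X$. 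The special set $\Psi_b$ — values of $c$ for which $t_{b-c}$ is itself a positive integer — allows taking $t = t_{b-c}$ exactly (rather than a strict infimum), squeezing out one extra outage slot and yielding $I_A(t_{b-c},b-c) + (c+1)I_X$. I would make the pigeonhole/counting argument precise: after the flood, at least one queue holds a packet of age at least $b-c$ at the start of the $c$-slot window, and I apply a union bound over the (at most $n$, hence subexponential) choices of which queue, so the exponent is unaffected.

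The main obstacle will be the careful bookkeeping in the $L>1$ case: one must verify that the ``overshoot'' $n(b-c)$ packets genuinely cannot be cleared by \emph{any} policy during the relevant window — this requires tracking that the total residual backlog forces at least one HOL delay to exceed $b$, accounting for the fact that new arrivals keep coming and that the adversarial interval and the outage window must be aligned correctly in time. One must also check that the arrival event in $I_A(t, b-c)$ and the channel-outage event are independent (they are, since arrivals and channels are independent by Assumptions~\ref{ass:arr_ld}, \ref{ass:channel}), so the decay rates simply add, and confirm that Assumption~\ref{ass:arr_bound}'s second part guarantees the flood event has the claimed asymptotic probability even when $I_A$ is evaluated at the boundary. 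Finally, since there are only finitely many terms in the min defining $I_0(b)$, the $\limsup$ passes through the minimum without difficulty.
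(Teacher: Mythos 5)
Your plan is essentially the paper's own proof: the paper also derives the bound by exhibiting three families of bad events (one forcing a $(b+1)$-slot outage of a fixed queue, one combining an arrival flood over $t > t_{b-c}$ slots with a trailing $c$-slot outage, and one for $c \in \Psi_b$ that uses exactly $t = t_{b-c}$ flood slots plus one extra outage slot inside the flood window plus $c$ trailing outage slots), lower-bounding $\Prob(W(0)>b)$ by each and taking the finite minimum. The Cram\'er-theorem step, the independence of arrivals and channels, and the observation that for $c\in\Psi_b$ the service capacity exactly matches the $Lnt_{b-c}$ arrivals so that a single lost service opportunity creates a residual packet, are all as in the paper.

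One concrete slip worth fixing: your first bad event is off by one slot and as stated does \emph{not} force $W(0)>b$. With arrivals at the beginning of slot $-b$ and departures at slot ends, a packet arriving at slot $-b$ that is never served has delay exactly $b$ at the beginning of slot $0$, so $W(0)=b$, not $W(0)>b$; moreover the channel state at slot $0$ is irrelevant to $W(0)$ since $W(0)$ is measured at the beginning of slot $0$. You must use a packet arriving at slot $-b-1$ and an outage on $Q_1$ over the $b+1$ slots $-b-1,\dots,-1$, as the paper does. A second, smaller point: you invoke a union bound over which queue holds the residual packet, but the direction of the inequality you need is a \emph{lower} bound on $\Prob(W(0)>b)$, so the correct move (and the paper's) is simply to fix $Q_1$ without loss of generality by exchangeability; the union bound is both unnecessary and, taken literally, points the wrong way. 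Finally, for the $\mE_2^c$ term you must handle the fact that Cram\'er's theorem gives $I_A(t,\cdot)$ at a slightly larger argument $b-c+\delta$ (so the residual is at least $n\delta\ge 1$), and then use right-continuity of $I_A(t,\cdot)$ on $[0,(L-1)t)$ to pass $\delta\to 0$ — you gesture at this but it needs Lemma~\ref{lem:IAtx} or an equivalent continuity argument to close.
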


We prove Theorem~\ref{thm:ub} by considering three types of events that lead 
to the delay-violation event $\{W(0)>b\}$ and computing their probabilities.
We provide the proof in Appendix~\ref{app:thm:ub}.

Note that in \cite{sharma11b}, the authors derived another upper bound 
$\min \{ (b+1)I_X, \min_{0 \le c \le b} \{ I^+_A(b-c)+cI_X\} \}$, where 
$I^+_A(x) \triangleq \inf_{t>0} I^+_A(t,x)$ and $I^+_A(t,x) \triangleq 
\lim_{y \rightarrow x^+} I_A(t,y)$. We would like to remark that their
upper bound was derived by considering two types of events that lead to 
the delay-violation event, which yet accounts for only a proper subset 
of the events that we consider in Appendix~\ref{app:thm:ub}. Hence, their 
upper bound could be larger than $I^*_0(b)$ in some cases.


\section{Sufficient Conditions} \label{sec:suff}
In \cite{sharma11,sharma11b}, the authors proposed the DWM policy and studied
its rate-function delay optimality\footnote{Although the delay metric considered
in \cite{sharma11,sharma11b} is slightly different from ours, both metrics are 
closely related. Moreover, the rate-function delay analysis for DWM in 
\cite{sharma11,sharma11b} is also applicable for our defined rate-function as 
in (\ref{eq:rf}).} (without the second part of Assumption~\ref{ass:arr_bound}) 
in some cases. Specifically, in \cite{sharma11,sharma11b}, the authors proved 
that DWM attains a rate-function that is no smaller than $\min \{ (b+1)I_X, 
\min_{0 \le c \le b} \{ I_A(b-c)+cI_X\} \}$, where $I_A(x) \triangleq \inf_{t>0} 
I_A(t,x)$. This is proved by showing that the FBS policy (with a properly chosen 
operating parameter $h$) can attain this rate-function and DWM dominates FBS for 
all values of $h$ in a sample-path sense. As pointed out in \cite[Section~V.D]{sharma11b}, 
there may be a gap between the rate-function attained by DWM and the upper bound 
derived in \cite{sharma11b}, depending on the value of $b$ and the arrival process. 
More specifically, it can be shown that for given $b\ge0$, if $I_A(b-c)=I^+_A(b-c)$ 
for all values of $c \in \{0,\dots,b\}$ for the given arrival process, then both FBS 
and DWM are rate-function delay-optimal.

However, it is unclear whether the DWM policy is rate-function delay-optimal in 
general. Moreover, its high complexity $O(n^5)$ renders it impractical. Hence, 
\emph{the grand challenge is to find low-complexity scheduling policies that are 
both throughput-optimal and rate-function delay-optimal.} To that end, in this 
section, we first characterize easy-to-verify sufficient conditions for rate-function 
delay optimality in the many-channel many-user asymptotic regime and for throughput 
optimality in non-asymptotic settings. We then develop two classes of policies, 
called the Oldest Packets First (OPF) policies and the Maximum Weight in the Fluid 
limit (MWF) polices, that satisfy the sufficient condition for rate-function delay 
optimality and throughput optimality, respectively.

As discussed in the introduction, our ultimate goal is to \emph{develop 
low-complexity hybrid policies that are both rate-function delay-optimal 
and throughput-optimal.} However, it is unclear that, just because one 
policy is rate-function delay-optimal and another one is throughput-optimal, 
their combinations will necessarily yield the right hybrid policy that is 
optimal in terms of both throughput and delay. As we will discuss further 
at the beginning of Section~\ref{sec:hybrid}, our carefully chosen sufficient 
conditions possess some special features that allow us to construct low-complexity 
hybrid policies that are both rate-function delay-optimal and throughput-optimal.

\subsection{Rate-function Delay Optimality} \label{subsec:rfdo}
We start by presenting the main result of this section in the following 
theorem, which provides a sufficient condition for scheduling policies 
to be rate-function delay-optimal.

\begin{theorem}
\label{thm:suff-rf}
Under Assumptions~\ref{ass:arr_bound} and \ref{ass:arr_ld}, a scheduling 
policy $\mathbf{P}$ is rate-function delay-optimal if in any time-slot, 
policy $\mathbf{P}$ can serve the $k$ oldest packets in that time-slot 
for the largest possible value of $k \in \{1,2,\dots,n\}$.
\end{theorem}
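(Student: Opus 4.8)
By Theorem~\ref{thm:ub}, $I^*_0(b)$ is an upper bound on the rate-function of every scheduling policy, so it suffices to prove the matching lower bound for policy $\mathbf{P}$:
\[
\liminf_{n\to\infty}\frac{-1}{n}\log\Prob(W(0)>b)\ \ge\ I^*_0(b)\qquad\text{for every integer }b\ge0.
\]
The first step is to make the policy essentially irrelevant. In any slot $t$, the set of values $k$ for which the $k$ oldest packets can all be served simultaneously is a prefix $\{0,1,\dots,k^\star(t)\}$ of $\{0,1,\dots,n\}$ (deleting one packet from a feasible server assignment keeps it feasible), and $k^\star(t)$ depends only on the current queue contents and the channel matrix $(C_{i,j}(t))$, not on the policy's name. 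Hence policy $\mathbf{P}$ always serves (at least) the $k^\star(t)$ globally oldest packets, and since serving additional, necessarily younger, packets does not increase any future HOL delay, a sample-path monotonicity/coupling argument reduces the claim to the canonical policy $\mathbf{P}_0$ that serves \emph{exactly} the $k^\star(t)$ oldest packets in every slot. The trajectory of $\mathbf{P}_0$ is a deterministic function of the arrival and channel sample paths, which is what makes a large-deviations estimate possible.

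\textbf{The deterministic dichotomy.} The core of the proof is the implication: on the event $\{W(0)>b\}$ there is a packet $P^\star$ that is present at time $0$, that arrived no later than slot $-b-1$, and that was therefore never served during the (at least $b+1$) slots it was waiting. Since $\mathbf{P}_0$ serves precisely the $k^\star(\tau)$ oldest packets, ``$P^\star$ not served in slot $\tau$'' is equivalent to ``the packets present in slot $\tau$ that are at least as old as $P^\star$ cannot all be served simultaneously in that slot,'' and by Hall's theorem this exhibits, in each such slot $\tau$, a set of queues $T_\tau$ whose old packets strictly outnumber the servers that are ON to $T_\tau$. Propagating this deficiency backward through the waiting window, and accounting via flow conservation for the fact that the old-packet population in $T_\tau$ can shrink only through service (at most one packet per ON server) and grow only through arrivals, one shows that $\{W(0)>b\}$ forces, for some $c\in\{0,1,\dots,b\}$, the combination of: (i) $c$ slots (or $c+1$ slots in the boundary case) in which some user is disconnected from all $n$ servers, each such slot contributing $I_X$ to the exponent; and (ii) a sub-window of $t$ slots during which the total number of arrivals to the system is at least $n\big(t+(b-c)\big)$, contributing $I_A(t,b-c)$. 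The per-slot arrival cap of $Ln$ packets (Assumption~\ref{ass:arr_bound}) forces $t>t_{b-c}$ in the generic case and, at equality, the integral value $t=t_{b-c}$ that is singled out by $\Psi_b$; the pure term $(b+1)I_X$ corresponds to the extreme case where the blocking is entirely due to the queue holding the globally oldest surviving packet being disconnected from all servers throughout.

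\textbf{From events to the rate-function.} Each of the finitely many event types above is then controlled by a Chernoff/Cram\'er-type estimate. A slot in which a given user is OFF to all $n$ servers has probability $(1-q)^n=e^{-nI_X}$ by Assumption~\ref{ass:channel}, so $c$ such slots for a fixed user have probability $e^{-ncI_X}$, and the union over the at most $n$ choices of the disconnected user (or at most $\binom{n}{g}$ choices when a group of $g$ users is involved) adds only a term of order $\log n$ that vanishes after dividing by $n$. The arrival events are handled by Assumption~\ref{ass:arr_ld}, whose large-deviations bound allows for temporal correlation and gives, for any $\epsilon>0$ and $n$ large, $\Prob\big(A(-t+1,0)\ge n(t+(b-c))\big)\le e^{-n(I_A(t,b-c)-\epsilon)}$, consistently with Eq.~(\ref{eq:IAtx}); and since channels are independent of arrivals, the exponents of a product event add. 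Since the number of event types does not depend on $n$ ($c$ ranges over $\{0,\dots,b\}$ and, after optimizing, only finitely many window lengths and group sizes are relevant), a union bound preserves the exponential rate. Taking $n\to\infty$ and then $\epsilon\downarrow0$ yields $\liminf_n \frac{-1}{n}\log\Prob(W(0)>b)\ge I^*_0(b)$, which together with Theorem~\ref{thm:ub} shows that $\mathbf{P}$ is rate-function delay-optimal.

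\textbf{Where the difficulty lies.} The large-deviations bookkeeping of the last paragraph is routine. The real work is the deterministic dichotomy: one must show not merely that a long HOL delay forces ``bad arrivals or bad channels,'' but that it forces them in exactly the trade-off encoded by $I^*_0(b)$ — the precise exchange rate between one slot of full channel disconnection ($I_X$) and one extra unit of arrival overload ($I_A(\cdot,\cdot)$), with the correct constants and the integrality carve-out $\Psi_b$ coming from the tight case $t=t_{b-c}$. Because the set of packets at least as old as $P^\star$ shrinks as service proceeds while the Hall-deficient group $T_\tau$ may change from slot to slot, tracking this set over the whole window — and certifying that no event of strictly smaller rate-function could also produce the violation — is the delicate part of the argument.
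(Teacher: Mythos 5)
Your approach is genuinely different from the paper's and, as written, has a real gap precisely at the step you flag as ``delicate.''

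The paper does not try to carry out the backward Hall's-theorem bookkeeping at all. Instead, its proof hinges on Lemma~\ref{lem:dom}: any policy $\mathbf{P}$ satisfying the sufficient condition sample-path \emph{dominates both} a carefully defined version of the FBS policy (for every choice of its frame parameter $h$) and a carefully defined version of the perfect-matching policy. This converts the event $\{W(0)>b\}$ under $\mathbf{P}$ into a subset of $\mE^F_t\cap\mE^{PM}_t$, where each auxiliary policy has an already-tractable per-slot ``service failure'' estimate (Lemma~6 of \cite{sharma11b} for FBS, Lemma~1 of \cite{bodas09} for perfect matching). The proof then splits on the regeneration time $L(-b)$: for $t\le t^*$ the FBS tail bound yields the generic $I_A(t,b-c)+cI_X$ term, and the perfect-matching tail bound is needed specifically for the boundary values $t=t_{b-c}$ with $c\in\Psi_b$, where FBS alone does not give the required $I_A(t_{b-c},b-c)+(c+1)I_X$ exponent; for $t>t^*$ a SLLN/LDP argument on the arrival burstiness (Assumption~\ref{ass:arr_ld}) shows the tail of the sum is exponentially negligible. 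In other words, the whole ``what must have gone wrong'' combinatorics that you propose to derive from scratch via Hall's theorem and flow conservation is, in the paper, \emph{imported} from the structure of two auxiliary policies whose throughput in each slot is easy to bound.

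Concretely, the gap in your proposal is the ``deterministic dichotomy'' paragraph. You assert that tracing back the oldest surviving packet $P^\star$ and applying Hall's theorem plus flow conservation forces precisely the trade-off encoded in $I^*_0(b)$, including the correct exchange rate between disconnected slots and arrival overload and the carve-out at $t=t_{b-c}$, but you do not give an argument for why no lower-rate-function route to $\{W(0)>b\}$ exists, and you explicitly acknowledge that ``tracking this set over the whole window\,\dots\,is the delicate part of the argument.'' That step is exactly where the proof lives. Two particular obstacles you would have to resolve: (i) the Hall-deficient set $T_\tau$ changes from slot to slot and the population of packets at least as old as $P^\star$ shrinks with service, so flow conservation gives inequalities whose slack must be shown to cost at least $I_X$ per slot of ``disconnection''; (ii) the boundary case $t=t_{b-c}$ with $c\in\Psi_b$ needs the extra $+I_X$ (one more blocked slot), and it is not clear from the sketch how the backward propagation singles that out. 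The paper sidesteps both by the dominance lemma. Your reduction to the ``canonical'' $\mathbf{P}_0$ is plausible (and is essentially a special case of the paper's domination argument, once ties in age are broken consistently, which the paper does via the explicit weight $\hat{w}$), but even if granted it does not by itself make the dichotomy easier; it merely pins down the trajectory. As it stands the proposal reproduces the statement of what must be proved rather than a proof of it.
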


To prove Theorem~\ref{thm:suff-rf}, we will exploit a dominance property 
(Lemma~\ref{lem:dom}) of the policies that satisfy the above sufficient 
condition. We provide the proof of Theorem~\ref{thm:suff-rf} in 
Appendix~\ref{app:thm:suff-rf}, and give the intuition behind it as follows. 
First, it is easy to see that the First-come First-serve (FCFS) policy, which serves 
the \emph{oldest packets first}, is (sample-path) delay-optimal in a single-queue 
single-server system. Also, it is not hard to see that for a multi-queue multi-server 
system with \emph{full connectivity}, where all pairs of queue and server are connected, 
a policy that chooses to serve the oldest packets (over the whole system) first is 
delay-optimal. These motivate us to ask a natural and interesting question: \emph{if 
a policy chooses to serve the oldest packets first in a multi-queue multi-server 
system with time-varying and partial connectivity (as we consider in this paper), 
does it achieve rate-function delay optimality?} Note that in such a system, at most 
$n$ packets can be served in each time-slot. Hence, if in each time-slot a policy can 
serve all the $n$ oldest packets in the system (as in the case with full connectivity), 
this policy should yield optimal delay performance. However, due to the random 
connectivity between queues and servers, no policy may be able to do so. Hence, 
we propose a class of policies that choose to serve the $k$ oldest packets for 
the largest possible value of $k$. In other words, for any $k \in \{1,2,\dots,n\}$, 
if the $k$ oldest packets can be served by some scheduling policy, then our proposed 
policies will serve these $k$ packets too.

A similar, but less thorough, analysis was also carried out in 
\cite{sharma11,sharma11b}. There, the authors proposed the 
\textbf{Frame Based Scheduling (FBS)} policy, which aims to serve the oldest 
packets in each time-slot and can be viewed as an approximation of FCFS policy. 
The FBS policy serves packets in units of frames. With a given positive 
integer $h$ as the operating parameter, each frame is constructed such that: 
1) the difference of the arrival times of any two packets within a frame must 
be no greater than $h$; and 2) the total number of packets in each frame is 
no greater than $n_0=n-Lh$. In each time-slot, the packets arrived at the 
beginning of this time-slot are filled into the last frame until any of the 
above two conditions are violated, in which case a new frame will be opened. 
In any time-slot, the FBS policy serves the HOL frame that contains the oldest 
(up to $n_0$) packets with high probability for large $n$. As discussed at the
beginning of this section, it has been shown that the FBS policy with a properly 
chosen operating parameter $h$ is rate-function delay-optimal in some cases.

However, FBS may \emph{not} be rate-function delay-optimal in some other cases. 
Specifically, consider \emph{i.i.d.} Bernoulli arrivals with $L=1$. As pointed 
out in \cite{sharma11b}, the rate-function attained by the FBS policy is not
optimal in this scenario. We provide the intuition as follows. Suppose there are 
a total of $nt$ packet arrivals to the system in an interval of $t$ time-slots. 
It is easy to see that FBS needs at least $t+1$ time-slots to completely serve 
these packets since at most $n-Lh$ packets can be served by FBS in one time-slot. 
This could lead to a sub-optimal rate-function. To see this, consider the 
\textbf{perfect-matching} policy defined as follows. Let $\mathcal{Q}$ and 
$\mathcal{S}$ denote the set of queues and set of servers, respectively. In a 
time-slot $\tau$, let $\mathcal{C} \triangleq \{C_{i,j}(\tau): C_{i,j}(\tau)=1 \}$ 
denote the set of edges between $\mathcal{Q}$ and $\mathcal{S}$. Clearly, 
$G[\mathcal{Q} \cup \mathcal{S}, \mathcal{C}]$ forms a bipartite graph. If a 
perfect matching can be found in the bipartite graph $G[\mathcal{Q} \cup \mathcal{S}, 
\mathcal{C}]$, then the servers are allocated to serve the oldest packets in the
respective queues as determined by the perfect matching. Otherwise, none of servers 
will be allocated to the queues. It has been shown in \cite{bodas09} that in each 
time-slot, a perfect matching can be found with high probability for large $n$. 
Hence, in the case described above, the perfect-matching policy needs only $t$ 
time-slots to drain all these $nt$ packets with high probability for large $n$, 
while FBS is sub-optimal.



On the other hand, the perfect-matching policy does not perform well in many other 
cases due to the fact that it cannot serve more than one packet from each queue in 
a time-slot. For example, consider the case where there are $L$ packets existing in 
$Q_1$ and the other queues are all empty. FBS can drain these packets within one 
time-slot with high probability, yet the perfect-matching policy needs at least $L$ 
time-slots.

The above discussions suggest that if we can find a policy that dominates both the 
FBS policy and the perfect-matching policy, there is a hope that this policy may be 
able to achieve the optimal rate-function in general. We will show in Lemma~\ref{lem:dom} 
that a policy that satisfies the sufficient condition in Theorem~\ref{thm:suff-rf}, 
indeed dominates both the FBS policy and the perfect-matching policy in a sample-path 
sense. 

In order to state the dominance property of Lemma~\ref{lem:dom} below, we consider 
the following versions of the FBS policy and the perfect-matching policy. Suppose 
that packet $p$ is the $x_p$-th arrival to the queue $Q_{q(p)}$ in time-slot $t_p$. 
Then, we define the weight of the packet $p$ in time-slot $t$ as $\hat{w}(p) = 
t - t_p + \frac {L+1-x_p} {L+1} + \frac {n+1-q(p)} {(L+1) (n+1)} $. For two 
packets $p_1$ and $p_2$, we say $p_1$ is older than $p_2$ if $\hat{w}(p_1) > 
\hat{w} (p_2)$. The above way of defining the weight ensures that among the 
packets that arrive at the same time, the priority is given to the packet that 
has an earlier order of arrival in each queue; and further, among the packets
(in different queues) with the same order of arrival, the priority is given to 
the packet that arrives to the queue with a smaller index.
For the FBS policy, we assume that the packets with a larger weight are filled 
to the frame with a higher priority when there are multiple packets arriving at 
the same time. While for the perfect-matching policy, we require that in time-slot
$t$, the perfect-matching policy only serves packets with the largest value of 
$t - t_p + \frac {L+1-x_p} {L+1}$. Under this version of the perfect-matching
policy, it is possible that a queue may not have any of its packets served even 
if a perfect-matching is found and a server is allocated to the queue. It should
be noted that the above versions of the FBS policy and the perfect-matching policy 
are used for analysis only. Next, we present the dominance property in the following 
lemma.

\begin{lemma}
\label{lem:dom}
Consider the versions of the FBS policy and the perfect-matching policy 
described above. Suppose that policy $\mathbf{P}$ satisfies the sufficient 
condition in Theorem~\ref{thm:suff-rf}. Then, for any given sample path, 
by the end of any time-slot $t$, policy $\mathbf{P}$ has served every 
packet that the FBS policy or the perfect-matching policy has served.
\end{lemma}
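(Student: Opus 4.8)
My plan is to prove the lemma by induction on the time-slot $t$, comparing the sets of packets that have departed under each of the three policies. For $\pi \in \{\mathbf{P},\text{FBS},\text{PM}\}$, let $D_\pi(t)$ be the set of packets that $\pi$ has served by the end of slot $t$, and let $R_\pi(t)$ be the set of packets present in $\pi$'s queues at the start of slot $t$ (after arrivals); since all three policies run on the same arrival and channel sample path, $R_\pi(t)$ equals the common set of packets that have arrived by slot $t$ minus $D_\pi(t-1)$. The lemma asserts $D_{\mathbf{P}}(t)\supseteq D_{\text{FBS}}(t)$ and $D_{\mathbf{P}}(t)\supseteq D_{\text{PM}}(t)$ for every $t$, which is equivalent to the invariant $R_{\mathbf{P}}(t)\subseteq R_{\text{FBS}}(t)$ and $R_{\mathbf{P}}(t)\subseteq R_{\text{PM}}(t)$; the base case is immediate since all queues start empty. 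I will use two elementary facts throughout: (i) the weight $\hat{w}(\cdot)$ induces a strict total order on packets whose restriction to any two commonly-present packets does not change with $t$ (the difference $\hat{w}(p_1)-\hat{w}(p_2)$ is independent of the slot), so ``the $k$ oldest present packets'' is unambiguous and time-consistent; and (ii) any subset of a set of packets that can be simultaneously served in a slot (i.e., admits an assignment of distinct connected servers) is itself simultaneously serveable.

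The crux is a structural property of the two reference policies: \emph{in every slot $t$, the set $M_\pi(t)$ served by $\pi\in\{\text{FBS},\text{PM}\}$ is serveable and is an up-set of $R_\pi(t)$ in the $\hat{w}$-order} (if $p\in M_\pi(t)$, $p'\in R_\pi(t)$ and $\hat{w}(p')\ge\hat{w}(p)$, then $p'\in M_\pi(t)$). For FBS this follows from its definition together with the stated frame-filling priority: packets are assembled into frames in decreasing $\hat{w}$-order and a frame is closed only when adding the next packet would violate the size bound $n_0$ or the span bound $h$, so at the start of each slot the head-of-line frame is exactly the maximal $\hat{w}$-prefix of $R_{\text{FBS}}(t)$ respecting these two bounds — hence an up-set — and FBS serves either all of it (when a saturating server assignment exists) or nothing. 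For the modified perfect-matching policy, I will check that all packets attaining the maximal value of $t-t_p+\frac{L+1-x_p}{L+1}$ share the same arrival slot $t_p$ and the same within-queue index $x_p$ (the integer part of that quantity equals $t-t_p$ and its fractional part determines $x_p$), hence there is at most one per queue, each is the head-of-line packet of its non-empty queue, and whenever a perfect matching of all queues to all servers exists every such packet gets served; so $M_{\text{PM}}(t)$ is precisely this set, which is an up-set of $R_{\text{PM}}(t)$ because any present packet at least as old as one of them also attains the maximal value, and when no perfect matching exists $M_{\text{PM}}(t)=\emptyset$. In both cases $M_\pi(t)$ is serveable because the policy actually exhibits a server assignment for it.

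Given this, the inductive step for $\pi=\text{FBS}$ (that for $\pi=\text{PM}$ is identical) proceeds as follows. Assume $R_{\mathbf{P}}(t)\subseteq R_{\text{FBS}}(t)$, and let $F':=M_{\text{FBS}}(t)\cap R_{\mathbf{P}}(t)$ be the packets FBS serves in slot $t$ that $\mathbf{P}$ has not already served. Fix any $p\in F'$. By the induction hypothesis, every $p'\in R_{\mathbf{P}}(t)$ with $\hat{w}(p')\ge\hat{w}(p)$ also lies in $R_{\text{FBS}}(t)$, so the up-set property gives $p'\in M_{\text{FBS}}(t)$ and hence $p'\in F'$; thus $F'$ contains every packet of $R_{\mathbf{P}}(t)$ at least as old as $p$, which by fact (i) is exactly the set of the $m$ oldest packets of $R_{\mathbf{P}}(t)$, where $m:=|\{p'\in R_{\mathbf{P}}(t):\hat{w}(p')\ge\hat{w}(p)\}|$ and $p$ is its $m$-th oldest element. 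Since $F'\subseteq M_{\text{FBS}}(t)$ is serveable, so is this prefix of the $m$ oldest packets by fact (ii); therefore the largest $k$ for which $\mathbf{P}$ can serve the $k$ oldest packets in slot $t$ is at least $m$, so $\mathbf{P}$'s service in slot $t$ includes the $m$ oldest packets of $R_{\mathbf{P}}(t)$ and in particular $p$. As $p\in F'$ was arbitrary, $\mathbf{P}$ serves all of $F'$ in slot $t$ — and, by the identical argument, all of the corresponding set for PM, both being contained in the single $\hat{w}$-prefix of $R_{\mathbf{P}}(t)$ that $\mathbf{P}$ serves. Hence $D_{\mathbf{P}}(t)\supseteq D_{\mathbf{P}}(t-1)\cup F'\supseteq D_{\text{FBS}}(t-1)\cup M_{\text{FBS}}(t)=D_{\text{FBS}}(t)$, where the middle inclusion uses $M_{\text{FBS}}(t)\setminus R_{\mathbf{P}}(t)\subseteq R_{\text{FBS}}(t)\setminus R_{\mathbf{P}}(t)\subseteq D_{\mathbf{P}}(t-1)$ and the hypothesis $D_{\text{FBS}}(t-1)\subseteq D_{\mathbf{P}}(t-1)$. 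This closes the induction.

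I expect the main obstacle to be establishing the structural property cleanly from the \emph{modified} definitions of FBS and the perfect-matching policy: in particular, arguing that FBS's head-of-line frame is always the maximal $\hat{w}$-prefix meeting the size and span constraints even though frames are built incrementally and consumed only from the front, and verifying that the tie-breaking term $\frac{L+1-x_p}{L+1}$ combined with the ``serve only the packets of maximal coarse age'' restriction forces $M_{\text{PM}}(t)$ to be an up-set. The remaining ingredients — time-invariance of the $\hat{w}$-order, the subset-serveability fact, and the observation that it suffices for $\mathbf{P}$ to serve \emph{at least} the required $\hat{w}$-prefix — are routine. I would also dispose of the boundary cases (fewer than $n$, resp. $n_0$, present packets; FBS or PM serving nothing in a slot) by reading ``the $k$ oldest packets'' as ``all present packets'' when fewer than $k$ are present, and by the all-or-nothing service convention for the reference policies.
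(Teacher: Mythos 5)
Your proof is correct and takes essentially the same approach as the paper's: both hinge on (a) the observation that the set served by FBS (the HOL frame) or by the modified perfect-matching policy is an ``up-set'' in the $\hat{w}$-order of the present packets, (b) the subset-serveability fact, and (c) applying the sufficient condition of Theorem~\ref{thm:suff-rf} to conclude that $\mathbf{P}$ serves the resulting $\hat{w}$-prefix. The paper phrases the induction as a first-violation-time contradiction while you run a direct forward induction, and you supply a more detailed check of the up-set property for the modified perfect-matching policy (which the paper dispatches as ``a similar argument''), but these are cosmetic differences. One small note: the maximality claim you flag as a possible obstacle — that the HOL frame is the \emph{maximal} prefix of $R_{\text{FBS}}(t)$ satisfying the size and span bounds — is in fact false in general (the HOL frame was closed in the past and may now be shorter than the constraints would currently allow), but it is also unnecessary: the up-set property alone, which does hold, is all your argument uses.
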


We prove Lemma~\ref{lem:dom} by contradiction, and provide the proof in 
Appendix~\ref{app:lem:dom}. Further, by using of this dominance property, 
and following a similar argument as in the rate-function delay analysis 
for FBS (Theorem~2 of \cite{sharma11b}), we prove Theorem~\ref{thm:suff-rf}. 
Specifically, we consider all the sample paths that lead to the delay-violation
event. There are different ways that the delay-violation event can occur, 
each of which has a corresponding rate-function for its probability of 
occurring. Large-deviations theory then tells us that the rate-function 
for delay violation is determined by the smallest rate-function among these 
possibilities (i.e., ``rare events occur in the most likely way".) An outline
of the proof for Theorem~\ref{thm:suff-rf} is provided in Appendix~\ref{app:thm:suff-rf}.

Next, we define a class of OPF policies as follows.

\begin{definition}
\label{def:rof}
A scheduling policy $\mathbf{P}$ is said to be in the class of \textbf{Oldest 
Packets First (OPF)} policies if policy $\mathbf{P}$ satisfies the sufficient 
condition in Theorem~\ref{thm:suff-rf}.
\end{definition}

Clearly, the class of OPF policies are all rate-function delay-optimal. We would
like to emphasize that the sufficient condition in Theorem~\ref{thm:suff-rf} is 
very easy to verify and can be readily used to design other rate-function delay-optimal 
policies. Specifically, Theorem~\ref{thm:suff-rf} enables us to identify a new 
rate-function delay-optimal policy, called the \textbf{DWM-$n$} policy, which
substantially reduces the complexity to $O(n^{2.5} \log n)$. This in turn allows 
us to design \emph{low-complexity} hybrid scheduling policies that are both 
\emph{throughput-optimal} and \emph{rate-function delay-optimal} (in Section~\ref{sec:hybrid}).

Now, we review the \textbf{Delay Weighted Matching (DWM)} policy proposed in 
\cite{sharma11,sharma11b}. DWM operates in the following way. In each time-slot 
$t$, define the weight of the $l$-th packet of $Q_i$ as $Z_{i,l}(t)$, i.e., the 
delay of this packet at the beginning of time-slot $t$, which is measured since 
the time when this packet arrived to queue $Q_i$ until time-slot $t$. Then, 
construct a bipartite graph $G[X \cup Y,E]$ such that the vertices in $X$ 
correspond to the $n$ oldest packets from each of the $n$ queues and $Y$ is the 
set of all servers. Thus, $|X|=n^2$ and $|Y|=n$. Let $X_i \subseteq X$ be the set 
of packets from queue $Q_i$. If queue $Q_i$ is connected to server $S_j$, then for 
each packet $x \in X_i$, there is an edge between $x$ and $S_j$ in graph $G$ and 
the weight of this edge is set to the weight of packet $x$. The schedule is then 
determined by a maximum-weight matching over $G$. Clearly, DWM maximizes the sum 
of the delays of the packets scheduled.

It has been shown in \cite{sharma11,sharma11b} that the DWM policy is 
rate-function delay-optimal in some cases. However, it is unclear whether 
it is delay-optimal in general. \emph{We would like to highlight that our
proposed sufficient condition in Theorem~\ref{thm:suff-rf} allows us to 
show that a slightly modified version of the DWM policy is rate-function 
delay-optimal in general (under an additional mild assumption - the second 
part of Assumption~\ref{ass:arr_bound}).} Specifically, in the modified 
version of the DWM policy, we assign the weight of a packet $p$ as $\hat{w}(p)$ 
instead of its delay only. Then, by simply duplicating the proof of Lemma~7 
in \cite{sharma11b}, we can show that the modified version of the DWM policy 
is an OPF policy and is thus rate-function delay-optimal. 

However, the DWM policy still suffers from a high complexity, which renders 
it impractical. Specifically, DWM has a complexity of $O(n^5)$, since the 
complexity of finding a maximum-weight matching \cite{fredman87} over a 
bipartite graph $G[V,E]$ is $O(|V| |E| + |V|^2 \log |V|)$ in general, and 
the bipartite graph constructed by DWM has $|V|=O(n^2)$ and $|E|=O(n^3)$. 

To overcome the high-complexity issue, we develop a simpler policy that is 
also in the class of the OPF policies (and is thus rate-function delay-optimal), 
but has a much lower complexity of $O(n^{2.5} \log n)$. The new policy is 
called the \textbf{DWM-$n$} policy due to the high-level similarity with DWM. 
However, it exhibits critical differences when picking packets to construct the 
bipartite graph $G[X \cup Y, E]$ and finding the maximum-weight matching over 
$G$. The differences are as follows:
\begin{enumerate}
\item In each time-slot, instead of considering the $n$ oldest packets 
from each queue (and thus $n^2$ packets in total) as in DWM, DWM-$n$ 
considers only the $n$ oldest packets in the whole system. Hence, the 
bipartite graph constructed by DWM-$n$ has $|X|=n$ and $|Y|=n$. 
\item The rest of the operations of DWM-$n$ are similar to that of DWM, 
i.e., the schedule is determined by a maximum-weight matching over $G$, 
except that DWM-$n$ finds a maximum-weight matching based on the 
\emph{vertex} weights. Such a maximum-weight matching is also called 
Maximum Vertex-weighted Matching (MVM) \cite{spencer84,gupta09}. 
Specifically, the weight of each vertex $p \in X$ is set to $\hat{w}(p)$ 
(i.e., the weight of the corresponding packet $p$), and the weight of 
each vertex in the set $Y$ is set to 0.
\end{enumerate}

In the following proposition, we show that the DWM-$n$ policy is 
rate-function delay-optimal and has a low complexity. 
\begin{proposition}
\label{pro:dwmn}
The DWM-$n$ policy is an OPF policy, and is thus rate-function delay-optimal
under Assumptions~\ref{ass:arr_bound} and \ref{ass:arr_ld}. Further, the 
DWM-$n$ policy has a low complexity of $O(n^{2.5} \log n)$.
\end{proposition}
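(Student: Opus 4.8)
The plan is to prove the two assertions separately: (i) that DWM-$n$ belongs to the class of OPF policies, which by Theorem~\ref{thm:suff-rf} and Definition~\ref{def:rof} already yields rate-function delay optimality under Assumptions~\ref{ass:arr_bound} and \ref{ass:arr_ld}; and (ii) that one invocation of DWM-$n$ costs $O(n^{2.5}\log n)$.

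For (i), fix a time-slot and let $G[X\cup Y,E]$ be the bipartite graph built by DWM-$n$: $X$ is the set of (at most) $n$ oldest packets in the system, each carrying vertex weight $\hat{w}(\cdot)$; $Y$ is the set of $n$ servers, each of weight $0$; and there is an edge between a packet $p\in X$ of queue $Q_{q(p)}$ and server $S_j$ iff $C_{q(p),j}=1$. Since $\hat{w}(\cdot)$ is a strict total order on packets that refines the ``older-than'' relation, for every $k\le n$ the $k$ oldest packets in the system are exactly the $k$ highest-$\hat{w}$ vertices of $X$, and they form a HOL prefix in each queue. The first step is an elementary bookkeeping identity: in a given time-slot, such a packet set can be served simultaneously by \emph{some} scheduling policy if and only if the corresponding vertices of $G$ are saturated by a matching of $G$ --- an allocation serves the set iff it gives each queue $Q_i$ at least as many distinct ON servers as it has packets in the set, which is precisely a matching saturating those vertices (the converse allocation is read off the matching). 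Hence, letting $k^{\ast}$ be the largest $k\in\{1,\dots,n\}$ for which the $k$ oldest packets can be served by some policy, the top-$k^{\ast}$ vertices of $X$ (all present in $X$ because $k^{\ast}\le n$) admit a matching of $G$ saturating them. Now I invoke the structural property of Maximum Vertex-weighted Matchings: the set of left-vertices saturated by an MVM is the maximum-weight independent set of the transversal matroid of $G$; equivalently, when the left-vertices are ordered by decreasing weight, the matroid greedy --- and hence the MVM --- saturates a maximum-cardinality subset of \emph{every} prefix (see \cite{spencer84,gupta09}, and the analogue of Lemma~7 of \cite{sharma11b}). Because the top-$k^{\ast}$ vertices form an independent set of highest weight, the greedy picks all of them, so the MVM computed by DWM-$n$ saturates them all; thus DWM-$n$ serves the $k^{\ast}$ oldest packets, satisfies the condition of Theorem~\ref{thm:suff-rf}, is an OPF policy by Definition~\ref{def:rof}, and is therefore rate-function delay-optimal under Assumptions~\ref{ass:arr_bound} and \ref{ass:arr_ld}.

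For (ii), the graph $G$ has $|X|+|Y|=O(n)$ vertices and $|E|=O(n^2)$ edges. Identifying the (at most) $n$ oldest packets and computing their weights $\hat{w}(\cdot)$ costs $O(n\log n)$, and listing $E$ costs $O(n^2)$. Since the saturated left-vertex set of an MVM depends only on the \emph{order} of the weights, we may replace the weights $\hat{w}(\cdot)$ by their integer ranks $1,\dots,|X|$; an MVM on a balanced bipartite graph with $O(n^2)$ edges and integer weights bounded by $n$ can be computed in $O(n^{2.5}\log n)$ time \cite{gupta09} (for instance, as $O(\log n)$ maximum-cardinality bipartite matchings, each via a Hopcroft--Karp computation in $O(\sqrt{n}\,n^2)$ time). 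Summing the three terms gives the claimed $O(n^{2.5}\log n)$ per time-slot.

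The main obstacle is part (i): it hinges on making the correspondence between feasible per-slot service of HOL-prefix packet sets and matchings in $G$ fully rigorous (being careful about wasted server allocations and about queues holding fewer packets than a policy would ideally use), and on quoting the exact ``prefix-maximality'' characterization of MVM from \cite{spencer84,gupta09}. One must also dispatch the degenerate case in which the system holds fewer than $n$ packets, where $X$ already contains all of them and the same argument applies with $n$ replaced by that count. The complexity bound in part (ii) is comparatively routine once the matching-algorithm result is cited.
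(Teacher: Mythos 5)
Your proof is correct and takes essentially the same route as the paper: both hinge on the MVM ``heaviest-vertices'' property (the paper cites Lemma~6 of \cite{spencer84} directly, while you re-derive the same fact through the transversal-matroid/greedy characterization and make the ``servable by some policy $\iff$ saturable by a matching of $G$'' correspondence explicit, a step the paper leaves implicit), and both cite \cite{spencer84,gupta09} for the $O(n^{2.5}\log n)$ MVM computation. Your $O(n\log n)$ heap-merge selection of the $n$ oldest packets slightly tightens the paper's $O(n^2\log n)$ sort of $n^2$ candidates, but this difference is immaterial since the MVM step dominates.
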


We provide the proof in Appendix~\ref{app:pro:dwmn}. The fact that the DWM-$n$ 
policy is an OPF policy follows from a property of MVM \cite{spencer84} that 
if there exists a matching that matches all of the $k$ heaviest vertices, then 
any MVM matches all of the $k$ heaviest vertices as well. The low complexity 
of DWM-$n$ follows immediately from the fact that DWM-$n$ reduces the number 
of packets under consideration ($n$ packets in total), and that an MVM in an 
$n \times n$ bipartite graph can be found in $O(n^{2.5} \log n)$ time 
\cite{spencer84}. Note that even if the DWM policy adopts MVM when determining 
the schedule, its complexity can only be reduced to $O(n^4 \log n)$. 

Although the DWM-$n$ policy achieves rate-function delay optimality with a 
low complexity, it may not be throughput-optimal in general. This is because 
the DWM-$n$ policy considers only the $n$ oldest packets in the system. It is 
likely that certain servers may not be connected to any of the queues that 
contain these $n$ packets, which results in the server being idle and is thus 
a waste of service. Hence, DWM-$n$ is a lazy policy. In fact, we can construct 
a simple counter-example to show that the DWM-$n$ policy is, in general, not 
throughput-optimal as stated in Proposition~\ref{pro:unstable}.

\begin{proposition}
\label{pro:unstable}
The DWM-$n$ policy is not throughput-optimal in general.
\end{proposition}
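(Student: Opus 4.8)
The plan is to exhibit a concrete small system (a fixed finite $n$) and an arrival rate vector strictly inside the optimal throughput region $\Lambda^*$ for which DWM-$n$ fails to stabilize the queues. The intuition, as already noted in the text preceding the statement, is that DWM-$n$ is a ``lazy'' policy: it builds its bipartite graph using only the $n$ globally oldest packets, so if those $n$ packets happen to be concentrated in a few queues whose servers are all OFF, every server sits idle even though other nonempty queues could have been served. I would make this wastefulness quantitative by choosing a regime where, with constant probability each slot, the $n$ oldest packets reside entirely in one queue (say $Q_1$) while $Q_1$'s connectivity is poor, so that a positive fraction of service opportunities is systematically thrown away.

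Concretely, I would take the smallest illustrative case $n=2$ (two queues, two servers, each link ON with probability $q$, unit capacity, so at most $L$ fixed with e.g.\ $L=1$). Choose a rate vector $\lambda=(\lambda_1,\lambda_2)$ with $\lambda_1$ close to but below the single-queue stability bound and $\lambda_2$ small but positive, arranged so that $\lambda$ is strictly interior to $\Lambda^*$ — this can be certified either directly or by appealing to the characterization of $\Lambda^*$ in Appendix~\ref{app:otr} and exhibiting a stabilizing policy (e.g.\ a MaxWeight / perfect-matching-type policy). Then I would analyze the Markov chain induced by DWM-$n$. The key structural observation: whenever $Q_1$ is nonempty and sufficiently backlogged relative to $Q_2$, the two oldest packets in the system both belong to $Q_1$; since a queue can receive service from multiple servers, DWM-$n$'s vertex set $X$ then contains only packets of $Q_1$, so if both of $Q_1$'s channels are OFF (probability $(1-q)^2>0$) no packet is served at all that slot, and if exactly one is ON only one packet of $Q_1$ is served — in either case $Q_2$ is never served even though it may have a usable channel. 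I would use this to bound the long-run service rate delivered to the system below $\lambda_1+\lambda_2$, e.g.\ by a Foster–Lyapunov / drift argument on $Q_1+Q_2$ (or on $Q_1$ alone once $Q_2$'s starvation is established), or more simply by a sample-path argument: conditioned on $Q_1$ staying ahead, the total departure rate is at most the rate at which DWM-$n$ serves $Q_1$, which is strictly less than $\lambda_1+\lambda_2$ for the chosen parameters, forcing transience.

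The main obstacle I anticipate is closing the feedback loop cleanly: I need $Q_1$ to remain ``ahead'' of $Q_2$ (so that the two oldest packets stay in $Q_1$) with high enough probability for the instability argument to go through, yet $Q_1$ being ahead is itself a consequence of $Q_1$ being underserved. I would handle this by picking $\lambda_1$ large enough that, even granting $Q_1$ the \emph{maximum} service DWM-$n$ could ever give it (roughly rate $1-(1-q)^2$ per slot when it monopolizes both servers, minus whatever leaks to $Q_2$ in the rare slots $Q_2$ is the oldest), the drift of $Q_1$ is still strictly positive; then $Q_1\to\infty$ and stays dominant, and $Q_2$ receives service only on a vanishing fraction of slots, so a slightly positive $\lambda_2$ makes $Q_2$ transient as well. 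A secondary (minor) obstacle is verifying membership $\lambda\in\Lambda^*$; this is routine since in a fully-flexible $2\times 2$ ON/OFF system the interior of $\Lambda^*$ contains all $\lambda$ with $\lambda_1+\lambda_2 < 2q - q^2$ and each $\lambda_i < q$ (served, e.g., by a randomized max-weight matching), and our chosen point will be arranged to satisfy these strict inequalities with room to spare. I would present the whole argument as a short explicit counterexample rather than a general construction, deferring the details to an appendix.
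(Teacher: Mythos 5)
There is a genuine gap in your proposal, and it stems from two related errors.

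First, you estimate DWM-$n$'s maximum service rate to $Q_1$ (when $Q_1$ monopolizes the bipartite graph) as roughly $1-(1-q)^2 = 2q-q^2$, but this is wrong. Recall the model allows a queue to be served by multiple servers in the same slot; when both of the two oldest packets lie in $Q_1$, both servers, if ON for $Q_1$, can each serve one of those packets. The expected drain rate of $Q_1$ in that regime is therefore $2q$, not $2q-q^2$. This matters because the feasibility region requires $\lambda_1 \le 2q$, so once you also require $\lambda_1 > 2q$ to get positive drift for $Q_1$, you have contradicted yourself. With $\lambda_1 < 2q$, $Q_1$ drains whenever it dominates, so your feedback-loop argument (``$Q_1$ stays ahead, hence $Q_2$ is starved, hence $Q_1$ stays ahead'') does not close — $Q_1$ has negative drift precisely in the regime you need it to have positive drift. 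You flagged this as the ``main obstacle I anticipate,'' but the proposed fix (pick $\lambda_1$ large enough) is exactly what feasibility forbids.

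Second, and more fundamentally, the $L=1$ Bernoulli setting you propose lacks the mechanism the paper exploits. The paper's construction keeps $K\ge 4$ packets arriving to $Q_1$ in a single slot (so $L=K>1$), alternating with a burst of $K$ to $Q_2$ in the next slot. Because all $K$ packets of a burst share the same arrival time, the $n=2$ oldest packets in the system can remain inside $Q_1$ for many slots even while $Q_2$ is nonempty; DWM-$n$ then leaves $Q_2$'s servers idle through most of the drain of $Q_1$'s burst, and symmetrically for $Q_2$'s burst. The paper then bounds the long-run throughput by $\frac{2qK}{K-2}$ and exhibits parameters ($p=17/96$, $q=1/2$, $K=8$) that are feasible ($pK \le 4q-2q^2$) yet exceed this ceiling ($pK > \frac{2qK}{K-2}$). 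With $L=1$, packets interleave across queues slot by slot, so the two oldest packets are generically split across $Q_1$ and $Q_2$ and no structural starvation occurs; indeed in that case it is plausible DWM-$n$ is stable. The burst structure is not an optional convenience but the essential ingredient, and your proposal does not reproduce it.
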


We prove Proposition~\ref{pro:unstable} by constructing a special arrival 
pattern that forces certain servers to be idle, even when they can serve 
some of the queues. We provide the proof in Appendix~\ref{app:pro:unstable}. 
Proposition~\ref{pro:unstable} suggests that a rate-function delay-optimal 
policy may not have good throughput performance (for a fixed $n$). This may 
appear counter-intuitive at the first glance. However, it should be noted 
that the rate-function delay optimality is studied in the asymptotic regime, 
i.e., when $n$ goes to infinity. Although the convergence of the rate-function 
is typically fast, the throughput performance may be poor for small to moderate 
values of $n$. Our simulation results (Fig.~\ref{fig:delay} in Section~\ref{sec:sim}) 
will provide further evidence of this.

\subsection{Throughput Optimality} \label{subsec:to}
In this section, we present a sufficient condition for throughput optimality 
in very general non-asymptotic settings. 

Recall that $Q_i(t)$ denotes the length of queue $Q_i$ at the beginning
of time-slot $t$ immediately after packet arrivals, $Z_{i,l}(t)$ denotes 
the delay of the $l$-th packet of $Q_i$ at the beginning of time-slot $t$, 
$W_i(t) = Z_{i,1}(t)$ denotes the HOL packet delay of $Q_i$ at the beginning 
of time-slot $t$, and $C_{i,j}(t)$ denotes the connectivity between $Q_i$ 
and $S_j$ in time-slot $t$. Let $\mS_j(t)$ denote the set of queues being 
connected to server $S_j$ in time-slot $t$, i.e., $\mS_j(t) = \{1 \le i \le 
n ~|~ C_{i,j}(t)=1 \}$, and let $\Gamma_j(t)$ denote the subset of queues 
in $\mS_j(t)$ that have the largest weight in time-slot $t$, i.e., $\Gamma_j(t) 
\triangleq \{i \in \mS_j(t) ~|~ W_i(t) = \max_{l \in \mS_j(t)} W_l(t) \}$. 
We now present the main result of this section. 

\begin{theorem}
\label{thm:suff-to}
Let $i(j,t)$ be the index of the queue that is served by server $S_j$ in 
time-slot $t$, under a scheduling policy $\mathbf{P}$. Under 
Assumption~\ref{ass:arr_slln}, policy $\mathbf{P}$ is throughput-optimal
if there exists a constant $M>0$ such that, in any time-slot $t$ and for 
all $j \in \{1,2,\dots,n \}$, queue $Q_{i(j,t)}$ satisfies that 
$W_{i(j,t)}(t) \ge Z_{r,M}(t)$ for all $r \in \Gamma_j(t)$ such that 
$Q_r(t) \ge M$.
\end{theorem}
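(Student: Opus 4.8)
The plan is to establish throughput optimality by the fluid-limit technique, following the framework of \cite{andrews04}. First I would describe the system by a Markov chain whose coordinates record the queue lengths $\vec{Q}(t)$, enough of the arrival history to recover every packet delay (equivalently the vectors $Z_{i,\cdot}(t)$), and the states of the $n$ arrival Markov chains. Introducing the usual fluid scaling --- speeding time by a factor $r$ and contracting space by $1/r$ --- I would invoke the standard tightness and convergence arguments to extract, along subsequences, fluid limits $(\vec{q}(\cdot),\vec{w}(\cdot))$ that are Lipschitz, hence absolutely continuous. The structural facts I need from the FIFO discipline are: (i) whenever $q_i(t)>0$, the fluid presently in queue $i$ is exactly the fluid that arrived during the last $w_i(t)$ units of time, so $q_i(t)=\lambda_i w_i(t)$ and, for almost every such $t$, $\dot{w}_i(t)=1-\mu_i(t)/\lambda_i$, where $\mu_i(t)$ is the fluid departure rate of queue $i$; and (ii) whenever $q_i(t)=0$ we also have $w_i(t)=0$, since a head-of-line delay of order $r$ would, through the absence of departures over the corresponding window, force a backlog of order $r$, contradicting $q_i(t)=0$. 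Queues with $\lambda_i=0$ are stable in isolation and can be discarded, so I take $\lambda_i>0$ for the rest.

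The second step converts the hypothesis of the theorem into a fluid-model ``maximum weight in the fluid limit'' property. Because the channels are i.i.d.\ across slots, the fluid departure rates are time-averages of the per-slot service, so it is convenient to let $\mathcal{S}_j$ denote a random connected set in which each queue is included independently with probability $q$. Fix a fluid time $t$ with $\vec{w}(t)\ne 0$. For a connectivity configuration with connected set $\mathcal{S}_j$, the set $\Gamma_j$ of connected queues of largest head-of-line delay is $\arg\max_{l\in\mathcal{S}_j} w_l$. If some $r\in\Gamma_j$ has $w_r>0$ then $q_r>0$, hence eventually $Q_r(rt)\ge M$, so the hypothesis applies and yields $W_{i(j,t)}(rt)\ge Z_{r,M}(rt)$; since the gap $W_r-Z_{r,M}$ spans only $M$ packets, the arrival SLLN gives $Z_{r,M}(rt)/r\to w_r(t)$, so in the limit $w_{i(j,t)}\ge w_r=\max_{l\in\mathcal{S}_j} w_l$, while the reverse inequality is trivial because $i(j,t)\in\mathcal{S}_j$. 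If instead every $r\in\Gamma_j$ has $w_r=0$, then $\max_{l\in\mathcal{S}_j}w_l=0$ and the identity $w_{i(j,t)}=\max_{l\in\mathcal{S}_j}w_l$ still holds. Summing the per-server contributions and averaging over configurations gives $\sum_i w_i(t)\,\mu_i(t)=n\,\Expect\big[\max_{l\in\mathcal{S}_j}w_l(t)\big]$; since this channel-averaged greedy value is attained by the delay-MaxWeight policy and upper-bounds $\sum_i w_i(t)\mu_i^*$ for every service-rate vector $\vec{\mu}^*$ realizable under this channel model, we obtain $\sum_i w_i(t)\mu_i(t)=\max_{\vec{\mu}^*\ \mathrm{feasible}}\sum_i w_i(t)\mu_i^*$.

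The third step is the MaxWeight Lyapunov argument on the fluid model with $V(\vec{w})=\sum_i \lambda_i w_i^2$. Using $\dot{w}_i=1-\mu_i/\lambda_i$ on $\{w_i>0\}$ (the corresponding term of $\dot V$ vanishing on $\{w_i=0\}$) one gets, for a.e.\ $t$, $\dot{V}(\vec{w}(t))=2\big(\sum_i w_i(t)\lambda_i-\sum_i w_i(t)\mu_i(t)\big)=2\big(\sum_i w_i(t)\lambda_i-\max_{\vec{\mu}^*}\sum_i w_i(t)\mu_i^*\big)$. Since $\vec{\lambda}$ lies strictly inside $\Lambda^*$, there is a feasible $\vec{\mu}^*$ with $\mu_i^*\ge(1+\epsilon)\lambda_i$ for all $i$ and some $\epsilon>0$ (this uses the characterization of $\Lambda^*$ in Appendix~\ref{app:otr}), so $\max_{\vec{\mu}^*}\sum_i w_i\mu_i^*\ge(1+\epsilon)\sum_i w_i\lambda_i$ and hence $\dot{V}(\vec{w}(t))\le-2\epsilon\sum_i\lambda_i w_i(t)\le-2\epsilon\sqrt{(\min_i\lambda_i)\,V(\vec{w}(t))}$. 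Thus $\sqrt{V(\vec{w}(\cdot))}$ decreases at a rate bounded away from $0$, so $V(\vec{w}(\cdot))$ --- and therefore $\max_i w_i(\cdot)$ --- reaches $0$ in finite time from any bounded initial fluid state. Stability of the fluid model then implies positive recurrence of the Markov chain by the fluid-limit stability criterion of \cite{andrews04}, which is precisely throughput optimality.

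The step I expect to be the main obstacle is making the second one rigorous: one must establish the FIFO delay--length identities and the departure-rate relation in the fluid limit under Markov-modulated (rather than i.i.d.-in-time) arrivals, and --- more delicately --- control the $M$-packet slack uniformly, i.e.\ show $Z_{r,M}(rt)/r\to w_r(t)$ and check that the ``$Q_r(t)\ge M$'' qualifier in the hypothesis never excludes a relevant queue in a way that breaks the limiting MaxWeight identity. A secondary technicality is justifying the channel-averaging step and the identification of $\max_{\vec{\mu}^*\ \mathrm{feasible}}\sum_i w_i\mu_i^*$ with the greedy per-server value, which rests on the structure of $\Lambda^*$ for the i.i.d.\ ON--OFF channel model.
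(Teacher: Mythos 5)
Your proposal follows essentially the same route as the paper's proof: a fluid-limit argument \`a la \cite{andrews04}, the Little's-law identity $q_i(t)=\lambda_i w_i(t)$, a quadratic Lyapunov function (your $\sum_i \lambda_i w_i^2$ equals twice the paper's $\tfrac{1}{2}\sum_i q_i^2/\lambda_i$ under that identity), the key observation that the $M$-packet slack $Z_{r,M}-W_r$ vanishes under fluid scaling by the SLLN so that policy $\mathbf{P}$ achieves delay-MaxWeight in the fluid limit, and the standard fluid stability criterion. One caution for the write-up: the theorem's $\Gamma_j(t)$ is the argmax of the \emph{discrete} HOL delays $W_l(t)$, not of the fluid limits $w_l(t)$; the paper handles this distinction by introducing both the discrete maximizer $r$ and the fluid maximizer $s$, using $W_r(\tau)\ge W_s(\tau)$ in the pre-limit inequality chain before letting the slack vanish.
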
 

We prove Theorem~\ref{thm:suff-to} using fluid limit techniques \cite{dai95,
andrews04}, and provide the proof in Appendix~\ref{app:thm:suff-to}. 
The condition in Theorem~\ref{thm:suff-to} means the following: In each time-slot, 
\emph{each server chooses to serve a queue with HOL packet delay no less than 
the delay of the $M$-th packet in the queue with the largest HOL delay (among 
the queues connected to the server)}; if this queue (with the largest HOL delay) 
has less than $M$ packets, then the server may choose to serve any queue. 

It is well-known that the MaxWeight Scheduling (MWS) policy \cite{tassiulas92,
lin06c,georgiadis06,andrews04,eryilmaz05,ji13c} that maximizes the weighted sum 
of the rates, where the weights are queue lengths or delays, is throughput-optimal 
in very general settings, including the multi-channel system that we consider 
in this paper. The intuition behind Theorem~\ref{thm:suff-to} is that to achieve 
throughput optimality in our multi-channel system, it is sufficient for each 
server to choose a connected queue with a large enough weight such that this 
queue has the largest weight in the fluid limit. This relaxes the condition that 
each server has to find a queue with the largest weight in the original system, 
and thus significantly expands the set of known throughput-optimal policies.

Next, we define the class of Maximum Weight in the Fluid limit (MWF) policies 
as follows.

\begin{definition}
A policy $\mathbf{P}$ is said to be in the class of \textbf{Maximum Weight 
in the Fluid limit (MWF)} policies if policy $\mathbf{P}$ satisfies the 
sufficient condition in Theorem~\ref{thm:suff-to}.
\end{definition}

Clearly, the class of MWF policies are all throughput-optimal. It is claimed 
in \cite{sharma11,sharma11b} that the DWM policy is throughput-optimal, yet 
the throughput optimality was not explicitly proved there. For completeness, 
we state the following proposition on throughput optimality of the DWM policy, 
and provide its proof in Appendix~\ref{app:pro:dwm-to}.

\begin{proposition}
\label{pro:dwm-to}
The DWM policy is an MWF policy, and is thus throughput-optimal under 
Assumption~\ref{ass:arr_slln}.
\end{proposition}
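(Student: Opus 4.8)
The plan is to show that DWM lies in the class of MWF policies, i.e., that it satisfies the sufficient condition of Theorem~\ref{thm:suff-to} with the constant $M=n$; throughput optimality of DWM is then immediate from Theorem~\ref{thm:suff-to} under Assumption~\ref{ass:arr_slln}. Recall that in time-slot $t$, DWM forms the bipartite graph $G[X\cup Y,E]$ whose left vertices $X$ are the (up to) $n$ oldest packets of each queue, so that for every queue $Q_r$ the packets with delays $Z_{r,1}(t),\dots,Z_{r,\min(Q_r(t),n)}(t)$ are all represented, and then fixes a maximum-weight matching $\mathcal{M}$ in which the edge from a packet $x$ of $Q_i$ to a connected server carries weight equal to the delay of $x$. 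Since every packet of $Q_i$ has delay at most $W_i(t)=Z_{i,1}(t)$, whatever packet a server $S_j$ transmits from the queue $Q_{i(j,t)}$ it serves has weight at most $W_{i(j,t)}(t)$; this elementary observation will drive the exchange argument below.

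First I would dispose of idle servers. Suppose $S_j$ transmits no packet under DWM and let $r\in\Gamma_j(t)\subseteq\mS_j(t)$ with $Q_r(t)\ge n$. The $n$ packets of $Q_r$ present in $X$ cannot all be matched, since that would use up all $n$ servers and in particular $S_j$; hence some packet of $Q_r$ in $X$, say the one with index $l\le n$, is unmatched in $\mathcal{M}$, and its delay satisfies $Z_{r,l}(t)\ge Z_{r,n}(t)$ because delays are non-increasing in the packet index. If $Z_{r,n}(t)$ were strictly positive, then adding to $\mathcal{M}$ the edge between this unmatched packet and the idle server $S_j$ (an edge that exists because $Q_r\in\mS_j(t)$) would strictly increase the matching weight, contradicting optimality of $\mathcal{M}$. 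Hence $Z_{r,n}(t)=0$ for every such $r$, so an idle $S_j$ imposes no real constraint: the inequality required in Theorem~\ref{thm:suff-to} has a vanishing right-hand side.

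Next comes the exchange argument for a server $S_j$ that does transmit a packet, from a queue $Q_i$ with $i=i(j,t)$. Suppose, for contradiction, that there is $r\in\Gamma_j(t)$ with $Q_r(t)\ge n$ and $W_i(t)<Z_{r,n}(t)$. Then $W_r(t)=Z_{r,1}(t)\ge Z_{r,n}(t)>W_i(t)$, so $r\ne i$; consequently $S_j$ is not assigned to $Q_r$, so at most $n-1$ of the $n$ servers serve $Q_r$, leaving at least one of the $n$ packets of $Q_r$ in $X$ unmatched, say the one with index $l\le n$. Then $Z_{r,l}(t)\ge Z_{r,n}(t)>W_i(t)$, and since $r\in\mS_j(t)$ the edge between this packet and $S_j$ exists with weight $Z_{r,l}(t)$. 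Replacing in $\mathcal{M}$ the edge currently incident to $S_j$ (whose weight is at most $W_i(t)$) by this edge produces a valid matching of strictly larger weight, contradicting the optimality of $\mathcal{M}$. Therefore $W_{i(j,t)}(t)=W_i(t)\ge Z_{r,n}(t)$ for all $r\in\Gamma_j(t)$ with $Q_r(t)\ge n$, which is precisely the condition of Theorem~\ref{thm:suff-to} with $M=n$; hence DWM is an MWF policy and is throughput-optimal.

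I do not anticipate a genuine obstacle: the analytical heavy lifting is already done in Theorem~\ref{thm:suff-to}, and this proposition reduces to a standard local-exchange argument for maximum-weight bipartite matchings, plus the remark that $M=n$ is admissible because DWM's graph reaches exactly $n$ packets into each queue. The only points needing care are the boundary cases --- an idle server, or a queue whose $n$-th oldest packet has zero delay --- and both are neutralized, as above, by observing that such configurations force the relevant right-hand side $Z_{r,n}(t)$ to be zero; in particular the argument does not depend on how DWM breaks ties among maximum-weight matchings.
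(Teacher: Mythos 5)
Your proposal is correct and takes essentially the same approach as the paper: set $M=n$, note that under the hypothesis $W_{i(j,t)}(t)<Z_{r,n}(t)$ the queue $Q_r$ has at least one of its $n$ oldest packets unmatched, and then derive a contradiction with the optimality of the maximum-weight matching by an exchange with $S_j$. You add a bit more care than the paper's proof by treating the idle-server case explicitly and by spelling out that $r\neq i(j,t)$, but the core argument is identical.
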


Next, we study a simple extension of the delay-based MaxWeight 
policy \cite{andrews04,eryilmaz05,ji13c} that is throughput-optimal 
in our multi-channel system. 

\noindent {\bf Delay-based MaxWeight Scheduling (D-MWS) policy:}
In each time-slot $t$, the scheduler allocates each server $S_j$ 
to serve queue $Q_{i(j,t)}$ such that $i(j,t) = \min \{i ~|~ i \in 
\Gamma_j(t) \}$. In other words, each server chooses to serve a 
queue that has the largest HOL delay (among all the queues connected 
to this server), breaking ties by picking the one with the smallest 
index if there are multiple such queues.

It is easy to see that D-MWS is an MWF policy and is thus throughput-optimal. 
Also, it is worth noting that D-MWS has a low complexity of $O(n^2)$ in our 
mutli-channel system. However, we can show that D-MWS suffers from poor delay 
performance. Specifically, we show in the following proposition that under 
D-MWS, the probability that the largest HOL delay exceeds any fixed threshold, 
is at least a constant, even if $n$ is large. This results in a zero rate-function.

\begin{proposition}
\label{pro:dmws-rf}
Consider \emph{i.i.d.} Bernoulli arrivals, i.e., in each time-slot, 
and for each user, there is a packet arrival with probability 
$p$, and no arrivals otherwise. By allocating servers to queues 
according to D-MWS, we have that
\begin{equation}
\limsup_{n \rightarrow \infty} \frac {-1} {n} 
\log \Prob \left( W(0)>b \right) = 0,
\end{equation}
for any fixed integer $b\ge0$.
\end{proposition}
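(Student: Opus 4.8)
The plan is to show directly that $\Prob(W(0)>b)\to 1$ as $n\to\infty$. This immediately gives $\limsup_{n\to\infty}\frac{-1}{n}\log\Prob(W(0)>b)\le\lim_{n\to\infty}\frac{-1}{n}\log(1-o(1))=0$, while $\Prob(W(0)>b)\le1$ forces the $\limsup$ to be $\ge0$; hence the claimed equality. So the entire task is to produce a lower bound on $\Prob(W(0)>b)$ that does not decay with $n$, and I will do this by exhibiting an event of probability $1-o(1)$ on which $W(0)\ge b+1$.

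The key structural fact about D-MWS is that it \emph{wastes} service: every server connected to the currently oldest non-empty queue is allocated to that queue, so in any slot only the $O(\log n)$ oldest non-empty queues can possibly be served. I would make this precise as follows. Set $C(n)\triangleq\lceil 3\log n/I_X\rceil+1$. In slot $\tau$, order the non-empty queues by decreasing HOL delay, breaking ties by queue index — this is exactly the priority order used by D-MWS — and let $Q_{(r)}$ be the $r$-th queue in this order. A short case check shows $Q_{(r)}$ is served in slot $\tau$ only if some server is connected to $Q_{(r)}$ but to none of $Q_{(1)},\dots,Q_{(r-1)}$. The priority order is measurable with respect to the history through slot $\tau$ (arrivals up to $\tau$, channels up to $\tau-1$), and the slot-$\tau$ channel variables $C_{i,j}(\tau)$ are i.i.d.\ Bernoulli$(q)$ and independent of that history, so a union bound over the $n$ servers gives $\Prob(Q_{(r)}\text{ served}\mid\text{history up to }\tau)\le n(1-q)^{r-1}=ne^{-(r-1)I_X}$. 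Summing the geometric series over $r\ge C(n)$, the event $\mathcal G_\tau$ that \emph{no} queue at a priority position $\ge C(n)$ is served in slot $\tau$ satisfies $\Prob(\mathcal G_\tau^{\,c}\mid\text{history up to }\tau)\le n(1-q)^{C(n)-1}/q\le n^{-2}/q=o(1)$; equivalently, with probability $1-o(1)$ at most $C(n)-1$ queues are served in slot $\tau$.

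The rest is a ``cascade of starved queues'' argument. For $w\ge0$ let $\mathcal N_w(\tau)$ be the set of non-empty queues whose HOL packet has age at least $w$ at the start of slot $\tau$; note $W(\tau)\ge w\iff\mathcal N_w(\tau)\ne\emptyset$, and that the queues in $\mathcal N_w(\tau)$ occupy the top priority positions among all non-empty queues, since they have strictly larger HOL delay than the rest. On $\mathcal G_\tau$, every queue of $\mathcal N_w(\tau)$ other than the at most $C(n)-1$ queues sitting at priority positions $1,\dots,C(n)-1$ is not served; such a queue keeps its HOL packet (FIFO, no departures; arrivals only go to the back), whose age becomes at least $w+1$ at slot $\tau+1$, so $|\mathcal N_{w+1}(\tau+1)|\ge|\mathcal N_w(\tau)|-C(n)+1$. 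Start at slot $-(b+1)$: since arrivals are i.i.d.\ Bernoulli$(p)$, $A(-(b+1))$ is Binomial$(n,p)$, so by a Chernoff bound the event $\mathcal A\triangleq\{A(-(b+1))\ge pn/2\}$ has probability $1-o(1)$, and on $\mathcal A$ the $\ge pn/2$ queues that receive a packet in slot $-(b+1)$ all lie in $\mathcal N_0(-(b+1))$. Intersecting $\mathcal A$ with $\mathcal G_{-(b+1)},\dots,\mathcal G_{-1}$ and iterating the displayed inequality $b+1$ times yields $|\mathcal N_{b+1}(0)|\ge pn/2-(b+1)(C(n)-1)\ge1$ for all large $n$, because $C(n)=O(\log n)$ and $b$ is fixed. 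Hence $W(0)\ge b+1>b$ on this event, and $\Prob(W(0)>b)\ge\Prob(\mathcal A)-\sum_{\tau=-(b+1)}^{-1}\Prob(\mathcal G_\tau^{\,c})\ge1-o(1)$, as required.

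I expect the main obstacle to be the structural claim of the second paragraph — correctly isolating and quantifying D-MWS's service waste. Everything hinges on (i) the characterization that $Q_{(r)}$ can be served only through a server that avoids all of $Q_{(1)},\dots,Q_{(r-1)}$, and (ii) the observation that the slot-$\tau$ priority order is measurable with respect to the past, so the slot-$\tau$ channel variables are still fresh i.i.d.\ Bernoulli$(q)$ when the union bound $\Prob(Q_{(r)}\text{ served})\le n(1-q)^{r-1}$ is applied. Once the choice $C(n)=\Theta(\log n/I_X)$ is pinned down so that the geometric tail beats $n$, the cascade induction and the Binomial concentration for $A(-(b+1))$ are routine.
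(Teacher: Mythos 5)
Your proposal is correct and follows essentially the same route as the paper's proof: both hinge on the observation that, with probability $1-o(1)$, D-MWS serves only the $O(\log n)$ top-priority queues in a given slot (you bound $\Prob(Q_{(r)}\text{ served})$ per priority position and sum; the paper bounds the probability that some server avoids the top $\nu=O(\log n)$ queues), then combine this with a Chernoff bound on the Binomial arrivals and a cascade over $b+1$ consecutive slots to conclude that some queue is starved for $b+1$ slots with probability bounded away from zero. Your version packages the union bound per priority rank and tracks $\mathcal{N}_w(\tau)$ explicitly, and directly shows $\Prob(W(0)>b)\to 1$ rather than $\ge\epsilon$, but the underlying mechanism is identical.
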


We provide the proof in Appendix~\ref{app:pro:dmws-rf}, and explain the 
intuition behind it in the following. 
Note that under D-MWS, each server chooses to serve a connected queue 
having the largest weight without accounting for the decisions of the 
other servers. This way of allocating servers may incur an unbalanced 
schedule such that in each time-slot, with high probability, only a 
small fraction of the queues ($O(\log n)$ out of $n$ queues) get served, 
while the number of queues having arrivals is much larger ($O(n)$). 
This then leads to poor delay performance. By an argument similar to 
that in Theorem~3 of \cite{bodas10} (where the authors show that the 
Queue-length-based MaxWeight Scheduling (Q-MWS) policy results in a 
zero queue-length rate-function), we can show that under D-MWS, the 
delay-violation event occurs with at least a constant probability for 
any fixed threshold even if $n$ is large. 

We conclude this section with a summary of the scheduling policies 
proposed and/or discussed in this section. The FBS policy is a good 
policy that is useful for the rate-function delay analysis of other 
policies, yet it is neither throughput-optimal nor rate-function 
delay-optimal in general. Although (the modified version of) the DWM 
policy is both throughput-optimal and rate-function delay-optimal, 
it yields an impractically high complexity. Our analysis shows that 
our proposed the DWM-$n$ policy is rate-function delay-optimal and 
substantially reduces the complexity to $O(n^{2.5} \log n)$, but it 
is not throughput-optimal. Further, we show that a simple throughput-optimal 
policy, the D-MWS policy, suffers from a zero rate-function.

\section{Hybrid Policies} \label{sec:hybrid}
It is clear from the previous section that a policy that satisfies the 
sufficient conditions in Theorems~\ref{thm:suff-rf} and \ref{thm:suff-to} 
is both throughput-optimal and rate-function delay-optimal. It remains 
however to find such a policy with a \emph{low complexity}. Interestingly, 
our carefully chosen sufficient conditions possess the following special 
features, which allow us to construct a \emph{low-complexity hybrid policy 
that is both rate-function delay-optimal and throughput-optimal}:
\begin{itemize}
\item The sufficient condition for throughput optimality has a decoupling 
feature, in the sense that the condition can be separately verified for 
each individual server.

\item The sufficient condition for rate-function delay optimality guarantees 
not only rate-function delay optimality itself, \emph{but also that all 
scheduled servers for the $n$ oldest packets satisfy the sufficient condition 
for throughput optimality}.
\end{itemize}
Hence, by exploiting the above useful features of our sufficient conditions, 
we can now develop a class of two-stage hybrid OPF-MWF policies that runs an 
OPF policy (focusing on the $n$ oldest packets only) in stage~1, and runs an 
MWF policy in stage~2 over the remaining servers (that are not allocated in 
stage~1) only. We will then show that all policies in this class of hybrid 
OPF-MWF policies are both rate-function delay-optimal and throughput-optimal. 
In particular, we can find simple OPF-MWF policies with a low complexity
$O(n^{2.5} \log n)$.

We now formally define the class of two-stage hybrid OPF-MWF policies.

\begin{definition}
\label{def:hybrid}
A scheduling policy $\mathbf{P}$ is said to be in the class of \textbf{hybrid 
OPF-MWF} policies, if the following conditions are satisfied under policy 
$\mathbf{P}$: In each time-slot $t$, there are two stages:
\begin{enumerate}
\item in stage~1, it runs an OPF policy over the $n$ oldest packets only;
\item in stage~2, let $R(t)$ denote the set of servers that are not 
allocated by the OPF policy in stage~1, and let $i(j,t)$ be the index 
of the queue that is matched by server $S_j$ for $j \in R(t)$ in stage~2. 
There exists a constant $M>0$ such that in any time-slot $t$ and for 
all $j \in R(t)$, queue $Q_{i(j,t)}$ satisfies that $W_{i(j,t)}(t) \ge 
Z_{r,M}(t)$ for all $r \in \Gamma_j(t)$ such that $Q_r(t) \ge M$. In 
other words, it runs an MWF policy over the system with the remaining 
servers and packets.
\end{enumerate}
\end{definition}

In the following theorem, we show that the class of OPF-MWF policies 
are both rate-function delay-optimal and throughput-optimal.

\begin{theorem}
\label{thm:hybrid}
Any hybrid OPF-MWF policy is rate-function delay-optimal under 
Assumptions~\ref{ass:arr_bound} and \ref{ass:arr_ld}, and is 
throughput-optimal under Assumption~\ref{ass:arr_slln}.
\end{theorem}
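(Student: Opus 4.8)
The plan is to show separately that any hybrid OPF-MWF policy satisfies the sufficient condition for rate-function delay optimality (Theorem~\ref{thm:suff-rf}) and the sufficient condition for throughput optimality (Theorem~\ref{thm:suff-to}); the theorem then follows immediately. The key observation that makes both parts work is the second special feature highlighted at the start of Section~\ref{sec:hybrid}: the stage-1 OPF policy, by serving the $k$ oldest packets for the largest feasible $k$, not only delivers the delay guarantee directly but also leaves the remaining servers in $R(t)$ free to be assigned by an MWF rule in stage~2 without disturbing the oldest-packets schedule.

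First I would establish rate-function delay optimality. In stage~1, the hybrid policy runs an OPF policy over the $n$ oldest packets, which by Definition~\ref{def:rof} means that in each time-slot it serves the $k$ oldest packets for the largest possible $k \in \{1,\dots,n\}$. Serving additional (non-oldest) packets with the leftover servers in stage~2 does not reduce this value of $k$: the packets scheduled in stage~1 are still served, so the hybrid policy serves the $k$ oldest packets in that time-slot for exactly the same largest feasible $k$ as the stage-1 OPF policy would on its own. Hence the hybrid policy itself satisfies the sufficient condition in Theorem~\ref{thm:suff-rf}, and is therefore rate-function delay-optimal under Assumptions~\ref{ass:arr_bound} and \ref{ass:arr_ld}. (One minor point to verify carefully: that ``the $k$ oldest packets can be served by some policy'' is a property of the time-slot's connectivity graph and the current packet ages only, so stage~2 activity is irrelevant to whether the stage-1 choice of $k$ is the largest feasible one.)

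Next I would establish throughput optimality by checking the condition of Theorem~\ref{thm:suff-to} server by server, exploiting the decoupling feature. Fix a time-slot $t$ and a server $S_j$. If $j \in R(t)$, i.e., $S_j$ is not used in stage~1, then by part~2 of Definition~\ref{def:hybrid} the queue $Q_{i(j,t)}$ it serves in stage~2 satisfies $W_{i(j,t)}(t) \ge Z_{r,M}(t)$ for all $r \in \Gamma_j(t)$ with $Q_r(t) \ge M$, which is exactly the per-server condition required in Theorem~\ref{thm:suff-to}. The remaining case is $j \notin R(t)$, i.e., $S_j$ is allocated in stage~1 by the OPF policy. Here I would invoke the second bullet at the start of Section~\ref{sec:hybrid}: because the OPF policy serves the $k$ oldest packets in the system, any server it uses is assigned to a queue whose served packet is among the $n$ globally oldest packets, hence has HOL delay at least as large as that of any packet it could instead have served. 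In particular, for every $r \in \Gamma_j(t)$ (a queue connected to $S_j$ with maximal HOL delay among $S_j$'s neighbors) with $Q_r(t) \ge M$, the delay of the packet that $S_j$ actually serves dominates $Z_{r,M}(t) \le W_r(t)$. Thus the per-server condition of Theorem~\ref{thm:suff-to} holds for stage-1 servers as well, with the same constant $M$. Since the condition holds for every server in every time-slot, the hybrid policy is an MWF policy and hence throughput-optimal under Assumption~\ref{ass:arr_slln}.

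The main obstacle is the $j \notin R(t)$ case in the throughput argument: one must argue carefully why a server used by the OPF policy necessarily serves a queue whose HOL delay is large enough relative to $Z_{r,M}(t)$ for the relevant competitors $r$. The delay bound is $Z_{r,M}(t) \le W_r(t)$, so it suffices that the stage-1 OPF policy, when it uses $S_j$, assigns it to a queue with HOL delay at least $\max_{l \in \mS_j(t)} W_l(t)$ up to the slack of serving ``the oldest packets''. This needs to be extracted from the structure of OPF policies — essentially, that an OPF schedule, being a maximum-$k$-oldest schedule, is in particular a schedule in which every used server is matched to one of the oldest packets, so it cannot be ``beaten'' on a per-server basis by a competitor queue unless that competitor's HOL packet is itself not among the oldest, in which case it is older by at most the bounded packet-age spread and the $Z_{r,M}$ comparison still goes through for $M$ chosen as in the proof of Theorem~\ref{thm:suff-to}. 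I would make this precise by reusing the combinatorial facts already developed for Lemma~\ref{lem:dom} and Proposition~\ref{pro:dwm-to}, rather than reproving them. Once that step is pinned down, the rest is bookkeeping.
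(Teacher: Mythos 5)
Your overall plan is the same as the paper's: verify that any hybrid OPF-MWF policy satisfies the sufficient condition of Theorem~\ref{thm:suff-rf} and the per-server sufficient condition of Theorem~\ref{thm:suff-to}. Your rate-function argument and your treatment of stage-2 servers ($j \in R(t)$) are both correct and essentially identical to the paper's.

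The gap is exactly where you flag it yourself: the stage-1 servers ($j \notin R(t)$) in the throughput argument. You never specify the constant $M$, and the justification you offer --- that a server used by the OPF policy serves a queue ``whose HOL delay is at least as large as that of any packet it could instead have served'' and that the comparison ``still goes through'' because of a ``bounded packet-age spread'' --- is not correct. There is no bounded-spread assumption in this paper, and the statement about dominating every competitor's HOL delay is both stronger than needed and not generally true (the sufficient condition deliberately relaxes exactly this). The clean argument, which the paper uses, is a pigeonhole with $M=n$: by Definition~\ref{def:hybrid}, stage~1 runs an OPF policy restricted to the $n$ globally oldest packets, so any server $S_l$ used in stage~1 serves a packet that is among the $n$ oldest in the whole system. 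For any $r \in \Gamma_l(t)$ with $Q_r(t)\ge n$, queue $Q_r$ already contributes $n$ packets with delay at least $Z_{r,n}(t)$, so the $n$-th oldest packet in the system has delay at least $Z_{r,n}(t)$; hence the packet served by $S_l$ has delay at least $Z_{r,n}(t)$, and a fortiori $W_{i(l,t)}(t)\ge Z_{r,n}(t)$. This is the same pigeonhole already used in the proof of Proposition~\ref{pro:dwm-to}, which is indeed the right lemma to reuse --- Lemma~\ref{lem:dom} (dominance over FBS and perfect-matching) is not relevant to this step. Once you fix $M=n$ and replace the packet-age-spread heuristic with this pigeonhole, your proof matches the paper's.
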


We provide the proof in Appendix~\ref{app:thm:hybrid}, and give the 
intuition behind it as follows.
In stage~1, an OPF policy not only guarantees rate-function 
delay optimality, \emph{but also satisfies the sufficient condition 
for throughput optimality for all allocated servers in this stage}. 
Note that the allocated servers and packets in stage~1 will not be considered 
in stage~2. In stage~2, \emph{we run an MWF policy for the remaining servers 
and packets only.} Hence, it ensures that the sufficient condition for 
throughput optimality is satisfied for the remaining servers as well. Since 
the allocated servers and packets in stage~1 are not touched in stage~2, the 
satisfaction of the sufficient condition for delay optimality is not perturbed, 
and the sufficient condition for throughput optimality is also satisfied.



We note that the idea of combining different policies into (heuristic) hybrid 
policies to improve the overall performance, is not new. However, our goal in 
this paper is to achieve \emph{provable} optimality in terms of both throughput 
and delay. Hence, the task of designing the right hybrid policy becomes much more 
challenging. Further, it is not necessary that all combinations of the OPF and 
MWF policies lead to desired hybrid policies. For example, it is unclear that 
the sufficient condition for throughput optimality can be satisfied if instead, 
we run an MWF policy in stage~1 and do post-processing by applying an OPF policy 
in stage~2. In this case, because the servers allocated by an MWF policy in stage~1 
can be reallocated in stage~2, the sufficient condition for throughput optimality 
may not hold any more. In contrast, our solutions exploit the special features of 
our carefully chosen sufficient conditions, and intelligently combine different 
policies in a right way, to achieve the optimal performance for both throughput 
and delay.

There are still many policies in the class of hybrid OPF-MWF policies. 
In the following, as an example, we show that the DWM-$n$ policy combined 
with the D-MWS policy yields an $O(n^{2.5} \log n)$-complexity hybrid OPF-MWF 
policy that is both throughput-optimal and rate-function delay-optimal. 
Let this policy be called \textbf{DWM-$n$-MWS} policy. Then, we present
the main result of this paper in the following theorem.

\begin{theorem}
\label{thm:dwm-mws}
DWM-$n$-MWS policy is in the class of hybrid OPF-MWF policies, 
and is thus both throughput-optimal and rate-function delay-optimal. 
Further, DWM-$n$-MWS policy has a complexity of $O(n^{2.5} \log n)$. 
\end{theorem}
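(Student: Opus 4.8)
The plan is to verify that DWM-$n$-MWS satisfies the two defining conditions of Definition~\ref{def:hybrid}, from which throughput optimality and rate-function delay optimality follow at once via Theorem~\ref{thm:hybrid}; the complexity bound is then obtained by adding the costs of the two stages. For condition~(1), observe that stage~1 of DWM-$n$-MWS is, by construction, the DWM-$n$ policy restricted to the $n$ oldest packets, which is an OPF policy by Proposition~\ref{pro:dwmn}. So nothing further is needed for stage~1.

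The substance of the argument is condition~(2): I would show that running D-MWS over the servers in $R(t)$ meets the required inequality with the constant $M=1$. Fix a time-slot $t$ and a server index $j\in R(t)$. If $\mS_j(t)=\emptyset$ then $\Gamma_j(t)=\emptyset$ and the condition holds vacuously, so suppose $\mS_j(t)\neq\emptyset$. D-MWS sets $i(j,t)=\min\{i~|~i\in\Gamma_j(t)\}$, hence $i(j,t)\in\Gamma_j(t)$, and therefore $W_{i(j,t)}(t)=\max_{l\in\mS_j(t)}W_l(t)=W_r(t)$ for every $r\in\Gamma_j(t)$. Because the head-of-line packet of a FIFO queue is the oldest one in that queue, the per-packet delays $Z_{r,l}(t)$ are non-increasing in $l$; thus $Z_{r,M}(t)\le Z_{r,1}(t)=W_r(t)$ whenever $Q_r(t)\ge M$. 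Combining these gives $W_{i(j,t)}(t)=W_r(t)\ge Z_{r,M}(t)$ for every $r\in\Gamma_j(t)$ with $Q_r(t)\ge M$, i.e., condition~(2) with $M=1$ (and a fortiori with any integer $M\ge1$).

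Having established that DWM-$n$-MWS belongs to the class of hybrid OPF-MWF policies, Theorem~\ref{thm:hybrid} immediately gives rate-function delay optimality under Assumptions~\ref{ass:arr_bound} and \ref{ass:arr_ld}, and throughput optimality under Assumption~\ref{ass:arr_slln}. For the complexity, stage~1 costs $O(n^{2.5}\log n)$: identifying the $n$ oldest packets takes $O(n\log n)$ time using a heap over the queue heads, and computing a maximum vertex-weighted matching on the resulting $n\times n$ bipartite graph costs $O(n^{2.5}\log n)$ (Proposition~\ref{pro:dwmn}). Stage~2 costs $O(n^2)$: forming $R(t)$ is $O(n)$, and D-MWS lets each of the at most $n$ remaining servers scan its at most $n$ connected queues to find the largest HOL delay. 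The total is $O(n^{2.5}\log n)+O(n^2)=O(n^{2.5}\log n)$.

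I do not anticipate a serious obstacle here: once Proposition~\ref{pro:dwmn} and Theorem~\ref{thm:hybrid} are in hand, the proof is essentially bookkeeping. The one point that deserves care is that the stage-2 condition in Definition~\ref{def:hybrid} references the quantities $\Gamma_j(t)$, $Z_{r,M}(t)$ and $Q_r(t)$ of the \emph{original} system, whereas D-MWS is run only over the leftover servers in $R(t)$; one must note that D-MWS chooses $i(j,t)$ from the connectivities $C_{i,j}(t)$ and the beginning-of-slot HOL delays $W_i(t)$ alone, neither of which is perturbed by the stage-1 allocation, so restricting to $R(t)$ cannot invalidate the condition. A minor secondary check is that the naive $O(n^2)$ implementation of D-MWS is dominated by the stage-1 matching cost.
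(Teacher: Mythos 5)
Your argument for condition~(2) with $M=1$ has a genuine gap, and it stems precisely from the point you flag at the end and then dismiss. You assert that D-MWS in stage~2 ``chooses $i(j,t)$ from the connectivities $C_{i,j}(t)$ and the beginning-of-slot HOL delays $W_i(t)$ alone, neither of which is perturbed by the stage-1 allocation.'' This is false. Definition~\ref{def:hybrid} says stage~2 ``runs an MWF policy over the system with the \emph{remaining} servers and packets.'' Consequently, when D-MWS runs in stage~2 it sees the \emph{post-stage-1} HOL delays $W'_i(t)$, which can be strictly smaller than $W_i(t)$ for any queue whose HOL packet was among the $n$ oldest and was served in stage~1. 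So $i(j,t)$ is selected from $\Gamma'_j(t)$ (the argmax with respect to $W'_i$), not from $\Gamma_j(t)$, and the equality $W_{i(j,t)}(t)=\max_{l\in\mS_j(t)}W_l(t)$ on which your whole derivation rests does not hold. Concretely: suppose $Q_1$ has two packets with delays $10$ and $3$, $Q_2$ has two packets with delays $5$ and $4$, server $S_j\in R(t)$ is connected to both, and stage~1 happens to serve the delay-$10$ HOL packet of $Q_1$. Then stage-2 D-MWS picks $Q_2$, so $W_{i(j,t)}(t)=5$, while $\Gamma_j(t)=\{1\}$ and $Z_{1,1}(t)=10>5$: the condition fails with $M=1$.

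This is exactly why the paper's proof takes $M=n$ and reasons through the stage-2 quantities $W'_i(t)$, $\Gamma'_j(t)$. The key chain is $W_{i(j,t)}(t)\ge W'_{i(j,t)}(t)=\max_{l\in\mS_j(t)}W'_l(t)\ge W'_r(t)\ge Z_{r,n}(t)$ for any $r\in\mS_j(t)$ with $Q_r(t)\ge n$ (and hence for any $r\in\Gamma_j(t)$ with $Q_r(t)\ge n$); the final step $W'_r(t)\ge Z_{r,n}(t)$ holds because stage~1 allocates at most $n-1$ of the $n$ servers other than $S_j$, so at most $n-1$ packets of $Q_r$ can be removed and the stage-2 HOL of $Q_r$ is no later than position $n$ in the original queue. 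You can patch your proof by replacing $M=1$ with $M=n$ and routing the inequality through $W'_i(t)$ in this way. Your stage-1 argument (via Proposition~\ref{pro:dwmn}) and the complexity accounting (stage~1 dominates at $O(n^{2.5}\log n)$, stage~2 is $O(n^2)$) are correct and match the paper.
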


To show that DWM-$n$-MWS is a hybrid OPF-MWF policy, it suffices to 
show that Condition 2) of Definition~\ref{def:hybrid} is satisfied. 
We provide the proof in Appendix~\ref{app:thm:dwm-mws}.

\section{Simulation Results} \label{sec:sim}
In this section, we conduct simulations to compare the performance of the 
scheduling policies proposed or discussed in this paper, where the Hybrid 
policy we consider is DWM-$n$-MWS policy. We also compare the delay 
performance of our proposed policies along with two $O(n^2)$-complexity queue-length-based 
policies (i.e., using queue lengths instead of delays to calculate weights 
when making scheduling decisions): Queue-based Server-Side-Greedy (Q-SSG) 
and Q-MWS, which have been studied in \cite{bodas09,bodas10}. We implement 
and simulate these policies in Java, and compare the empirical probabilities 
that the largest HOL delay in the system in any given time-slot exceeds a 
constant $b$, i.e., $\Prob (W(0)>b)$.

For the arrival processes, we consider bursty arrivals that are driven by a 
two-state Markov chain and are thus correlated over time. (We obtained similar
results for \emph{i.i.d.} arrivals over time, but omit them here due to space 
constraints.) We adopt the same parameter settings as in \cite{sharma11,sharma11b}. 
For each user, there are 5 packet-arrivals when the Markov chain is in state~1, 
and no arrivals when the Markov chain is in state~2. The transition probability 
of the Markov chain is given by the matrix $[0.5,0.5;0.1,0.9]$,
and the state transitions occur at the end of each time-slot. The arrivals 
for each user are correlated over time, but they are independent across users. 
For the channel model, we first assume \emph{i.i.d.} ON-OFF channels (as in 
Assumption~\ref{ass:channel}) and set $q=0.75$, and later consider more general 
scenarios with heterogeneous users and bursty channels that are correlated over 
time. We run simulations for a system with $n \in \{ 10,20,\dots,100 \}$. The 
simulation period lasts for $10^7$ time-slots for each policy and each system.



\begin{figure}[t]
\centering
\epsfig{file=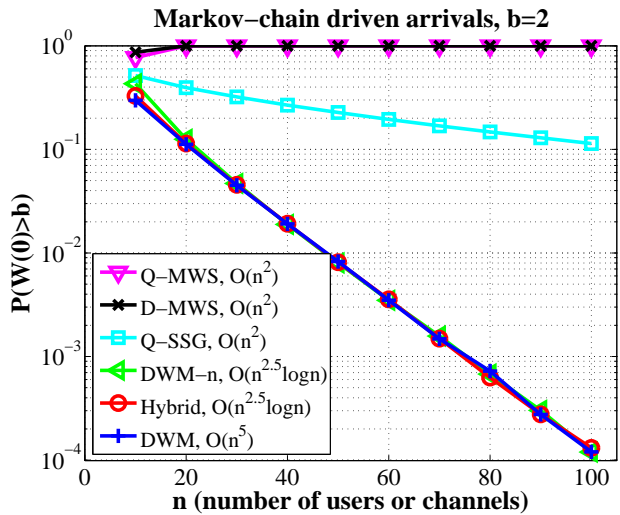,width=0.5\linewidth}
\caption{Performance comparison of different scheduling policies in the case with 
homogeneous \emph{i.i.d.} channels, for delay threshold $b=2$.}
\label{fig:dr}
\end{figure}

\begin{figure}[t]
\centering
\epsfig{file=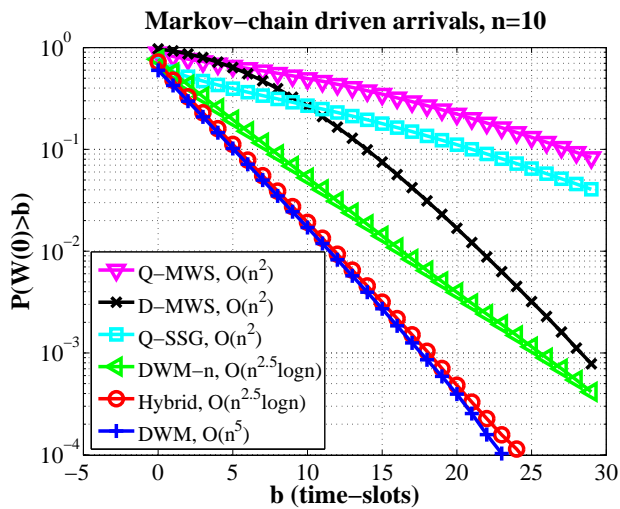,width=0.5\linewidth}
\caption{Performance comparison of different scheduling policies in the case with 
homogeneous \emph{i.i.d.} channels, for $n=10$ channels/users.}
\label{fig:delay}
\end{figure}

\begin{figure}[t]
\centering
\epsfig{file=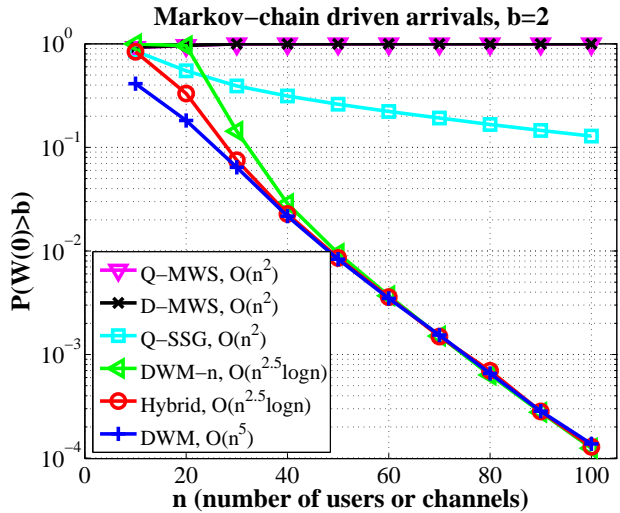,width=0.5\linewidth}
\caption{Performance comparison of different scheduling policies 
in the case with Markov-chain driven heterogeneous channels, for delay threshold $b=2$.}
\label{fig:heter}
\end{figure}

The results are summarized in Figs.~\ref{fig:dr} and \ref{fig:delay}, 
where the complexity of each policy is labeled. In order to compare 
the rate-function $I(b)$ as defined in Eq.~(\ref{eq:rf}), we plot the 
probability over the number of channels or users, i.e., $n$, for a 
fixed value of threshold $b$. In Fig.~\ref{fig:dr}, we compare the 
rate-function $I(b)$ of different scheduling policies for $b=2$. The 
negative of the slope of each curve can be viewed as the rate-function 
for the corresponding policy. From Fig.~\ref{fig:dr}, we observe that 
the Hybrid and DWM-$n$ policies perform closely to DWM, and that D-MWS 
and Q-MWS have a zero rate-function, which supports our analytical 
results. Further, the results show that the delay-based policies (DWM, 
DWM-$n$ and Hybrid) consistently outperform Q-SSG in terms of delay 
performance, despite that it has been shown through simulations that 
Q-SSG performs closely to a rate-function (queue-length) optimal policy 
\cite{bodas09,bodas10}. This provides further evidence of the fact that 
good queue-length performance does not necessarily translate to good 
delay performance.

We also plot the probability over delay threshold $b$ as in \cite{bodas09,
bodas10,bodas11a,sharma11,sharma11b} to investigate the performance of different 
policies when $n$ is small. In Fig.~\ref{fig:delay}, we report the results 
for $n=10$ and $b \in \{0,1,2,\dots,29\}$. From Fig.~\ref{fig:delay}, we 
observe that the Hybrid policy consistently performs closely to DWM for 
almost all values of $b$ that we consider, while DWM-$n$ is worse than DWM. 
This is because DWM-$n$ may not schedule all the servers, and the probability 
that some of the servers are kept idle can be significant when $n$ is small. 

Finally, we evaluate the performance of different scheduling policies 
in more realistic scenarios, where users are \emph{heterogeneous} and 
channels are \emph{correlated over time}. Specifically, we consider 
channels that can be modeled by a two-state Markov chain, where the 
channel is ``ON" when the Markov chain is in state~1, and is ``OFF" 
when the Markov chain is in state~2. This type of channel model can 
be viewed as a special case of the Gilbert Elliot model that is widely 
used for describing bursty channels. We assume that there are two 
classes of users: users with an odd index are called \emph{near-users}, 
and users with an even index are called \emph{far-users}. Different 
classes of users see different channel conditions: near-users see 
better channel condition, and far-users see worse channel condition. 
We assume that the transition probability matrices of channels for 
near-users and far-users are $[0.833, 0.167; 0.5 , 0.5]$ and $[0.5, 
0.5; 0.167, 0.833]$, respectively. The arrival processes are assumed 
to be the same as in the previous case. Also, the delay requirements 
are assumed to be the same for different classes of users, i.e., we 
still consider the probability that the largest HOL delay exceeds a 
fixed threshold, without distinguishing different classes of users.  

The results are summarized in Fig.~\ref{fig:heter}. We observe similar 
results as in the previous case, where channels are \emph{i.i.d.} in 
time. In particular, our low-complexity policies (DWM-$n$ and Hybrid) 
again perform closely to DWM, in terms of rate-function, although the
delay-violation probability is a bit smaller under DWM when $n$ is not 
large (i.e., $n<50$), which is expected. Note that in this scenario, 
rate-function delay-optimal policies are \emph{not} known yet. For 
future work, it would be interesting to explore whether our proposed 
policies can achieve optimality of both throughput and delay in more 
general scenarios.

\section{Conclusion} \label{sec:con}
In this paper, we addressed the question of designing low-complexity 
scheduling policies that provide optimal performance of both throughput
and delay in multi-channel systems. We derived simple and easy-to-verify 
sufficient conditions for throughput optimality and rate-function delay 
optimality, which allowed us to later develop a class of low-complexity 
hybrid policies that simultaneously achieve both throughput optimality 
and rate-function delay optimality.

Our work in this paper leads to many interesting questions that are worth 
exploring in the future. It would be interesting to know if one can further 
relax the sufficient conditions, and design even simpler policies that can 
provide optimal performance for both throughput and delay. Further, it would 
be worthwhile to analytically characterize the fundamental trade-off between 
performance and complexity. 

Further, it is important to investigate the scheduling problem in 
more realistic scenarios, e.g., accounting for more general multi-rate 
channels that are correlated over time, rather than \emph{i.i.d.} ON-OFF 
channels, and heterogeneous users with different statistics as well as 
different delay requirements. Our hope is to find efficient schedulers 
that can guarantee a nontrivial lower bound of the optimal rate-function, 
if it is too hard to achieve (or prove) the optimal delay performance 
itself in more general scenarios.

Finally, it is interesting and important for us to understand the delay 
performance beyond rate-function optimality as we considered in this paper. 
The log-asymptotic results from the large-deviations analysis may not suffice, 
since they do not account for the pre-factor of the delay-violation probability. 
Therefore, a very important direction is to analyze and understand the exact 
delay asymptotics as well as the mean delay performance.

\appendices 


\section{The Optimal Throughput Region $\Lambda^{*}$} \label{app:otr}
We can characterize the optimal throughput region $\Lambda^{*}$ of our 
multi-channel systems in a similar manner to that of single-channel 
systems in \cite{andrews04}.

We start with discussions for a single-channel system with $n$ users
in a more general setting. Specifically, suppose that there is a finite 
set $\mM=\{1,2,\dots,|\mM|\}$ of global server states (where the server 
state accounts for the state of the links between the server and all 
users). For each state $m \in \mM$, there is an associated service rate 
vector $r^m=[r^m_i, 1 \le i \le n]$, where $r^m_i$ is the maximum number 
of packets that can be transmitted to $Q_i$ when the server is in state 
$m$ (under Assumption~\ref{ass:channel} of ON-OFF channels, we have $r^m_i 
\in \{0,1\}$ for all $m$ and $i$). We assume that the random channel state 
process is a stationary and ergodic discrete-time Markov chain within the 
state space $\mM$. We let $\pi=[\pi_m, m \in \mM]$ denote the stationary 
distribution of this Markov chain, where $\pi_m > 0$ for all $m \in \mM$.

As in \cite{andrews04}, consider a \emph{Static Service Split (SSS)}
policy, associated with an $|\mM| \times n$ stochastic matrix $\phi
=[\phi_{m,i},m \in \mM, 1 \le i \le n]$, where $\phi_{m,i} \ge 0$ for 
all $m$ and $i$, and $\sum_{1 \le i \le n} \phi_{m,i} = 1$ for every 
$m$. Under the SSS policy, the server chooses to serve $Q_i$ with 
probability $\phi_{m,i}$ when the server is in state $m$. Clearly, 
the (long-term average) service rate vector can be represented by 
$\nu=[\nu_i, 1 \le i \le n]=\nu(\phi)$, where $\nu_i=\sum_{m \in 
\mM} \pi_m \phi_{m,i} r^m_i$. Then, the set of all feasible (long-term 
average) service rate vector can be represented as 
\[
\Rate=\{ \nu ~|~ \nu=\nu(\phi) ~\text{for some stochastic matrices}~ \phi \}.
\]
Hence, the optimal throughput region can be represented as 
\[
\Lambda^{*}=\{\lambda ~|~ \lambda \le \nu ~\text{for some vector}~ \nu \in \Rate \}.
\]

Now, consider a multi-channel system with $n$ orthogonal channels. 
Let $\mu_{i,j}$ denote the feasible (long-term average) service rate 
that can be allocated to queue $Q_i$ from server $S_j$, and let the 
vector $\mu_j=[\mu_{i,j}, 1 \le i \le n]$ denote a feasible service 
rate allocation by server $S_j$. For each server $S_j$, the set of 
all such feasible vectors $\mu_j$ is denoted by $\Rate_j$. Note that 
the characterization of $\Rate_j$ has already accounted for the 
time-varying channel-states. Let $\mu=[\mu_j, 1 \le j \le n]$ denote 
a feasible service rate matrix, and the set of all such feasible 
matrices $\mu$ can be represented as $\Rate=\Rate_1 \times \Rate_2 
\times \dots \times \Rate_n$. Hence, the optimal throughput region 
$\Lambda^*$ can be represented as
\[
\Lambda^* = \{\lambda ~|~  \lambda_i \le \sum_{j=1}^n \mu_{i,j} 
~\text{for all}~ i, ~\text{for some matrix}~ \mu \in \Rate\}.
\]

Note that our multi-channel system under Assumption~\ref{ass:channel} 
of ON-OFF channel model is a special case of the above scenario.

\section{Proof of Theorem~\ref{thm:ub}}  \label{app:thm:ub}
We begin with stating an important property of $I_A(t,x)$ in the following 
lemma, which will be used in deriving the upper bound in Theorem~\ref{thm:ub}. 
Recall that we define the quantity $I^+_A(t,x) \triangleq \lim_{y \rightarrow x^+} I_A(t,y)$. 
\begin{lemma}
\label{lem:IAtx}
Suppose $L>1$. For any given integer $t>0$, and for all $x \in [0,(L-1)t)$, 
the limit $I^+_A(t,x)=\lim_{y \rightarrow x^+} I_A(t,y)$ exists and we have 
$I_A(t,x)=I^+_A(t,x)$.
\end{lemma}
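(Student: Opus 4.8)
The plan is to deduce the lemma from two elementary facts about the map $x\mapsto I_A(t,x)$: that it is convex and non-decreasing, and that it is finite on an interval whose interior contains $[0,(L-1)t)$. Once both are established, right-continuity on $[0,(L-1)t)$ is automatic, because a finite convex function is continuous on the interior of the set where it is finite. For the first fact, observe that $A_i(-t+1,0)=\sum_{\tau=-t+1}^{0}A_i(\tau)$ is a sum of $t$ terms each bounded by $L$, hence takes values in $[0,Lt]$; therefore $\lambda_{A_i(-t+1,0)}(\theta)=\log\Expect[e^{\theta A_i(-t+1,0)}]$ is finite and convex in $\theta$. For every fixed $\theta>0$ the map $x\mapsto\theta(t+x)-\lambda_{A_i(-t+1,0)}(\theta)$ is affine and strictly increasing in $x$, so $I_A(t,x)$, being the supremum over $\theta>0$ of such maps, is convex and non-decreasing in $x$; in particular $\lim_{y\to x^+}I_A(t,y)$ exists and is at least $I_A(t,x)$, so only the reverse inequality needs proof.

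Next I would pin down the interval of finiteness. Letting $\theta\to0^+$ gives $I_A(t,x)\ge0$ for all $x$. For an upper bound near $x=(L-1)t$, I would use the second part of Assumption~\ref{ass:arr_bound} with $s=-t+1$: since $A(-t+1,0)=\sum_i A_i(-t+1,0)$ with each summand at most $Lt$, the event $\{A(-t+1,0)=Lnt\}$ forces $A_i(-t+1,0)=Lt$ for every $i$, and by independence across users this yields $\rho:=\Prob(A_i(-t+1,0)=Lt)>0$. Hence $\Expect[e^{\theta A_i(-t+1,0)}]\ge\rho\,e^{\theta Lt}$, i.e. $\lambda_{A_i(-t+1,0)}(\theta)\ge\theta Lt+\log\rho$, so for all $\theta>0$
\[
\theta(t+x)-\lambda_{A_i(-t+1,0)}(\theta)\le\theta\big(x-(L-1)t\big)+\log(1/\rho).
\]
Taking the supremum over $\theta>0$ gives $I_A(t,x)\le\log(1/\rho)<\infty$ for every $x\le(L-1)t$; in particular $I_A(t,(L-1)t)<\infty$. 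Thus the set on which $I_A(t,\cdot)$ is finite contains $(-\infty,(L-1)t]$, so every $x\in[0,(L-1)t)$ lies in its interior.

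To finish, fix $x\in[0,(L-1)t)$ and set $x'=(L-1)t$. For $x<y<x'$, convexity gives
\[
I_A(t,y)\le\frac{x'-y}{x'-x}I_A(t,x)+\frac{y-x}{x'-x}I_A(t,x'),
\]
where both $I_A(t,x)$ and $I_A(t,x')$ are finite by the previous step. Letting $y\to x^+$, the right-hand side tends to $I_A(t,x)$, so $\limsup_{y\to x^+}I_A(t,y)\le I_A(t,x)$. Combined with the monotonicity bound from the first paragraph, this shows that $I^+_A(t,x)=\lim_{y\to x^+}I_A(t,y)$ exists and equals $I_A(t,x)$, which is the assertion of the lemma.

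The main obstacle — indeed the only place where more than routine convex analysis is needed — is the step establishing that $\rho=\Prob(A_i(-t+1,0)=Lt)>0$, i.e. that a single user can attain the maximal possible total arrivals $Lt$ over a $t$-slot window. This is precisely what the second part of Assumption~\ref{ass:arr_bound} (together with cross-user independence) guarantees, and it is essential: without it the finiteness interval of $I_A(t,\cdot)$ might stop strictly before $(L-1)t$, in which case the argument would only cover a proper subinterval of $[0,(L-1)t)$. The restriction of the supremum to $\theta>0$ rather than all real $\theta$ causes no difficulty, since the only values of $\theta$ we push to extremes are $\theta\to0^+$ and $\theta\to\infty$, and convexity in $x$ does the rest.
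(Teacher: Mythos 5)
Your proof is correct and follows the same high-level strategy as the paper's (convexity as a supremum of affine maps in $x$, monotonicity, finiteness via the second part of Assumption~\ref{ass:arr_bound}, then continuity of a finite convex function on the interior of its domain). Where you improve on the paper is precisely at the two places it waves its hands. First, you make the finiteness claim quantitative: using $\Prob(A_i(-t+1,0)=Lt)=\rho>0$ you obtain $\lambda_{A_i(-t+1,0)}(\theta)\ge\theta Lt+\log\rho$ and hence $I_A(t,x)\le\log(1/\rho)$ for all $x\le(L-1)t$ --- in particular finiteness holds at the closed right endpoint, a point the paper leaves implicit. Second, the paper asserts right-continuity at $x=0$ with "it is not hard to show," whereas your three-point convexity inequality with reference points $x$ and $x'=(L-1)t$ treats $x=0$ uniformly with every other $x\in[0,(L-1)t)$ and needs no separate endpoint argument; this genuinely closes the small gap. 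Your remark that only $\theta\to0^+$ and $\theta\to\infty$ are ever used, so restricting the supremum to $\theta>0$ costs nothing, is also a nice sanity check that the paper omits.
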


\begin{proof}
Consider any given integer $t>0$. First, note that the total number of 
packet arrivals to the system during an interval of $t$ time-slots cannot 
exceed $Lnt$. Hence, we only need to consider $I_A(t,x)$ defined on 
$[0,(L-1)t]$. By the second part of Assumption~\ref{ass:arr_bound}, it 
is easy to see that $I_A(t,x)$ must be finite in $[0,(L-1)t)$. Note that 
$I_A(t,x)$ is a supremum (over $\theta$) of linear functions (of $x$). 
Hence, $I_A(t,x)$ is a convex function (of $x$), and is thus continuous 
on $(0,(L-1)t)$ (i.e., the interior of $[0,(L-1)t]$) \cite[Pg. 68]{boyd04}. 
Further, it is easy to see that $I_A(t,x)$ is 
monotone (non-decreasing) on $[0,(L-1)t]$ due to (\ref{eq:IAtx}). Hence, 
it is not hard to show that $I_A(t,x)$ is right-continuous at the left-most 
point $x=0$. Therefore, the limit $\lim_{y \rightarrow x^+} I_A(t,y)$ 
exists and we have $I_A(t,x)=I^+_A(t,x)$ for any $x \in [0,(L-1)t)$.
\end{proof}

First, we focus on the case where $L>1$, and consider three types of 
events, $\mE_1$, $\mE^c_2$, and $\mE^c_3$, that imply the delay-violation 
event $\{ W(0)>b \}$. 

{\bf Event} $\mE_1$: Suppose that there is a packet that arrives to 
the network in time-slot $-b-1$. Without loss of generality, we assume
that the packet arrives to queue $Q_1$. Further, suppose that $Q_1$ is 
disconnected from all $n$ servers in all time-slots from $-b-1$ to $-1$. 

Then, at the beginning of time-slot 0, this packet is still in the 
network and has a delay of $b+1$. This implies $\mE_1 \subseteq \{W(0)>b\}$. 
Note that the probability that event $\mE_1$ occurs can be computed as
\[
\Prob(\mE_1) = (1-q)^{n(b+1)}=e^{-n(b+1)I_X}.
\]
Hence, we have 
\[
\Prob(W(0)>b) \ge e^{-n(b+1)I_X},
\]
and thus
\[
\limsup_{n \rightarrow \infty} \frac {-1}{n} \log \Prob(W(0)>b) \le (b+1)I_X.
\]

{\bf Event} $\mE^c_2$: Consider any fixed $c \in \{0,1,\dots,b\}$ and any 
$t > t_{b-c}$. Recall that $t_{b-c} = \frac {b-c}{L-1}$. Then, for all 
$t > t_{b-c}$, we have $b-c < (L-1)t$, and thus $I_A(t,b-c) = I^+_A(t,b-c)$ 
from Lemma~\ref{lem:IAtx}. Hence, for any fixed $\epsilon>0$, there exists 
a $\delta>0$ such that $I_A(t,b-c+\delta) \le I^+_A(t,b-c)+\epsilon = I_A(t,b-c)
+\epsilon$. Suppose that from time-slot $-t-b$ to $-b-1$, the total number 
of packet arrivals to the system is greater than or equal to $nt + n(b-c+\delta)$, 
and let $p_{(b-c+\delta)}$ denote the probability that this event occurs. 
Then, from Cramer's Theorem, we have $\lim_{n \rightarrow \infty} \frac 
{-1}{n} \log p_{(b-c+\delta)}=I_A(t,b-c+\delta) \le I_A(t,b-c)+\epsilon$.
Clearly, the total number of packets that are served in any time-slot is 
no greater than $n$. For any fixed $\delta$, we have $n\delta \ge 1$ for 
large enough $n$ (when $n \ge \frac {1}{\delta}$). Hence, if the above 
event occurs, at the end of 
time-slot $-c-1$, the system contains at least one packet that arrived before 
time-slot $-b$. Without loss of generality, we assume that this packet is in 
$Q_1$. Now, assume that $Q_1$ is disconnected from all $n$ servers in the 
next $c$ time-slots, i.e., from time-slot $-c$ to $-1$. This occurs with 
probability $(1-q)^{cn}=e^{-ncI_X}$, independently of all the past history. 
Hence, at the beginning of time-slot 0, there is still a packet that arrived 
before time-slot $-b$. Thus, we have $W(0)>b$ in this case. This implies 
$\mE^c_2 \subseteq \{W(0)>b\}$. Note that the probability that event $\mE^c_2$ 
occurs can be computed as
\[
\Prob(\mE^c_2) = p_{(b-c+\delta)}e^{-ncI_X}.
\]
Hence, we have 
\[
\Prob(W(0)>b) \ge p_{(b-c+\delta)}e^{-ncI_X},
\]
and thus
\[
\limsup_{n \rightarrow \infty} \frac {-1}{n} \log \Prob(W(0)>b) \le I_A(t,b-c) + \epsilon + cI_X.
\]

Since the above inequality holds for any $c \in \{0,1,\dots,b\}$, 
any $t>t_{b-c}$, and any $\epsilon>0$, by letting $\epsilon$ tend 
to 0, taking the infimum over all $t>t_{b-c}$, and taking the minimum 
over all $c \in \{0,1,\dots,b\}$, we have
\[
\begin{split}
\limsup_{n \rightarrow \infty} & \frac {-1}{n} \log \Prob(W(0)>b) \\
& \le \min_{c \in \{0,1,\dots,b \}} \{ \inf_{t>t_{b-c}} I_A(t,b-c) + cI_X \}.
\end{split}
\]

{\bf Event} $\mE^c_3$: Consider any fixed $c \in \Psi_b$.
Suppose that from time-slot $-t_{b-c}-b$ to $-b-1$, the total number of packet 
arrivals to the system is equal to $nt_{b-c} + n(b-c) = nLt_{b-c}$, and let 
$p^{\prime}_{(b-c)}$ denote the probability that this event occurs. Note that
the total number of packet arrivals to the system from time-slot $-t_{b-c}-b$ 
to $-b-1$ can never exceed $nLt_{b-c}$. Then, from Cramer's Theorem, we have 
$\lim_{n \rightarrow \infty} \frac {-1}{n} \log p^{\prime}_{(b-c)}=I_A(t_{b-c},b-c)$. 
Clearly, the total number of packets that can be served during the interval 
$[-t_{b-c}-b, -c-1]$ is no greater than $n(t_{b-c}+b-c) = nLt_{b-c}$. Suppose 
that there exists one queue that is disconnected from all the servers in any 
one time-slot in the interval $[-t_{b-c}-b, -c-1]$. Then, at the end of time-slot 
$-c-1$, the system contains at least one packet that 
arrived before time-slot $-b$. Without loss of generality, we assume that queue 
$Q_1$ is disconnected from all the servers in a time-slot, say time-slot $-t_{b-c}-b$. 
This event occurs with probability $(1-q)^n=e^{-nI_X}$. Further, assume that $Q_1$ 
is disconnected from all the $n$ servers in the next $c$ time-slots, i.e., from 
time-slot $-c$ to $-1$. This occurs with probability $(1-q)^{cn}=e^{-ncI_X}$, 
independently of all the past history. Hence, at the beginning of time-slot 0, 
there is still a packet that arrived before time-slot $-b$. Thus, we have $W(0)>b$ 
in this case. This implies $\mE^c_3 \subseteq \{W(0)>b\}$. Note that the probability 
that event $\mE^c_3$ occurs can be computed as
\[
\Prob(\mE^c_3) = p^{\prime}_{(b-c)}e^{-n(c+1)I_X}.
\]
Hence, we have 
\[
\Prob(W(0)>b) \ge p^{\prime}_{(b-c)}e^{-n(c+1)I_X},
\]
and thus
\[
\limsup_{n \rightarrow \infty} \frac {-1}{n} \log \Prob(W(0)>b) \le I_A(t_{b-c},b-c) + (c+1)I_X.
\]
Since the above inequality holds for any $c \in \Psi_b$, by taking 
the minimum over all $c \in \Psi_b$, we have, for $L > 1$,
\[
\begin{split}
\limsup_{n \rightarrow \infty} & \frac {-1}{n} \log \Prob(W(0)>b) \\
& \le \min_{c \in \Psi_b} \{ I_A(t_{b-c},b-c) + (c+1)I_X \}.
\end{split}
\]

Considering events $\mE_1$, $\mE^c_2$, and $\mE^c_3$, we have
\[
\begin{split}
\limsup_{n \rightarrow \infty} & \frac {-1}{n} \log \Prob(W(0)>b) \\ 
& \le \min \{(b+1)I_X, \\
&~~~~~~~~~~ \min_{0 \le c \le b} \{\inf_{t>t_{b-c}} I_A(t,b-c) + c I_X \}, \\
&~~~~~~~~~~ \min_{c \in \Psi_b} \{ I_A(t_{b-c},b-c) + (c+1) I_X \} \} \\
& = I_0(b).
\end{split}
\]

Next, we consider the case where $L=1$. In this case, we only need to
consider event $\mE_1$, and we have $\limsup_{n \rightarrow \infty} 
\frac {-1}{n} \log \Prob(W(0)>b) \le (b+1) I_X$. 

Combining both cases of $L=1$ and $L>1$, we have $\limsup_{n \rightarrow 
\infty} \frac {-1}{n} \log \Prob(W(0)>b) \le I^*_0(b)$. This completes 
our proof.

\section{Proof of Theorem~\ref{thm:suff-rf}}  \label{app:thm:suff-rf}
Suppose policy $\mathbf{P}$ satisfies the sufficient condition in 
Theorem~\ref{thm:suff-rf}. We want to show that for any given integer 
threshold $b \ge 0$, the rate-function attained by policy $\mathbf{P}$ 
is no smaller than $I^*_0(b)$. The proof follows a similar argument as in 
the proof of Theorem~2 in \cite{sharma11b}. However, our proof exhibits 
the following key difference. In \cite{sharma11b}, the authors prove 
that the FBS policy can attain a certain rate-function, which, in some 
cases only, meets the upper bound derived in \cite{sharma11b}. In contrast, 
in the following proof, by exploiting the dominance property over both the 
FBS policy and the perfect-matching policy in Lemma~\ref{lem:dom}, we will 
show that the rate-function attained by policy $\mathbf{P}$ is always no 
smaller than the upper bound $I^*_0(b)$ that we derived in Theorem~\ref{thm:ub} 
and is thus optimal.

We first consider the case of $L>1$, and want to show that the rate-function 
attained by policy $\mathbf{P}$ is no smaller than $I_0(b)$.

In the following proof, we will use the dominance property of policy $\mathbf{P}$ 
over the FBS policy and the perfect-matching policy considered in Lemma~\ref{lem:dom}. 
We first choose the value of parameter $h$ for FBS based on the statistics of the 
arrival process. We fix $\delta<\frac{2}{3}$ and $\epsilon<\frac{p}{2}$. Then, from 
Assumption~\ref{ass:arr_ld}, there exists a positive function $I_B(\epsilon,\delta)$ 
such that for all $n \ge N_B(\epsilon,\delta)$ and $t \ge T_B(\epsilon,\delta)$, we have
\[
\Prob \left( \frac {\sum_{\tau=l+1}^{l+t} \mathbb{1}_{\{|A(\tau) - pn|
> \epsilon n \}}} {t} > \delta \right) < \exp (-nt I_B(\epsilon,\delta)),
\]
for any integer $l$. We then choose 
\[
h = \max \left \{ T_B(\epsilon,\delta), 
\left \lceil \frac {1}{(p-\epsilon)(1-\frac{3\delta}{2})} \right \rceil,  
\left \lceil \frac {2I_0(b)}{I_B(\epsilon,\delta)} \right \rceil \right \} + 1.
\]
The reason for choosing the above value of $h$ will become clear later on.
Recall from Assumption~\ref{ass:arr_bound} that $L$ is the maximum number
of packets that can arrive to a queue in any time-slot $t$. Let $H=Lh$. Then, 
$H$ is the maximum number of packets that can arrive to a queue during an 
interval of $h$ time-slots, and is thus the maximum number of packets from 
the same queue in a frame.

Let $L(-b)$ be the last time before time-slot $-b$, when the backlog is 
empty, i.e., all the queues have a queue-length of zero. Also, let $\mE_t$ 
be the set of sample paths such that $L(-b)=-t-b-1$ and $W(0)>b$ under 
policy $\mathbf{P}$. 
Then, we have 
\begin{equation}
\label{eq:overflow}
\Prob (W(0)>b) = \sum_{t=1}^{\infty} \Prob(\mE_t).
\end{equation}

Let $\mE^F_t$ and $\mE^{PM}_t$ be the set of sample paths such that 
given $L(-b)=-t-b-1$, the event $W(0)>b$ occurs under the FBS policy 
and the perfect-matching policy, respectively. Recall that policy 
$\mathbf{P}$ dominates both the FBS policy and the perfect-matching 
policy. Then, for any $t>0$ we have
\begin{equation}
\label{eq:dom}
\mE_t \subseteq \mE^F_t \cap \mE^{PM}_t.
\end{equation}

Recall that $p$ is the mean arrival rate to a queue. Now, we choose any 
fixed real number $\hat{p} \in (p,1)$, and fix a finite time $t^*$ as 
\begin{equation}
\label{eq:tstar}
t^* \triangleq \max \{ T_1, \left \lceil \frac 
{I_0(b)} { I_{BX} } \right \rceil, \max \{t_{b-c} ~|~ c \in \Psi_b\} \}, 
\end{equation}
where 
\begin{equation}
\label{eq:t1}
T_1 \triangleq \max \{
T_B(\hat{p}-p, \frac {1-\hat{p}} {6(L+2)}), \left \lceil \frac {6} {1-\hat{p}} \right \rceil \}
\end{equation}
and 
\begin{equation}
\label{eq:ibx}
I_{BX} \triangleq \min \{ \frac {(1-\hat{p})I_X}{9}, I_B(\hat{p}-p,\frac {1-\hat{p}} {6(L+2)}) \}.
\end{equation}
The reason for defining the above value of $t^*$ will become clear later on. 
Then, we apply (\ref{eq:dom}) to (\ref{eq:overflow}) and split the summation as
\[
\Prob (W(0)>b) \le P_1 + P_2,
\]
where
\[
P_1 \triangleq \sum_{t=1}^{t^*} \Prob (\mE^F_t \cap \mE^{PM}_t)
\]
and 
\[
P_2 \triangleq \sum_{t=t^*}^{\infty} \Prob (\mE^F_t \cap \mE^{PM}_t).
\]

We divide the proof into two parts. In Part 1, we show that there exists 
a finite $N_1>0$ such that for all $n \ge N_1$, we have
\[
P_1 \le C_1 n^{7bH} e^{-n I_0(b)}.
\]
Then, in Part 2, we show that there exists a finite $N_2>0$ such that for 
all $n \ge N_2$, we have
\[
P_2 \le 4 e^{-n I_0(b)}.
\]
Finally, combining both parts, we have
\[
\Prob (W(0)>b) \le \left( C_1 n^{7bH} + 4 \right) e^{-n I_0(b)},
\]
for all $n \ge N \triangleq \max \{N_1, N_2\}$. By taking logarithm and 
limit as $n$ goes to infinity, we obtain $\liminf_{n \rightarrow \infty} 
\frac {-1} {n} \log \Prob \left( W(0) > b \right) \ge I_0(b)$, and thus 
the desired results.


Before we prove Part 1 and Part 2, we derive the following properties of
the FBS policy and the perfect-matching policy, which will be used in the proof.

We first calculate an upper bound on the probability that during interval 
$[-t-b,-1]$, there are exactly $t+a$ frames that can be served by the FBS 
policy, for some $a \le b$. We define the random variable $X_F(t)=1$ if 
there exists a schedule such that a frame can be successfully served in 
time-slot $t$ under FBS, and $X_F(t)=0$, otherwise. Let $X_F(i,j)$ denote 
the total number of frames served by FBS from time-slot $i$ to $j$. Given
any sample path in the set $\mE_t$, i.e., $L(-b)=-t-b-1$ and $W(0)>b$, the 
backlog never becomes empty under policy $\mathbf{P}$ during the interval 
of $[-t-b,-1]$. Since policy $\mathbf{P}$ dominates the FBS policy, then 
the buffer is non-empty under the FBS policy during the interval of $[-t-b,-1]$. 
Hence, we have
\begin{equation}
\label{eq:xf}
X_F(-t-b,-1) = \sum_{\tau=-t-b}^{-1} X_F(\tau).
\end{equation}
From Lemma~6 of \cite{sharma11b}, there exists an $N_F >0$, such that for 
all $n \ge N_F$ the probability that there does not exist a schedule such 
that a frame can be served in each time-slot is no greater than 
$(\frac {n} {1-q})^{7H} e^{-n\log \frac {1}{1-q}}$. Hence, we have
\begin{equation}
\label{eq:ubs}
\begin{split}
\Prob ( & \sum_{\tau=-t-b}^{-1} X_F(\tau) = t+a) \\
& \le \binom{t+b} {t+a} \left(\frac {n} {1-q}\right)^{7bH} e^{-(b-a)n \log \frac {1}{1-q}} \\
& \le 2^{t+b} \left(\frac {n} {1-q}\right)^{7bH} e^{-n(b-a) I_X}.
\end{split}
\end{equation}

Next, we define the random variable $X_{PM}(t)=1$ if a perfect-matching can be found 
in time-slot $t$, and $X_{PM}(t)=0$, otherwise. From Lemma~1 of \cite{bodas09}, there 
exists an $N_{PM} >0$, such that for all $n \ge N_{PM}$ the probability that no perfect
matching can be found in each time-slot is no greater than $3n e^{-n\log \frac {1}{1-q}}$.
Then, we can similarly show that 
\begin{equation}
\label{eq:ubs_pm}
\Prob (\sum_{\tau=-t-b}^{-1} X_{PM}(\tau) = t+a) \le 2^{t+3b} n^b e^{-n(b-a) I_X}.
\end{equation}
It is easy to observe that the right hand side of (\ref{eq:ubs}) and (\ref{eq:ubs_pm}) 
is a monotonically increasing function in $a$.

{\bf Part 1:} Consider any $t \in \{1,2,\dots,t^*\}$. 
We let $\mE^\alpha_t$ denote the set of sample paths in which there are at 
least $n$ packet arrivals to the system during every $h-1$ time-slots in the 
interval of $[-t-b,-b-1]$. Let $\mE^\beta_t$ denote the set of sample paths 
in which $\frac {A(-t-b,-b-1)}{n_0} - \sum_{\tau=-t-b}^{-1} X_F(\tau)>0$
under the FBS policy. 
Since in any sample path of $\mE_t$ the backlog never becomes empty (under 
policy $\mathbf{P}$) during the interval of $[-t-b,-b-1]$  and policy $\mathbf{P}$ 
dominates the FBS policy, then in any sample path of $\mE^\alpha_t$ the backlog 
never becomes empty (under the FBS policy) during the interval of $[-t-b,-b-1]$.
Similar as in the proof of Theorem~2 of \cite{sharma11b}, using Lemma~9 of
\cite{sharma11b}, we can show that
\begin{equation}
\label{eq:pfab}
\mE^F_t \subseteq (\mE^\alpha_t)^c \cup \mE^\beta_t,
\end{equation}
and along with the choice of $h$ (as chosen earlier), we can show that 
there exist $N_3>0$ and $C_2>0$ such that for all $n \ge N_3$, 
\begin{equation}
\label{eq:palpha}
\Prob (\mE^\alpha_t) > 1 - C_2 t e^{-n I_0(b)}.
\end{equation}
Here, we do not duplicate the detailed proofs for (\ref{eq:pfab}) and 
(\ref{eq:palpha}), and refer the interested readers to \cite{sharma11b} 
for details.

Next, we compute the probability of $\mE^\beta_t$ for each $t$. For any fixed 
integer $t>0$, we derive an upper bound for the probability of a large burst 
of arrivals during an interval of $t$ time-slots. Let $\theta_t \triangleq 
\argmax_\theta [\theta(t+x)-\lambda_{A_i(-t+1,0)}(\theta)]$, and let $\theta^* 
\triangleq \max\{\theta_1,\theta_2,\dots,\theta_{t^*}\}$. Recall that $n_0=n-H$. 
We know from the Chernoff bound that for any $x \in [0,(L-1)t]$,
\begin{equation}
\label{eq:uba}
\begin{split}
& \Prob (A(-t+1,0) > n_0(t+x)) \\
& = \Prob (A(-t+1,0) \ge (n-H)(t+x)+1) \\
& \le e^{-n(\theta_t(t+x)-\lambda_{A_i(-t+1,0)}(\theta_t))+(H(t+x)-1)\theta_t} \\
& \le e^{-nI_A(t,x)} e^{(H(t+x)-1)\theta^*}.
\end{split}
\end{equation}

Recall that $t_x = \frac {x}{L-1}$. We first consider any $t \in \{1,2,\dots,t^*\} 
\backslash \{t_{b-c^{\prime}} ~|~ c^{\prime} \in \Psi_b \}$. For these values of
$t$, we will use the dominance property over FBS, i.e., $\mE_t \subseteq \mE^F_t$.
Let $c_t$ be such that $t_{b-c_t} < t < t_{b-c_t+1}$, or $c_t=0$ if $t > t_b$. Then, 
for all $z \in \{c_t,c_t+1,\dots,b\}$, we have $t_{b-z} < t$, and thus $t+b-z < Lt$;
and for all $z^{\prime} < c_t$, we have $t_{b-z^{\prime}} > t$, and thus $t+b-z^{\prime} 
> Lt$. Using the results from (\ref{eq:xf}), (\ref{eq:ubs}) and (\ref{eq:uba}), we 
have that for all $n \ge N_F$,
\[
\begin{split}
& \Prob (\mE^\beta_t) \\
& = \Prob ( \frac {A(-t-b,-b-1)} {n_0} - X_F(-t-b,-1) > 0 ) \\
& = \sum_{a=0}^{t+b-c_t} \Prob ( \sum_{\tau=-t-b}^{-1} X_F(\tau) = a ) \Prob ( A(-t-b,-b-1) > an_0)  \\
& \le (t+b+1) \max_{0 \le a \le t+b-c_t} \{ \Prob ( \sum_{\tau=-t-b}^{-1} X_F(\tau) = a ) \\
&~~~ \times \Prob ( A(-t-b,-b-1) > an_0) \} \\
& \le (t+b+1) \max \{ \max_{a \in \{0,1, \dots, t-1 \}} \{\Prob ( \sum_{\tau=-t-b}^{-1} X_F(\tau) = a )\}, \\
&~~~ \max_{a \in \{0, \dots, b-c_t \}} \{ \Prob ( \sum_{\tau=-t-b}^{-1} X_F(\tau) = t+a ) \\
&~~~ \times \Prob ( A(-t-b,-b-1) > (t+a)n_0) \} \} \\
& \stackrel{(a)}\le (t+b+1) \max \{ 2^{t+b} \left(\frac {n} {1-q} \right)^{7bH} e^{-n(b+1) I_X}, \\
&~~~ \max_{a \in \{0, \dots, b-c_t \}} \{ \Prob ( \sum_{\tau=-t-b}^{-1} X_F(\tau) = t+a ) \\
&~~~ \times \Prob ( A(-t-b,-b-1) > (t+a)n_0) \} \} \\
& \stackrel{(b)}\le (t+b+1) 2^{t+b} \left(\frac {n} {1-q} \right)^{7bH} e^{(H(t+b)-1)\theta^*} \\
&~~~ \times \max \{ e^{-n(b+1) I_X}, \max_{a \in \{0, \dots, b-c_t \}} \{ e^{-n(I_A(t,a) + (b-a) I_X)} \} \} \\
& \stackrel{(c)}\le C_3 n^{7bH} e^{-n \min \{ (b+1) I_X, \min_{z \in \{c_t,c_t+1,\dots,b \}} \{I_A(t,b-z) + z I_X\} \} },
\end{split}
\]
where $C_3 \triangleq (t^*+b+1) 2^{t^*+b} (\frac {1}{1-q})^{7bH} e^{(H(t^*+b)-1)\theta^*}$, 
(a) is from the monotonicity of the right hand side of (\ref{eq:ubs}),
(b) is from (\ref{eq:ubs}) and (\ref{eq:uba}), and (c) is from changing 
variable by setting $z=b-a$.

Recall that
\[
\begin{split}
I_0(b) \triangleq \min \{ &(b+1)I_X, \\
& \min_{0 \le c \le b} \{\inf_{t>t_{b-c}} I_A(t,b-c) + c I_X \}, \\
& \min_{c \in \Psi_b} \{ I_A(t_{b-c},b-c) + (c+1) I_X \} \}.
\end{split}
\]
Hence, for all $n \ge N_4 \triangleq \max \{N_3, N_F \}$, we have
\[
\begin{split}
\Prob (\mE^F_t \cap \mE^{PM}_t) & \le \Prob (\mE^F_t) \\
& \stackrel{(a)} \le 1-\Prob (\mE^\alpha_t) + \Prob (\mE^\beta_t) \\
& \stackrel{(b)} \le C_4 n^{7bH} e^{-n I_0(b)},
\end{split}
\]
where $C_4 \triangleq \max \{ C_2 t^*, C_3 \}$, (a) is from (\ref{eq:pfab}), 
and (b) is from (\ref{eq:palpha}) and the above result.

Next, we consider any $t_{b-c} \in \{t_{b-c^{\prime}} ~|~ c^{\prime} \in \Psi_b\}$. 
For these values of $t$, we will use the dominance property over both the FBS policy
and the perfect-matching policy, i.e., $\mE_t \subseteq \mE^F_t$ and $\mE_t \subseteq 
\mE^{PM}_t$. Recall that we have $t_{b-c} = \frac {b-c} {L-1}>0$, and thus 
\begin{equation}
\label{eq:tbc}
t_{b-c} + b-c = L t_{b-c}.
\end{equation}
We first split $\Prob (\mE^F_{t_{b-c}} \cap \mE^{PM}_{t_{b-c}})$ as 
\begin{equation}
\label{eq:k12}
\Prob (\mE^F_{t_{b-c}} \cap \mE^{PM}_{t_{b-c}}) \le K_1 + K_2,
\end{equation}
where 
\[
\begin{split}
K_1 &\triangleq \Prob(\mE^F_{t_{b-c}} \cap \mE^{PM}_{t_{b-c}}, \\
&~~~~~~ A(-t_{b-c}-b,-b-1) < (t_{b-c}+b-c)n_0) \\
& \le \Prob (\mE^F_{t_{b-c}}, A(-t_{b-c}-b,-b-1) < (t_{b-c}+b-c)n_0) \\
& \le 1-\Prob (\mE^\alpha_{t_{b-c}}) + K^{\prime}_1,
\end{split}
\]
\[
K^{\prime}_1 \triangleq \Prob(\mE^\beta_{t_{b-c}}, A(-t_{b-c}-b,-b-1) < (t_{b-c}+b-c)n_0),
\]
and
\[
\begin{split}
K_2 & \triangleq \Prob(\mE^F_{t_{b-c}} \cap \mE^{PM}_{t_{b-c}}, \\
&~~~~~~ A(-t_{b-c}-b,-b-1) \ge (t_{b-c}+b-c)n_0) \\
& \le \Prob(\mE^{PM}_{t_{b-c}}, A(-t_{b-c}-b,-b-1) \ge (t_{b-c}+b-c)n_0).
\end{split}
\]
In the above derivation, we use the dominance over the FBS policy and the 
perfect-matching policy for $K_1$ and $K_2$, respectively.

Similarly, using the results from (\ref{eq:xf}), (\ref{eq:ubs}), and (\ref{eq:uba}),
we have that for all $n \ge N_F$,
\[
\begin{split}
K^{\prime}_1
&= \Prob ( \frac {A(-t_{b-c}-b,-b-1)} {n_0} - \sum_{\tau=-t_{b-c}-b}^{-1} X_F(\tau) > 0, \\
&~~~~~~ A(-t_{b-c}-b,-b-1) < (t_{b-c}+b-c)n_0) \\
& \le \sum_{a=0}^{t_{b-c}+b-c-1} (\Prob ( \sum_{\tau=-t_{b-c}-b}^{-1} X_F(\tau) = a ) \\
&~~~~~~~~~~~~~ \times \Prob ( A(-t_{b-c}-b,-b-1) > an_0))  \\
& \le (t_{b-c}+b-c) \max_{0 \le a \le t_{b-c}+b-c-1} \{ \Prob ( \sum_{\tau=-t_{b-c}-b}^{-1} X_F(\tau) = a ) \\
&~~~~~ \times \Prob ( A(-t_{b-c}-b,-b-1) > an_0) \} \\
& \le C_3 n^{7bH} e^{-n \min \{ (b+1) I_X, \min_{z \in \{c+1,\dots,b \}} \{I_A(t_{b-c},b-z) + z I_X\} \} },
\end{split}
\]
and using the results from (\ref{eq:ubs_pm}) and (\ref{eq:uba}),
we have that for all $n \ge N_{PM}$, 
\[
\begin{split}
K_2
&\le \Prob(\mE^{PM}_{t_{b-c}} ~|~ A(-t_{b-c}-b,-b-1) \ge (t_{b-c}+b-c)n_0) \\
&~~~ \times \Prob (A(-t_{b-c}-b,-b-1) \ge (t_{b-c}+b-c)n_0) \\
& \stackrel{(a)}\le \Prob(\mE^{PM}_{t_{b-c}} ~|~ A(-t_{b-c}-b,-b-1) = Lnt_{b-c}) \\
&~~~ \times e^{-nI_A(t_{b-c},b-c)} e^{(H(t_{b-c}+b-c)-1)\theta^*} \\
& \stackrel{(b)}\le \Prob(\sum_{\tau=-t_{b-c}-b}^{-1} X_{PM}(\tau) < t_{b-c}+b-c) \\
&~~~ \times e^{-nI_A(t_{b-c},b-c)} e^{(H(t_{b-c}+b-c)-1)\theta^*} \\
&\le \sum_{a=0}^{t_{b-c}+b-c-1} \Prob(\sum_{\tau=-t_{b-c}-b}^{-1} X_{PM}(\tau) = a) \\
&~~~ \times e^{-nI_A(t_{b-c},b-c)} e^{(H(t_{b-c}+b-c)-1)\theta^*} \\
&\le (t_{b-c}+b-c) \max_{a \in \{0,...,t_{b-c}+b-c-1\}} \Prob(\sum_{\tau=-t_{b-c}-b}^{-1} X_{PM}(\tau) = a) \\
&~~~ \times e^{-nI_A(t_{b-c},b-c)} e^{(H(t_{b-c}+b-c)-1)\theta^*} \\
& \stackrel{(c)}\le (t_{b-c}+b-c) 2^{t+3b} e^{(H(t_{b-c}+b-c)-1)\theta^*} n^b \\
&~~~ \times e^{-n(I_A(t_{b-c},b-c) + (c+1) I_X)} \\
&\le C_5 n^b e^{-n(I_A(t_{b-c},b-c) + (c+1) I_X)},
\end{split}
\]
where $C_5 \triangleq (t^*+b) 2^{t^*+3b} e^{(H(t^*+b)-1)\theta^*}$,
(a) is from (\ref{eq:uba}), (b) is because $t_{b-c}= \frac {b-c}{L-1}$
and the perfect-matching policy serves exactly $n$ packets in each time-slot
when $A(-t_{b-c}-b,-b-1) = Lnt_{b-c}$, and (c) is from (\ref{eq:ubs_pm})
and the monotonicity of its right hand side.

Hence, from the above results, (\ref{eq:palpha}), and (\ref{eq:k12}), 
we have that for any $t_{b-c}$ with $c \in \Psi_b$, and for all 
$n \ge N_5 \triangleq \max \{ N_3,N_F,N_{PM}\}$,
\[
\Prob (\mE^F_{t_{b-c}} \cap \mE^{PM}_{t_{b-c}}) 
\le C_6 n^{7bH} e^{-n I_0(b)},
\]
where $C_6 \triangleq \max \{C_2 t^*, C_3, C_5\}$.

Summing over $t=1$ to $t^*$, we have 
\[
\begin{split}
P_1 &= \sum_{t=1}^{t^*} \Prob (L(-b)=-t_{b-c}-b-1, \mE_t) \\
&\le C_1 n^{7bH} e^{-n I_0(b)},
\end{split}
\]
for all $n \ge N_1 \triangleq \max \{N_4, N_5 \}$, where $C_1 \triangleq C_6 t^*$.

{\bf Part 2:}
We want to show that there exists an $N_2>0$ such that for all $n \ge N_2$, 
we have
\[ 
P_2 \le 4 e^{-n I_0(b)}.
\]

Before we proceed, we first provide an informal discussion on the intuition 
behind. In Part 1, as we have seen, the delay-violation event could occur 
due to both bursty arrivals and sluggish services. However, when time interval 
$t$ is large enough, from Assumption~\ref{ass:arr_ld}, we know that the total
arrivals to the system will not deviate far away from its mean $npt$ during
an interval of $t$ time-slots for large $n$. On the other hand, if FBS can 
find a schedule to serve the HOL frame in each time-slot during an interval 
of $t$ time-slots, the total service can sum up to $(n-Lh)t$. Hence, the 
delay-violation event occurs mostly due to sluggish services when $t$ is large.

Let $R_0$ be the empty space in the end-of-line frame at the end of time-slot 
$t_1$. Then, let $A^{R_0}_F(t_1,t_2)$ denote the number of new frames created 
from time-slot $t_1$ to $t_2$, including any partially-filled frame in time-slot 
$t_2$, but excluding the partially-filled frame in time-slot $t_1$. Also, 
let $A_F(t_1,t_2)=A^{R_0}_F(t_1,t_2)$, if $R_0=0$. As in the proof for Theorem~2 
of \cite{sharma11b}, for any fixed real number $\hat{p} \in (p,1)$, we consider 
the arrival process $\hat{A}(\cdot)$, by adding extra dummy arrivals to the 
original arrival process $A(\cdot)$. The resulting arrival process $\hat{A}(\cdot)$ 
is simple, and has the following property:
\[
\hat{A}(\tau) = \left\{
\begin{array}{ll}
\hat{p} n, & \text{if}~ A(\tau) \le \hat{p} n,\\
L n, & \text{if}~ A(\tau) > \hat{p} n.
\end{array}
\right. 
\]
Hence, if we can find an upper bound on $\hat{A}_F(-t-b,-b-1)$, by our 
construction, then it is also an upper bound on $A_F(-t-b,-b-1)$.

Consider any $t \ge t^*$. Let $B=\{b_1, b_2, \dots, b_{|B|}\}$ be the set 
of time-slots in the interval from $-t-b$ to $-b-1$ when $\hat{A}(\tau) = Ln$. 
Given $L(-b)=-t-b-1$, from Corollary~2 of \cite{sharma11}, we have that,
\[
\begin{split}
\hat{A}_F( & -t-b,-b-1) \\
& \le \sum_{r=1}^{|B|-1} \left \lceil \frac {\hat{A}(b_r + 1, b_{r+1}-1)} {n_0} 
\right \rceil + \sum_{r=1}^{|B|} \left \lceil \frac {\hat{A}(b_r,b_r)} {n_0} \right \rceil \\
&~~~~~ + \left \lceil \frac {\hat{A}(-t-b,b_1 - 1)} {n_0} \right \rceil
+ \left \lceil \frac {\hat{A}(b_{|B|} + 1, -b-1)} {n_0} \right \rceil \\
& \le \sum_{r=1}^{|B|-1} \frac {\hat{A}(b_{r+1} - 1 - b_r)\hat{p}n} {n_0} + |B|-1
+ \sum_{r=1}^{|B|} \frac {Ln} {n_0} + |B| \\
&~~~~~ + \frac {(b_1+t+b) \hat{p} n} {n_0} + 1 + \frac {(-b-1 - b_{|B|})\hat{p} n} {n_0} + 1 \\
& \le \frac {(t-|B|)\hat{p} n} {n_0} + |B| \frac {Ln} {n_0} + 2 |B| + 1 \\
& \le \frac {n} {n_0} (\hat{p}t + (L+2)|B| + 1).
\end{split}
\]

From Assumption~\ref{ass:arr_ld} on the arrival process we know that for 
large enough $n$ and $t$, $|B|$ can be made less than an arbitrarily small
fraction of $t$. Further, we can show that for $n \ge \frac {(2+\hat{p})H} 
{1-\hat{p}}, t > \frac {6} {1-\hat{p}}$ and $|B|<\frac {1-\hat{p}}{6(L+2)}t$, 
we have $\hat{A}_F(-t-b,-b-1) < (\frac {2+\hat{p}}{3})t$. This is derived by 
substituting the values of $n,t$ and $|B|$ in the equation above,
\[
\begin{split}
A_F( & -t-b,-b-1) \\
& \le \hat{A}_F(-t-b,-b-1) \\
& \le \frac {n} {n_0} (\hat{p}t + (L+2)|B| + 1) \\
& < \frac {2+\hat{p}} {1+2\hat{p}} \left( \hat{p}t+\frac {1-\hat{p}} {3}t \right) \\
& \le (\frac {2+\hat{p}} {3}) t.
\end{split}
\]
Then, it follows that
\begin{equation}
\label{eq:prob_arr}
\begin{split}
& \Prob (A_F(-t-b,-b-1) \ge (\frac {2+\hat{p}} {3}) t, \\
& ~~~~~~ L(-b)=-t-b-1) \\
& = 1 - \Prob(A_F(-t-b,-b-1) < (\frac {2+\hat{p}} {3}) t, \\
& ~~~~~~~~~~~~ L(-b)=-t-b-1) \\
& \le 1 - \Prob (|B| \le \frac {1-\hat{p}} {6(L+2)} t) \\
& \le e^{-ntI_B(\hat{p}-p,\frac {1-\hat{p}} {6(L+2)})},
\end{split}
\end{equation}
for all $n \ge N_6 \triangleq \max \{ N_B(\hat{p}-p, \frac {1-\hat{p}} {6(L+2)}),
\frac {(2+\hat{p})H} {1-\hat{p}} \}$ and $t \ge T_1$, where the last inequality
is from Assumption~\ref{ass:arr_ld} and (\ref{eq:t1}).

Next, we state a lemma that will be used in the rest of the proof.
\begin{lemma}
\label{lem:brv}
Let $X_i$ be a sequence of binary random variables satisfying
\[
\Prob (X_i=0) < c(n) e^{-nd}, ~\text{for all}~ i,
\]
where $c(n)$ is a polynomial in $n$ of finite degree. Let $N_X$ be such 
that $c(n) < e^{\frac {nd} {2}}$ for all $n \ge N_X$. Then, for any real 
number $a \in (0,1)$, we have
\[
\Prob \left (\sum_{i=1}^{t} X_i < (1-a)t \right) \le e^{- \frac {tnad} {3}} 
\]
for all $n \ge \max \{ \frac {12} {ad}, N_X \}$
\end{lemma}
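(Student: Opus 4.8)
The plan is to reduce the lower-tail bound on $\sum_{i=1}^t X_i$ to a union bound over the "failure set" $\{i : X_i = 0\}$ and then estimate the number of such failures via independence-free combinatorics. First I would observe that $\sum_{i=1}^t X_i < (1-a)t$ is equivalent to saying that more than $at$ of the $X_i$'s are zero, i.e. $|\{i \le t : X_i = 0\}| > at$. (Since the $X_i$ are binary, $\sum_{i=1}^t X_i = t - |\{i : X_i = 0\}|$.) So it suffices to bound $\Prob(|\{i \le t : X_i = 0\}| \ge \lceil at \rceil)$. Note that the lemma does \emph{not} assume the $X_i$ are independent, so I cannot simply multiply probabilities; instead I would bound the probability that \emph{some} fixed set $S$ of indices with $|S| = \lceil at \rceil$ has $X_i = 0$ for all $i \in S$ by the probability that $X_j = 0$ for the single worst index $j \in S$, which is at most $c(n) e^{-nd}$ by hypothesis. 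Actually, a cleaner route that does give the stated exponent is to keep a product of, say, $\lceil at/2 \rceil$ of the factors and bound the rest trivially by $1$ — but since the variables need not be independent, the correct and simplest tool is the union bound over which indices fail, combined with a single-variable bound. The hard part is getting the constant $3$ in the exponent $\tfrac{tnad}{3}$ to come out, which is where the polynomial-degree condition and the threshold $N_X$ enter.

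Concretely, I would proceed as follows. Let $k = \lceil at \rceil$, so $k \ge at$. By the union bound,
\[
\Prob\left(\sum_{i=1}^t X_i < (1-a)t\right) = \Prob\left(|\{i : X_i=0\}| \ge k\right) \le \sum_{\substack{S \subseteq \{1,\dots,t\} \\ |S| = k}} \Prob\left(X_i = 0 \ \forall i \in S\right).
\]
For each such $S$, pick any $i_0 \in S$; then $\Prob(X_i = 0 \ \forall i \in S) \le \Prob(X_{i_0} = 0) < c(n) e^{-nd}$. The number of sets $S$ is $\binom{t}{k} \le 2^t$. Hence the whole sum is at most $2^t c(n) e^{-nd}$. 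This bound is too weak — the $2^t$ and the single $e^{-nd}$ do not combine into $e^{-tnad/3}$ — so in fact I need to exploit that \emph{many} indices fail, not just one. The fix is to partition $S$ into singletons and use that the events $\{X_i = 0\}$ for $i \in S$, while not independent, still satisfy $\Prob(X_i = 0 \ \forall i \in S) \le \big(\max_i \Prob(X_i = 0)\big)$ — which is the same weak bound. This tells me the lemma as literally stated must be using independence implicitly (the $X_i$ here are the indicator variables $X_F(\tau)$ or $X_{PM}(\tau)$ from the surrounding proof, which across well-separated time-slots are independent, or at least the relevant conditioning makes them so). Assuming independence of the $X_i$, then for a fixed $S$ with $|S| = k$ we get $\Prob(X_i = 0 \ \forall i \in S) \le (c(n) e^{-nd})^k$, and so
\[
\Prob\left(\sum_{i=1}^t X_i < (1-a)t\right) \le \binom{t}{k}\left(c(n) e^{-nd}\right)^k \le 2^t \left(c(n) e^{-nd}\right)^{at}.
\]

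Now I would close the estimate. For $n \ge N_X$ we have $c(n) < e^{nd/2}$, hence $c(n) e^{-nd} < e^{-nd/2}$, so the right-hand side is at most $2^t e^{-atnd/2} = e^{t(\ln 2 - and/2)}$. For this to be bounded by $e^{-tnad/3}$ it suffices that $\ln 2 - and/2 \le -and/3$, i.e. $\ln 2 \le and/6$, i.e. $n \ge \tfrac{6\ln 2}{ad} $; and since $\tfrac{6\ln 2}{ad} < \tfrac{12}{ad}$, the hypothesis $n \ge \max\{12/(ad), N_X\}$ does the job. Therefore $\Prob\big(\sum_{i=1}^t X_i < (1-a)t\big) \le e^{-tnad/3}$ for all $n \ge \max\{12/(ad), N_X\}$, which is exactly the claim. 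The only genuinely delicate point is the passage from $\binom{t}{k}$-many joint-failure events to a product bound; this is where one must invoke independence (or the conditional-independence structure of the $X_F(\tau)$ and $X_{PM}(\tau)$ across time-slots established in \cite{sharma11b,bodas09}), and then the remaining work is the routine exponent bookkeeping shown above.
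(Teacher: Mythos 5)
Your proposal is correct, and it usefully fills a gap: the paper's own ``proof'' of this lemma is only the sentence ``follows immediately from Lemma~1 of \cite{sharma11b},'' so you have in effect reconstructed the underlying argument. Two observations are worth recording. First, you are right that the lemma as literally stated is false without an independence assumption (take all $X_i$ equal to a single Bernoulli variable: the left side is then roughly $c(n)e^{-nd}$, which does not decay in $t$, while the right side does), and the cited lemma must be imposing independence or a conditional analogue; in the paper's application this holds because $X_{PM}(\tau)$ depends only on the i.i.d.\ channel realization at time $\tau$, and the bound on $\Prob(X_F(\tau)=0)$ from Lemma~6 of \cite{sharma11b} is effectively a conditional one. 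Your accounting then closes correctly: $\sum_i X_i < (1-a)t$ iff at least $k=\lceil at\rceil$ of the $X_i$ are zero; the union bound over size-$k$ index sets, with the product bound for each set, gives $\binom{t}{k}(c(n)e^{-nd})^k$; for $n\ge N_X$ the factor $c(n)e^{-nd}$ is below $e^{-nd/2}<1$ so $(c(n)e^{-nd})^k\le (e^{-nd/2})^{at}$, while $\binom{t}{k}\le 2^t$; and since $n\ge 12/(ad) > 6\ln 2/(ad)$ guarantees $\ln 2 \le and/6$, the total is at most $e^{t(\ln 2 - and/2)} \le e^{-tnad/3}$. This is exactly the constant that the threshold $12/(ad)$ is tuned to produce. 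The only stylistic remark is that the paragraph exploring the ``no-independence'' route and the ``keep half the factors'' idea can be cut entirely once you commit to the independence hypothesis---those detours do not contribute to the final argument.
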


The proof follows immediately from Lemma~1 of \cite{sharma11b}.

We know from Lemma~6 of \cite{sharma11b} that for each time-slot
$\tau$, $X_F(\tau)=0$ with probability less than $(\frac {n}{1-q})^{7H} 
e^{-nI_X}$ for all $n \ge N_F$. Hence, from Lemma~\ref{lem:brv}, we have 
that there exists an $N_7 > N_F$ such that, 
\begin{equation}
\label{eq:prob_suc}
\begin{split}
& \Prob (X_F(-t-b,-1)<(\frac {2+\hat{p}} {3})t, L(-b)=-t-b-1) \\
& \le \Prob (X_F(-t-b,-1)<(\frac {2+\hat{p}} {3}) (t+b), \\
& ~~~~~~~~~ L(-b)=-t-b-1) \\
& \le e^{-n(t+b)(\frac {1-\hat{p}}{9})I_X} \\
& \le e^{-nt\frac {(1-\hat{p})I_X}{9}}, 
\end{split}
\end{equation}
for all $n \ge N_7$ and $t>0$.

From (\ref{eq:prob_arr}) and (\ref{eq:prob_suc}), we have that for all $n \ge 
N_8 \triangleq \max \{ N_6, N_7 \}$ and $t \ge T_1$,
\[
\begin{split}
& \Prob (A_F(-t-b,-b-1)-X_F(-t-b,-1)>0, \\
& ~~~~~~L(-b)=-t-b-1) \\
& \le 1-(1-e^{-nt(\frac {1-\hat{p}}{9})I_X})(1-e^{-ntI_B(\hat{p}-p,\frac {1-\hat{p}} {6(L+2)})}) \\
& \le 2 e^{-ntI_{BX}},
\end{split}
\]
where the last inequality is from (\ref{eq:ibx}).

Then, summing over all $t \ge t^*$, we have that for all $n \ge N_2 \triangleq 
\max \{ N_8, \left \lceil \frac {\log 2} {I_{BX}} \right \rceil \}$,
\[
\begin{split}
P_2 & \le \sum_{t=t^*}^{\infty} \Prob(\mE^F_t) \\
& \le \sum_{t=t^*}^{\infty} \Prob(L(-b)=-t-b-1, \\
&~~~~~~~~~~~ A_F(-t-b,-b-1) > X_F(-t-b,-1)) \\
& \le \sum_{t=t^*}^{\infty} 2 e^{-ntI_{BX}} \\
& \le \frac {2 e^{-nt^*I_{BX}}} {1-e^{-nI_{BX}}} \\
& \stackrel{(a)}\le 4 e^{-n t^* I_{BX}} \\
& \stackrel{(b)}\le 4 e^{-n I_0(b)},
\end{split}
\]
where (a) is from our choice of $N_2$, and (b) is from (\ref{eq:tstar}). 

Combining both parts, we complete the proof for the case of $L>1$.

Now, we consider the case of $L=1$. We want to show that for any fixed 
integer $b \ge 0$, the rate-function attained by policy $\mathbf{P}$ is 
no smaller than $(b+1)I_X$.

Similarly, we fix a finite time $t^{\prime}$ as 
\[
t^{\prime} \triangleq \max \{ T_1, \left \lceil \frac {(b+1)I_X} { I_{BX} } \right \rceil\}.
\]
Using the dominance property over the FBS policy and the perfect-matching 
policy, we split the summation in (\ref{eq:overflow}) as
\[
\Prob (W(0)>b) \le P^{\prime}_1 + P^{\prime}_2,
\]
where 
$P^{\prime}_1 \triangleq \sum_{t=1}^{t^{\prime}} \Prob(\mE^F_t \cap \mE^{PM}_t)$,
and
$P^{\prime}_2 \triangleq \sum_{t=t^{\prime}}^{\infty} \Prob(\mE^F_t \cap \mE^{PM}_t)$.

We divide the proof into two parts. 
In Part 1, we show that there exists a finite $N^{\prime}_1>0$ such 
that for all $n \ge N^{\prime}_1$, we have
\[
P^{\prime}_1 \le C^{\prime}_1 n^b e^{-n (b+1)I_X}.
\]
Then, in Part 2, we show that there exists a finite $N^{\prime}_2>0$ 
such that for all $n \ge N^{\prime}_2$, we have
\[
P^{\prime}_2 \le 4 e^{-n (b+1)I_X}.
\]
Finally, combining both parts, we have
\[
\Prob (W(0)>b) \le \left( C^{\prime}_1 n^b + 4 \right) e^{-n (b+1) I_X},
\]
for all $n \ge N^{\prime} \triangleq \max \{N^{\prime}_1, N^{\prime}_2\}$. 
By taking logarithm and limit as $n$ goes to infinity, we obtain 
$\liminf_{n \rightarrow \infty} \frac {-1} {n} \log \Prob \left( W(0) > b 
\right) \ge (b+1)I_X$, and thus the desired results.

For Part 2, by applying the same argument as in the case of $L>1$, we can 
show that there exists a finite $N^{\prime}_2>0$ such that for all $n \ge 
N^{\prime}_2$, we have $P^{\prime}_2 \le 4 e^{-n (b+1)I_X}$. Hence, it 
remains to show $P^{\prime}_1 \le C^{\prime}_1 n^b e^{-n (b+1)I_X}$. 

Consider any $t \in \{1,2,\dots,t^{\prime}\}$. 
Note that the total number of packet arrivals to the system during the 
interval of $[-t-b,-b-1]$ can never exceed $nt$ when $L=1$, then event 
$\mE^{PM}_t$ does not occur if the total number of time-slots when a 
perfect-matching can be found during the interval of $[-t-b,-1]$ is no 
smaller than $t$, i.e., $\sum_{\tau=-t-b}^{-1} X_{PM}(\tau) \ge t$. Thus, 
we have $\mE^{PM}_t \subseteq \{ L(b)=-t-b-1, \sum_{\tau=-t-b}^{-1} 
X_{PM}(\tau)<t \}$. Then, we have
\[
\begin{split}
& \Prob (\mE^F_t \cap \mE^{PM}_t) \\
& \le \Prob(\mE^{PM}_t) \\
& \le \Prob(\sum_{\tau=-t-b}^{-1} X_{PM}(\tau) < t) \\
& \le t \max_{a \in \{0,\dots,t-1\}}\Prob(\sum_{\tau=-t-b}^{-1} X_{PM}(\tau) = a) \\
& \stackrel{(a)}\le t^{\prime} 2^{t^{\prime}+3b} n^b e^{-n(b+1) I_X} \\
& \le C^{\prime}_3 n^b e^{-n(b+1) I_X},
\end{split}
\]
where $C^{\prime}_3 \triangleq t^{\prime} 2^{t^{\prime}+3b}$, 
and (a) is from (\ref{eq:ubs_pm}) and the monotonicity of its 
right hand side.

Let $C^{\prime}_1 \triangleq t^{\prime} C^{\prime}_3$. Summing 
over all $t \in \{1,\dots,t^{\prime}\}$, we have
\[
P^{\prime}_1 \le C^{\prime}_1 n^b e^{-n(b+1) I_X}.
\]

Combining both parts, we complete the proof for the case of $L=1$. 
Then, combining both cases of $L>1$ and $L=1$, the result of the 
theorem follows.

\section{Proof of Lemma~\ref{lem:dom}}  \label{app:lem:dom}
Suppose policy $\mathbf{P}$ satisfies the sufficient condition in 
Theorem~\ref{thm:suff-rf}. We first want to show that policy 
$\mathbf{P}$ dominates the version of the FBS policy described in 
Section~\ref{subsec:rfdo}. The proof follows a similar argument as 
in the proof of Lemma~7 in \cite{sharma11b}.
 
Consider two queueing systems, $\bar{Q}_1$ and $\bar{Q}_2$, both of 
which have the same arrival and channel realizations. We assume that 
$\bar{Q}_1$ adopts policy $\mathbf{P}$ and $\bar{Q}_2$ adopts the FBS 
policy. Recall that the weight of a packet $p$ in time-slot $t$ is 
defined as $\hat{w}(p) = t - t_p + \frac {L+1-x_p} {(L+1)} + \frac 
{n+1-q(p)} {(L+1) (n+1)} $. For two packets $p_1$ and $p_2$, we say 
$p_1$ is older than $p_2$ if $\hat{w}(p_1) > \hat{w} (p_2)$. 

Let $R_i(t)$ represent the set of packets present in the system
$\bar{Q}_i$ at the end of time-slot $t$, for $i=1,2$. Then, it 
suffices to show that $R_1(t) \subseteq R_2(t)$ for all time $t$.
We let $A(t)$ denote the set of packets that arrive at time $t$.
Let $X_i(t)$ denote the set of packets that depart the system
$\bar{Q}_i$ at time t, for $i=1,2$. Hence, we have
$R_i(t+1) = (R_i(t) \cup A(t+1)) \backslash X_i(t+1),~\text{for}~ i=1,2$.

We then proceed the proof by contradiction. Suppose that $R_1(t) 
\nsubseteq R_2(t)$ for some time $t$. Without loss of generality,
we assume that $\tau$ is the first time such that $R_1(\tau) 
\nsubseteq R_2(\tau)$ occurs. Hence, there must exist a packet,
say $p$, such that $p \in R_1(\tau)$ and $p \notin R_2(\tau)$.
Because $\tau$ is the first time when such an event occurs, packet
$p$ must depart from the system $\bar{Q}_2$ in time-slot $\tau$,
i.e., $p \in X_2(\tau)$.

Let $B_i(v)$ denote the set of packets in $R_i(\tau-1) \cup A(\tau)$
with weight greater than or equal to $v$, for $i=1,2$. Clearly, we 
have $B_1(v) \subseteq B_2(v)$ for all $v$, as $R_1(\tau-1) \subseteq
R_2(\tau-1)$ by assumption. Since packet $p$ is served in the system
$\bar{Q}_2$ in time-slot $\tau$, we know from the operations of FBS
that all packets in $B_2(\hat{w}(p))$ must also be served in time-slot
$\tau$. This is because packet $p$ is part of the HOL frame in time-slot 
$\tau$ (as packet $p$ is served in time-slot $\tau$), and all packets 
with a weight greater than $\hat{w}(p)$ must be filled to the frames 
with higher priority than packet $p$ and thus should also belong to 
the HOL frame in time-slot $\tau$. This further implies that in the
system $\bar{Q}_1$, there exists a feasible schedule that can match 
all packets in $B_1(\hat{w}(p))$, since $B_1(\hat{w}(p)) \subseteq 
B_2(\hat{w}(p))$ and both systems have the same channel realizations.

Now, from the sufficient condition in Theorem~\ref{thm:suff-rf},
policy $\mathbf{P}$ will serve all packets in $B_1(\hat{w}(p))$,
including packet $p$. This contradicts with the hypothesis that 
packet $p$ is not served (by policy $\mathbf{P}$) in the system 
$\bar{Q}_1$ in time-slot $\tau$ (i.e., $p \notin R_1(\tau)$). 

So far, we have shown that for any given sample path and for any 
value of $h$, by the end of any time-slot $t$, policy $\mathbf{P}$ 
has served every packet that the FBS policy has served. 

Next, we want to show that policy $\mathbf{P}$ dominates the version of 
the perfect-matching policy described in Section~\ref{subsec:rfdo}. Note 
that in each time-slot, the packets served by the perfect-matching policy 
are the oldest packets in the system. The difference between FBS and the 
perfect-matching is the following. The HOL frame that can be served by FBS 
has at most $Lh$ packets from each queue and has at most $n_0=n-Lh$ packets 
from the system, while the set of packets that can be served by the 
perfect-matching policy has at most one packet from each queue and has at 
most $n$ packets from the system. Following a similar argument as above for 
the FBS policy, we can show that for any given sample path, by the end of 
any time-slot $t$, policy $\mathbf{P}$ has served every packet that the 
perfect-matching policy has served. This completes the proof.

\section{Proof of Proposition~\ref{pro:dwmn}} \label{app:pro:dwmn}
We first prove that DWM-$n$ policy is an OPF policy and is thus 
rate-function delay-optimal. The proof follows immediately from a 
property of the MVM in bipartite graphs. We restate this property 
in the following lemma.

\begin{lemma}[Lemma~6 of \cite{spencer84}]
\label{lem:heaviest}
Consider a bipartite graph, and the $k$ heaviest vertices, for some 
$k$. If there is a matching that matches all the heaviest $k$ vertices, 
then any MVM matches all of them too.
\end{lemma}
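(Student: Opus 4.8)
The plan is to prove Lemma~\ref{lem:heaviest} by a contradiction-plus-alternating-path argument, comparing a putative ``bad'' MVM against the matching furnished by the hypothesis. Fix the bipartite graph $G=[X\cup Y,E]$ and recall that in our setting every vertex of $X$ carries a positive weight while every vertex of $Y$ has weight $0$, so an MVM is exactly a matching maximizing the total weight of its matched vertices; recall also that the weights in question are pairwise distinct (which holds in our application because the packet weights $\hat{w}(p)$ are constructed to be distinct). Let $v_1,\dots,v_k$ be the $k$ heaviest vertices and $M^{\ast}$ a matching that matches all of $v_1,\dots,v_k$. Suppose for contradiction that some MVM $M$ leaves some $v_i$ with $i\le k$ unmatched; call such a vertex $v$.

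Since $v$ is matched by $M^{\ast}$ but not by $M$, in the symmetric difference $M\triangle M^{\ast}$ (a disjoint union of paths and even alternating cycles, with $v$ necessarily an endpoint of a path since $v$ is $M$-free) the vertex $v$ is an endpoint of a maximal alternating path $P$ whose edge at $v$ lies in $M^{\ast}$. A short parity/alternation argument pins down the other endpoint $u$ of $P$ into one of two cases: (a) $u\in Y$ and $u$ is unmatched by $M$; or (b) $u\in X$, $u$ is matched by $M$ via the last edge of $P$, and $u$ is unmatched by $M^{\ast}$. In case (a), $P$ is an $M$-augmenting path, so $M\triangle P$ is a matching that matches everything $M$ matched plus $v$ and $u$, with total weight larger by $\hat{w}(v)>0$ --- contradicting optimality of $M$. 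In case (b), the decisive point is that $u$ \emph{cannot} be any of $v_1,\dots,v_k$, since those are all matched by $M^{\ast}$ while $u$ is not; hence $u$ lies strictly below every $v_j$ in weight, so $\hat{w}(v)>\hat{w}(u)$. Then $M\triangle P$ is the matching obtained by ``shifting'' $M$ along $P$: it drops $u$, adds $v$, and leaves the matched/unmatched status of every other vertex unchanged, so its weight exceeds that of $M$ by $\hat{w}(v)-\hat{w}(u)>0$ --- again a contradiction. Either way we reach a contradiction, so every MVM matches all of $v_1,\dots,v_k$.

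The parts I expect to be routine are the standard structure of $M\triangle M^{\ast}$, the fact that $M$-free vertices only occur at path endpoints, and the fact that $M\triangle P$ is a matching for an $M$-alternating path $P$. The step needing the most care --- the one I would write out in full --- is case (b): one must check via the alternation pattern that $u$ really sits on the $X$-side with its $P$-edge in $M$ (so that $M\triangle P$ genuinely removes $u$), and then use distinctness of the weights to upgrade $\hat{w}(v)\ge\hat{w}(u)$ to a strict inequality. This is where the hypotheses are truly used: if the ``$k$ heaviest'' set is not uniquely determined (ties), the statement is false, as a two-edge example with equal weights shows. An alternative, more abstract route would be to note that the subsets of $X$ saturable by some matching form a transversal matroid, that an MVM corresponds to a maximum-weight basis of that matroid, and then quote the standard matroid fact that the top $k$ elements, when independent, belong to every maximum-weight basis; but the augmenting-path proof above is self-contained and matches the level of the rest of the appendix.
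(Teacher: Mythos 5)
The paper does not prove this lemma at all: it is quoted verbatim as ``Lemma~6 of \cite{spencer84}'' and used as a black box in the proof of Proposition~\ref{pro:dwmn}, so there is no paper proof to compare against. Your augmenting-path argument is a correct, self-contained justification of the cited result. The decomposition of $M \triangle M^{\ast}$ into alternating paths and even cycles, the observation that an $M$-free, $M^{\ast}$-matched vertex $v$ must be a path endpoint whose incident edge of $P$ lies in $M^{\ast}$, and the parity analysis at the other endpoint $u$ are all sound. In case~(a) the path is $M$-augmenting and the swap gains a positive-weight $X$-vertex; in case~(b) you correctly establish that $u$ is $M^{\ast}$-free (its only edge of $M\triangle M^{\ast}$ is in $M$, so any $M^{\ast}$-edge at $u$ would give it degree two), hence $u$ is not among the $k$ heaviest, and $M \triangle E(P)$ trades $u$ for $v$ while preserving the matched status of every other vertex on $P$, strictly increasing total vertex weight. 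Either branch contradicts maximality of $M$.

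Your caveat about ties is correct and worth flagging: the lemma presupposes that ``the $k$ heaviest vertices'' is an unambiguous set, and your two-edge counterexample shows it genuinely fails otherwise. In the paper's application this is automatic because the packet weights $\hat{w}(p) = t - t_p + \frac{L+1-x_p}{L+1} + \frac{n+1-q(p)}{(L+1)(n+1)}$ are constructed with fractional tie-breaking terms so as to be pairwise distinct, while all vertices in $Y$ have weight~$0$. Strictly speaking your proof only needs the $k$-th and $(k+1)$-th heaviest weights to be separated (so that $u \notin \{v_1,\dots,v_k\}$ forces $\hat{w}(u) < \hat{w}(v)$), so pairwise distinctness is marginally stronger than necessary, but it holds here and keeps the argument clean.
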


Since DWM-$n$ policy finds an MVM in the constructed bipartite graph,
Lemma~\ref{lem:heaviest} implies that for any $k \in \{1,2,\dots,n\}$, 
if the $k$ oldest packets can be served by some scheduling policy, then 
DWM-$n$ policy can serve these $k$ packets as well. This completes the 
first part of the proof.

Next, we prove that DWM-$n$ policy has a complexity of $O(n^{2.5} \log n)$.
Note that in order to select the $n$ oldest packets in the system, it is 
sufficient to sort the $n^2$ packets picked by DWM policy, i.e., the $n$ 
oldest packets of each of the $n$ queues, as no other packets can be among 
the $n$ oldest packets in the system. The complexity of sorting $n^2$ packets 
\cite{cormen09} is $O(n^2 \log n)$. Given the $n$ oldest packets in the 
system, DWM-$n$ policy constructs an $n \times n$ bipartite graph and finds 
an MVM \cite{spencer84} in $O(n^{2.5} \log n)$ time. Hence, the overall 
complexity of DWM-$n$ is $O(n^{2.5} \log n)$, which completes the proof.

\section{Proof of Proposition~\ref{pro:unstable}} \label{app:pro:unstable}
The following simple counter-example shows that DWM-$n$ cannot stabilize a 
feasible arrival rate vector, and is thus not throughput-optimal in general.

Consider a system with two queues and two servers, i.e., a system with $n=2$. 
We assume the \emph{i.i.d.} ON-OFF channel model as in Assumption~\ref{ass:channel}, 
i.e., each server is connected to each queue with probability $q \in (0,1)$, 
and is disconnected otherwise. In each time-slot, a server can serve at most 
one packet of a queue that is connected to this server. In such a system, the 
optimal throughput region can be described as $\Lambda^{*} = \{ \lambda ~|~ 
\lambda_1 \le 2q$, $\lambda_2 \le 2q$, and $\lambda_1 +\lambda_2 \le 2(2q - 
q^2) \}$, where the first two inequalities are obvious, and the last inequality
is due to the following. For each of the two servers, the probability that
at least one queue is connected to the server is $2q-q^2$, hence, the service
each server can provide is $2q-q^2$, and the total (effective) capacity is 
thus $2(2q-q^2)$. Note that any arrival rate vector $\lambda$ strictly inside 
the optimal throughput region $\Lambda^{*}$, is feasible.

Next, we construct an arrival process as follows. Consider a frame consisting 
of two time-slots. In each frame, there are packet arrivals to the system with 
probability $p \in (0,1)$, and no arrivals otherwise. In a frame that has arrivals, 
there are $K$ packet arrivals to queue $Q_1$ and no arrivals to queue $Q_2$ in 
the first time-slot, and there are no arrivals to queue $Q_1$ and $K$ packet 
arrivals to queue $Q_2$ in the second time-slot, where we assume that $K \ge 4$. 
This type of arrival process yields an arrival rate vector of $\lambda^*=[\frac 
{pK} {2}, \frac {pK} {2}]$. It is easy to check that $\lambda^*$ is feasible, 
if $pK \le 4q - 2q^2$. 

Now, we characterize an upper bound of the service rate under DWM-$n$ 
policy. Recall that DWM-$n$ considers only the $n$ oldest packets in the 
system and maximizes the sum of the delays of the packets scheduled over
these $n$ packets, and no other packets will be scheduled. Hence, in each 
time-slot, DWM-$n$ considers only the two oldest packets in the system. 
Consider any time-slot $t_1$, where $K-1$ out of the $K$ packets arriving 
to queue $Q_1$ in the same time-slot are still waiting in the system. The 
other one packet could have been scheduled with a packet in $Q_2$, or with 
a packet that arrived to $Q_1$ earlier, or it could have been scheduled 
alone in a time-slot before $t_1$. Note that the first $K-2$ packets out 
of these $K-1$ packets cannot be scheduled with packets in queue $Q_2$, 
due to the operations of DWM-$n$. Hence, in any time-slot $t_2$ before 
these $K-1$ packets are completely evacuated, each server must serve queue 
$Q_1$ if this server is connected to queue $Q_1$, and no server will serve 
$Q_2$ even if this server is connected to queue $Q_2$, as the packets of 
$Q_2$ are not among the two oldest packets in the system in such time-slot 
$t_2$. Hence, the expected service rate for these $K-2$ packets is $2q$, 
and it thus takes $\frac {K-2} {2q}$ time-slots on average to evacuate the 
$K-2$ packets. Similarly, it takes $\frac {K-2} {2q}$ time-slots on average 
to evacuate such $K-2$ packets in queue $Q_2$. Therefore, the total service 
rate of the system under DWM-$n$ is no greater than $\frac{2K}{\frac{2(K-2)}{2q}}=\frac{2qK}{K-2}$. 
It is clear that the system is unstable if the total arrival rate is greater 
than the total service rate, i.e., $pK > \frac {2qK}{K-2}$. Then, by choosing 
$p=\frac{17}{96}$, $q=\frac{1}{2}$ and $K=8$, we obtain a feasible arrival rate 
vector $\lambda^*$ that cannot be stabilized by DWM-$n$. This completes the proof.

\section{Proof of Theorem~\ref{thm:suff-to}} \label{app:thm:suff-to}
Suppose that the sufficient condition is satisfied under policy $\mathbf{P}$, 
i.e., there exists a constant $M>0$ such that in any time-slot $t$ and for 
all $j \in \{1,2,\dots,n \}$, queue $Q_{i(j,t)}$ satisfies that $W_{i(j,t)}(t) 
\ge Z_{r,M}(t)$ for all $r \in \Gamma_j(t)$ such that $Q_r(t) \ge M$. We want 
to show that policy $\mathbf{P}$ can stabilize any arrival rate vector $\lambda$ 
strictly inside the optimal throughput region $\Lambda^*$.

Recall that $Q_i(t)$ denotes the queue length of $Q_i$ at the beginning 
of time-slot $t$, $Z_{i,l}(t)$ denotes the delay of the $l$-th packet of 
$Q_i$ at the beginning of time-slot $t$, $W_i(t)$ denotes the HOL delay 
of $Q_i$ at the beginning of time-slot $t$, and $C_{i,j}(t)$ denotes the 
connectivity between queue $Q_i$ and server $S_j$ in time-slot $t$. Let 
$Y_{i,j}(t)$ denote the service of queue $Q_i$ received from server $S_j$ 
in time-slot $t$, i.e., $Y_{i,j}(t)=C_{i,j}(t)$ if server $S_j$ is allocated 
to serve queue $Q_i$, and $Y_{i,j}(t)=0$ otherwise. We define the random 
process describing the behavior of the underlying system as $\Markov = 
(\Markov(t), t=0,1,2,\dots)$, where
\[
\begin{split}
\Markov(t) \triangleq \{ & (Z_{i,1}(t),Z_{i,2}(t),\dots,Z_{i,Q_i(t)}(t)), 
1 \le i \le n ; \\ 
& C_{i,j}(t), 1 \le i \le n, 1 \le j \le n\}. \\
\end{split}
\]
The norm of $\Markov(t)$ is defined as $\| \Markov(t) \| \triangleq 
\sum_{1 \le i \le n}Q_i(t) + \sum_{1 \le i \le n} W_i(t)$. Let 
$\Markov^{(x)}$ denote a process $\Markov$ with an initial condition 
such that
\[
\| \Markov^{(x)}(0) \| = x.
\]

The following Lemma was derived in \cite{rybko92} for continuous-time 
countable Markov chains, and it follows from more general results in 
\cite{malyshev79} for discrete-time countable Markov chains. 

\begin{lemma}
\label{lem:stab_cri}
Suppose that there exist a real number $\epsilon>0$ and an integer 
$T > 0$ such that for any sequence of processes $\{\Markov^{(x)}(xT), 
x=1,2,\dots\}$, we have
\begin{equation}
\label{eq:stab_cri}
\limsup_{x \rightarrow \infty} \Expect \left[\frac {1} {x} 
\| \Markov^{(x)} (x T) \| \right] \le 1 - \epsilon,
\end{equation}
then the Markov chain $\Markov$ is stable.
\end{lemma}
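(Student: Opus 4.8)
The plan is to recognize Lemma~\ref{lem:stab_cri} as a standard state-dependent-horizon Lyapunov (``fluid-type'') stability criterion and to reduce it to the classical Foster-type theorem for countable Markov chains; since the statement is attributed to \cite{rybko92,malyshev79}, the shortest write-up simply cites those works, but for completeness I would present the following self-contained argument. Throughout, $\Expect_s[\cdot]$ denotes expectation conditioned on $\Markov(0)=s$.

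\emph{Step 1 (uniformizing the drift over states of a fixed norm).} Condition (\ref{eq:stab_cri}) is phrased for arbitrary sequences of initial conditions with $\|\Markov^{(x)}(0)\|=x$; I would first argue that it forces the uniform bound
\[
\limsup_{m \rightarrow \infty,\ m \in \mathbb{Z}^+}\ \sup_{s\,:\,\|s\|=m}\ \frac{1}{m}\,\Expect_s\!\left[\|\Markov(mT)\|\right]\ \le\ 1-\epsilon .
\]
This is a routine subsequence/contradiction argument: if the left-hand side exceeded $1-\epsilon$, one could pick states $s_k$ with $\|s_k\|=m_k\rightarrow\infty$ witnessing the excess, pad them into a full sequence indexed by $x$ (using arbitrary states of the correct norm for the remaining indices), and contradict (\ref{eq:stab_cri}). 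Hence, after replacing $\epsilon$ by $\epsilon/2$, there is an integer $x_0>0$ such that $\Expect_s[\|\Markov(mT)\|]\le(1-\tfrac{\epsilon}{2})m$ for every state $s$ with $\|s\|=m\ge x_0$. Note that $C\triangleq\{s:\|s\|\le x_0\}$ is \emph{finite}, since $\|\cdot\|$ dominates every queue length and every HOL delay and the channel variables $C_{i,j}$ take finitely many values.

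\emph{Step 2 (iterating the drift to bound the expected hitting time of $C$).} Set $\tau_0\triangleq 0$ and, recursively, $\tau_{k+1}\triangleq\tau_k+\|\Markov(\tau_k)\|\,T$ as long as $x_k\triangleq\|\Markov(\tau_k)\|\ge x_0$; let $K$ be the first index with $x_K<x_0$. By the strong Markov property and Step~1, $\Expect[x_{k+1}\mid\mathcal{F}_{\tau_k}]\le(1-\tfrac{\epsilon}{2})x_k$ on the $\mathcal{F}_{\tau_k}$-measurable event $\{K>k\}$, whence $\Expect[x_k\mathbb{1}_{\{K>k\}}]\le(1-\tfrac{\epsilon}{2})^k\|\Markov(0)\|$. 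Summing the horizon lengths,
\[
\Expect[\tau_K]\;=\;T\sum_{k\ge 0}\Expect\!\left[x_k\mathbb{1}_{\{K>k\}}\right]\;\le\;\frac{2T}{\epsilon}\,\|\Markov(0)\|\;<\;\infty ,
\]
so $K<\infty$ almost surely and $\Markov$ reaches the finite set $C$ in finite expected time from every initial state; since the first-passage time $\sigma_C$ of $C$ satisfies $\sigma_C\le\tau_K$, we obtain $\Expect_s[\sigma_C]<\infty$ for all $s$.

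\emph{Step 3 (conclusion).} A finite set with finite expected first-passage time from every state yields, by the standard stability theory for countable Markov chains \cite{rybko92,malyshev79}, a nonempty set of positive recurrent states that is reached within finite time from any initial state with probability one --- precisely the notion of stability adopted in Section~\ref{sec:model}. I expect the main obstacle to be Step~1, namely carefully extracting the uniform-in-state drift bound from the sequence-indexed hypothesis (and, relatedly, handling the random state-dependent horizon $x_k T$ through the strong Markov property in Step~2); everything after that is the textbook Lyapunov-drift computation.
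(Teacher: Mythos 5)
Your proposal is correct in substance, but note that the paper does not prove this lemma at all: it is imported verbatim from the literature, with the remark that it was derived in \cite{rybko92} for continuous-time chains and follows from more general results in \cite{malyshev79} for discrete-time chains. What you have written is essentially a reconstruction of that cited criterion. Your Step~1 (extracting a uniform-in-state drift bound $\Expect_s[\|\Markov(mT)\|]\le(1-\epsilon/2)m$ for all $\|s\|=m\ge x_0$ from the sequence-indexed hypothesis) is sound and is in fact the natural bridge to the references: the Malyshev--Menshikov criterion is stated with exactly such a uniform bound as its hypothesis, so one could stop after Step~1 and cite it directly. Your Steps~2--3 instead reprove that criterion via the state-dependent-horizon supermartingale argument, and the computation there is correct (the $\tau_k$ are stopping times, $\{K>k\}$ is $\mathcal{F}_{\tau_k}$-measurable, the geometric bound $\Expect[x_k\mathbb{1}_{\{K>k\}}]\le(1-\epsilon/2)^k\|\Markov(0)\|$ and the Wald-type identity $\Expect[\tau_K]=T\sum_k\Expect[x_k\mathbb{1}_{\{K>k\}}]$ both hold), as is the observation that $\{s:\|s\|\le x_0\}$ is finite because FIFO order bounds every per-packet delay by the HOL delay. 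The one place where genuine content is still deferred is the last sentence of Step~3: passing from ``a finite set $C$ is hit in finite expected time from every state'' to the specific stability notion of Section~\ref{sec:model} (a nonempty set of positive recurrent states containing a finite subset reached a.s.\ from any initial state) requires the trace-chain/return-time argument on $C$ (and finiteness of $\Expect[A(0)]$ to bound the one-step norm increase from states in $C$), which you, like the paper, hand off to \cite{rybko92,malyshev79}. Given that the paper itself offers only the citation, your write-up is strictly more detailed than the paper's treatment; the trade-off is that the self-contained route duplicates a known result, whereas the citation route keeps the burden on the (correct) reduction you carry out in Step~1.
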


Lemma~\ref{lem:stab_cri} implies the stability of the network, and a 
stability criteria of type (\ref{eq:stab_cri}) leads to a fluid limit 
approach \cite{dai95} to the stability problem of queueing systems. 

In the following, we construct the fluid limit model of the system as 
in \cite{dai95,andrews04}. We assume that the 
packets present in the system in its initial state $\Markov^{(x)}(0)$ 
arrived in some of the past time-slots $-(x-1), -(x-2), \dots, 0$, 
according to their delays in state $\Markov(0)$. We define another 
process $\System \triangleq \left(A, Q, W, Y \right)$, i.e., a tuple 
that denotes a list of process, and clearly, a sample path of 
$\System^{(x)}$ uniquely defines the sample path of $\Markov^{(x)}$. 
Then, we extend the definition of $\System$ to each continuous time 
$t \ge 0$ as $\System^{(x)}(t) \triangleq \System^{(x)} (\lfloor t 
\rfloor)$, where $\lfloor t \rfloor$ denotes the integer part of $t$. 

Next, we consider a sequence of processes $\{\frac {1} {\xm} \System^{(\xm)}
(\xm \cdot)\}$ that are scaled in both time and space. Then, using the 
techniques of Theorem~4.1 of \cite{dai95} or Lemma~1 of \cite{andrews04}, 
we can show that for almost all sample paths and for any sequence of 
processes $\{\frac {1} {\xm} \System^{(\xm)}(\xm \cdot)\}$, where $\{\xm\}$ 
is a sequence of positive integers with $\xm \rightarrow \infty$, there 
exists a subsequence $\{\xml\}$ with $\xml \rightarrow \infty$ as $l 
\rightarrow \infty$ such that the following convergences hold \emph{uniformly 
over compact (u.o.c.)} interval:
\begin{eqnarray}
&&\frac{1}{\xml}\int_0^{\xml t} A^{(\xml)}_i(\tau)d\tau \rightarrow \lambda_i t, \label{eq:fluid_a}\\
&&\frac{1}{\xml}\int_0^{\xml t} Y^{(\xml)}_{i,j}(\tau)d\tau \rightarrow  \int_0^t y_{i,j}(\tau) d\tau,\\
&&\frac{1}{\xml}Q^{(\xml)}_i(\xml t) \rightarrow  q_i(t). 
\end{eqnarray}
Similarly, the following convergences (which are denoted by ``$\Rightarrow$") 
hold at every continuous point of the limiting function $w_i(t)$:
\begin{equation}
\frac{1}{\xml}W^{(\xml)}_i(\xml t) \Rightarrow  w_i(t). \label{eq:fluid_w}\\
\end{equation}

Any set of limiting functions $(q,y,w)$ is called a \emph{fluid limit}. 
It is easy to show that the limiting functions are Lipschitz continuous 
in $[0,\infty)$, and are thus absolutely continuous. Therefore, these 
limiting functions are differentiable at almost all (scaled) time $t \in 
[0,\infty)$, which we call {\em regular} time. Moreover, the limiting
functions satisfy that
\begin{equation}
\sum_{1 \le i \le n} q_i(0) + \sum_{1 \le i \le n} w_i(0) = 1,
\end{equation}
and that
\begin{equation}
\label{eq:dq}
\frac{d}{dt} q_i(t) = \left\{
\begin{array}{ll}
\lambda_i - \sum_j y_{i,j}(t), & q_i(t) > 0,\\
(\lambda_i - \sum_j y_{i,j}(t))^+, & q_i(t) = 0,
\end{array}
\right. 
\end{equation}
where $(x)^+ \triangleq \max(x,0)$.

We then prove the stability of the fluid limit model using a standard 
Lyapunov technique. We consider a quadratic Lyapunov function in the 
fluid limit model of the system, and show that the Lyapunov function 
has a negative drift when its value is greater than 0, which implies 
that the fluid limit model is stable.

Using a similar argument as in \cite{andrews04,ji13c}, we can show that 
under policy $\mathbf{P}$, there exists a finite time $T_1>0$ such that 
for all $t \ge T_1$, we have 
\begin{equation}
\label{eq:linear}
q_i(t)=\lambda_i w_i(t)
\end{equation}
for all $i$. This linear relation is similar to the Little's law and 
plays a key role in proving stability of the delay-based schemes. We 
omit the proof of this linear relation for brevity and refer readers 
to \cite{andrews04,ji13c}.

Let $V(q(t))$ denote the Lyapunov function defined as
\begin{equation}
\label{eq:lyapunov}
V(q(t)) \triangleq \frac {1} {2} \sum_{i=1}^{n} \frac {q_i^2(t)} {\lambda_i}.
\end{equation}

Suppose that $\lambda$ is strictly inside $\Lambda^{*}$, then there exists 
a vector $\mu \in \Rate$ such that $\lambda_i < \sum_{j=1}^{n} \mu_{i,j}$ 
for all $i$. Let $\beta$ denote the smallest difference between $\lambda_i$
and $\sum_{j=1}^{n} \mu_{i,j}$, i.e., $\beta \triangleq \min_{1 \le i \le n} 
(\sum_{j=1}^{n} \mu_{i,j} - \lambda_i)$. Clearly, we have $\beta>0$. It 
suffices to show that for any $\zeta_1>0$, there exist a $\zeta_2>0$ and 
a finite time $T_2 > 0$ such that for all regular time $t \ge T_2$, 
$V(q(t)) \ge \zeta_1$ implies $\frac{D^+}{dt^+} V(q(t)) \le - \zeta_2$, 
where $\frac{D^+}{dt^+} V(q(t)) = \lim_{\delta \downarrow 0} \frac 
{V(q(t+\delta))-V(q(t))} {\delta}$. Choose any $T_2 \ge T_1$. Since $q(t)$ 
is differentiable for all regular time $t \ge T_2$ such that $V(q(t)) > 0$, 
we can obtain the derivative of $V(q(t))$ as
\begin{equation}
\label{eq:dol}
\begin{split}
\frac{D^+}{dt^+} & V(q(t)) \\
\stackrel{(a)}=& \sum_{i=1}^{n} \frac {q_i(t)} {\lambda_i} \cdot (\lambda_i - \sum_{j=1}^{n} y_{i,j}(t))  \\
\stackrel{(b)}=& \sum_{i=1}^{n} w_i(t) \cdot (\lambda_i - \sum_{j=1}^{n} \mu_{i,j}(t)) \\
& + \sum_{i=1}^{n} w_i(t) \cdot (\sum_{j=1}^{n} \mu_{i,j}(t) - \sum_{j=1}^{n} y_{i,j}(t)), \\
=& \sum_{i=1}^{n} w_i(t) \cdot (\lambda_i - \sum_{j=1}^{n} \mu_{i,j}(t)) \\
& + \sum_{j=1}^{n} \left( \sum_{i=1}^{n} w_i(t) \mu_{i,j}(t) - \sum_{i=1}^{n} w_i(t) y_{i,j}(t) \right), 
\end{split}
\end{equation}
where (a) is from (\ref{eq:dq}), and (b) is from (\ref{eq:linear}) along 
with a little algebra. 

From (\ref{eq:linear}) and (\ref{eq:lyapunov}), we can choose $\zeta_3 > 0$ 
such that $V(q(t)) \ge \zeta_1$ implies $\max_{1 \le i \le n} w_i(t) \ge \zeta_3$.
Then, in the final result of (\ref{eq:dol}), we can conclude that the first 
term is bounded. That is,
\[
\begin{split}
\sum_{i=1}^{n} & w_i(t) \cdot (\lambda_i - \sum_{j=1}^{n} \mu_{i,j}(t)) \\
& \le - \zeta_3 \min_{1 \le i \le n} (\sum_{j=1}^{n} \mu_{i,j}(t) - \lambda_i) \\
& \le - \zeta_3 \beta \\
& \triangleq - \zeta_2 < 0.
\end{split}
\]
Therefore, we have that $\frac{D^+}{dt^+} V(q(t)) \le - \zeta_2$ if the second 
term in the final result of (\ref{eq:dol}) is non-positive. We show this in the 
following. 

Considering the neighborhood around a fixed (scaled) time $t \ge T_2$, we define 
$N \triangleq \{\lceil \xml t \rceil, \lceil \xml t \rceil + 1, \dots, \lfloor 
\xml (t+\delta) \rfloor\}$, where $\delta$ is a small positive number and $\{\xml\}$ 
is a positive subsequence for which the convergence to the fluid limit holds.
We will omit the superscript $(\xml)$ of the random variables (depending on the 
choice of the sequence $\{\xml\}$) throughout the rest of the proof for notational 
convenience (e.g., we use $Q_i(t)$ to denote $Q^{(\xml)}_i(t)$). 
We want to show that under policy $\mathbf{P}$, in each time-slot $\tau \in N$, 
each server $S_j$ serves a connected queue $Q_{i(j,\tau)}$ having the largest 
weight in the fluid limits, i.e., $w_{i(j,\tau)}(t) = L_j(\tau) \triangleq 
\max_{i \in \mS_j(\tau)} w_i(t)$ (recall that $\mS_j(\tau) = \{1 \le i \le n 
~|~ C_{i,j}(\tau)=1 \}$). Note that the trivial statement holds if $\mS_j(\tau) 
= \emptyset$ or $L_j(\tau)=0$. Hence, suppose that $\mS_j(\tau) \neq \emptyset$ 
and $L_j(\tau)>0$. Consider $r,s \in \mS_j(\tau)$ such that $w_s(t)=L_j(\tau)$ 
and $W_r(\tau)=\max_{i \in \mS_j(\tau)} W_i(\tau)$. In other words, $Q_s$ is a 
queue having the largest weight in the fluid limit among all the queues being 
connected to server $S_j$ in time-slot $\tau$, and $Q_r$ is a queue having the 
largest weight in the original discrete-time system among all the queues being 
connected to server $S_j$ in time-slot $\tau$. Note that it is possible that 
$r=s$. Then, for any time-slot $\tau \in N$, we have that 
\begin{equation}
\label{eq:ww}
\begin{split}
& W_{i(j,\tau)} (\lceil \xml t \rceil) \\ 
& \stackrel{(a)}\ge W_{i(j,\tau)}(\tau) - 
(\lfloor \xml (t+\delta) \rfloor - \lceil \xml t \rceil) \\
& \stackrel{(b)}\ge Z_{r,M}(\tau) - (\lfloor \xml (t+\delta) \rfloor - \lceil \xml t \rceil) \\
& \ge Z_{r,M}(\tau) - W_r(\tau) + W_r(\tau) - (\lfloor \xml (t+\delta) \rfloor - \lceil \xml t \rceil) \\
& \stackrel{(c)}\ge Z_{r,M}(\tau) - W_r(\tau) + W_s(\tau) - (\lfloor \xml (t+\delta) \rfloor - \lceil \xml t \rceil) \\
& \stackrel{(d)}\ge Z_{r,M}(\tau) - W_r(\tau) + W_s(\lfloor \xml (t+\delta) \rfloor) \\
&~~~~~~ - 2 (\lfloor \xml (t+\delta) \rfloor - \lceil \xml t \rceil), 
\end{split}
\end{equation}
where (a) and (d) are due to the fact that the HOL delay cannot increase by
more than $\lfloor \xml (t+\delta) \rfloor - \lceil \xml t \rceil$ within $\lfloor 
\xml (t+\delta) \rfloor - \lceil \xml t \rceil$ time-slots, (b) is from the property
of policy $\mathbf{P}$ satisfying the sufficient conditions, and (c) is due to 
$W_r(\tau)=\max_{i \in \mS_j(\tau)} W_i(\tau)$ and $s \in \mS_j(\tau)$. Divide both 
sides of the final result of the above equation by $\xml$ and let $\xml$ goes to 
infinity, we have that 
\begin{equation}
\begin{split}
w_{i(j,\tau)}(t) & \stackrel{(a)}= \lim_{\xml \rightarrow \infty} \frac {W_{i(j,\tau)}(\xml t)} {\xml} \\
& \stackrel{(b)}\ge \lim_{\xml \rightarrow \infty} \frac {Z_{r,M}(\tau) - W_r(\tau)} {\xml} + w_s(t+\delta) - 2 \delta \\
& \stackrel{(c)}= w_s(t+\delta) - 2 \delta, 
\end{split}
\end{equation}
where (a) is from the definition of fluid limits, (b) is from (\ref{eq:ww}) and 
$\lim_{\xml \rightarrow \infty} \frac {\lfloor \xml (t+\delta) \rfloor - \lceil 
\xml t \rceil} {\xml} = \delta$, and (c) is because
$\lim_{\xml \rightarrow \infty} \frac {Z_{r,M}(\tau) - W_r(\tau)} {\xml} = 0$, as 
the SLLN of (\ref{eq:slln}) holds and $Z_{r,M}(\tau) - W_r(\tau)$ is the difference
of the arriving times of two packets having finite number of packets in-between. 
Since the above equation holds for any arbitrarily small positive number $\delta$, 
by letting $\delta$ go to 0 on both sides of the final result of the above equation, 
we have $w_{i(j,\tau)}(t) \ge w_s(t) = L_j(\tau)$, and in particular, we have 
$w_{i(j,\tau)}(t) = L_j(\tau)$. This is true for each $j$ and for each $\tau \in N$. 
Therefore, under policy $\mathbf{P}$, the service vector $y(t)$ satisfies that
\[
\sum_{i=1}^n w_i(t) y_{i,j}(t) = \max_{\nu \in \Rate} \sum_{i=1}^n w_i(t) \nu_{i,j},
\]
for all $j \in \{1,2,\dots,n \}$.

Thus, we have that
\begin{equation}
y(t) \in \argmax_{\nu \in \Rate} \sum_{j=1}^{n} \sum_{i=1}^{n} w_i(t) \nu_{i,j},
\end{equation}
which implies that 
\begin{equation}
\sum_{j=1}^{n} \sum_{i=1}^{n} w_i(t) \mu_{i,j}(t) \le \sum_{j=1}^{n} \sum_{i=1}^{n} w_i(t) y_{i,j}(t).
\end{equation}

Therefore, this shows that $V(q(t)) \ge \zeta_1$ implies $\frac{D^+}{dt^+} 
V(q(t)) \le - \zeta_2$ for all $t \ge T_2$. It immediately follows that for 
any $\zeta>0$, there exists a finite $T \ge T_2 >0$ such that $\sum_{1 \le 
i \le n} q_i(T) \le \zeta$. Further, we have that 
\[
\sum_{1 \le i \le n} (q_i(T) + w_i(T)) \le (1 + \frac{1}{\min_{1 \le i \le n} \lambda_i}) \zeta
\] 
due to the linear relation (\ref{eq:linear}).

Now, consider any fixed sequence of processes $\{\Markov^{(x)}, x=1,2,\dots\}$ 
(for simplicity also denoted by $\{x\}$). From the convergences (\ref{eq:fluid_a})-(\ref{eq:fluid_w}), 
we have that for any subsequence $\{\xm\}$ of $\{x\}$, there exists a further 
(sub)subsequence $\{\xml\}$ such that
\[
\begin{split}
&\lim_{j \rightarrow \infty}  \frac {1} {\xml} \| \Markov^{(\xml)} (\xml T) \| \\
&~~~~~~= \sum_{1 \le i \le n} (q_i(T) +  w_i(T)) \le (1 + \frac{1}{\min_{1 \le i \le n} \lambda_i}) \zeta
\end{split}
\]
almost surely. This in turn implies (for small enough $\zeta$) that
\begin{equation}
\label{eq:mc_conv}
\lim_{x \rightarrow \infty} \frac {1} {x} \| \Markov^{(x)} (x T) \| 
\le (1 + \frac{1}{\min_{1 \le i \le n} \lambda_i}) \zeta \triangleq 
1 - \epsilon < 1
\end{equation}
almost surely. 

We can show that the sequence $\{ \frac {1} {x} \| \Markov^{(x)} (x T) \|, 
x=1,2,\dots \}$ is uniformly integrable, due to the following:
\[
\frac {1} {x} \| \Markov^{(x)} (x T) \| \le 1 + \frac {1} {x} 
\sum_{1 \le i \le n} \int^{xT}_{\tau=0} A_i(\tau) d \tau + nT
\]
and 
\[
\Expect [1 + \frac {1} {x} \sum_{1 \le i \le n} 
\int^{xT}_{\tau=0} A_i(\tau) d \tau + nT] < \infty,
\]
where the above finite expectation is from our assumption on the arrival 
process. Then, the almost surely convergence in (\ref{eq:mc_conv}) along 
with uniform integrability implies the following convergence in the mean:
\[
\limsup_{x \rightarrow \infty} \Expect [ \frac {1} {x} \| 
\Markov^{(x)} (x T) \| ] \le 1- \epsilon.
\]

Since the above convergence holds for any sequence of processes 
$\{\Markov^{(x)}(xT), x=1,2,\dots\}$, the condition of type 
(\ref{eq:stab_cri}) in Lemma~\ref{lem:stab_cri} is satisfied. 
This completes the proof of Theorem~\ref{thm:suff-to}.

\section{Proof of Proposition~\ref{pro:dwm-to}} \label{app:pro:dwm-to}
We prove it by showing that DWM is an MWF policy.

Let $M=n$. We want to show that the sufficient condition in 
Theorem~\ref{thm:suff-to} is satisfied, i.e., in any time-slot 
$t$ and for all $j \in \{1,2,\dots,n \}$, DWM policy allocates
server $S_j$ to serve queue $Q_{i(j,t)}$, which satisfies that 
$W_{i(j,t)}(t) \ge Z_{r,n}(t)$ for all $r \in \Gamma_j(t)$
such that $Q_r(t) \ge n$.

Suppose that the sufficient condition is not satisfied, i.e., consider any 
server $S_j$ such that $Q_r(t) \ge n$ for some $r \in \Gamma_j(t)$, and $S_j$ 
is allocated to serve queue $Q_{i(j,t)}$, and suppose that $W_{i(j,t)}(t) < 
Z_{r,n}(t)$. Since $Q_r(t) \ge n$ and at most $n-1$ packets could be matched
with the other $n-1$ servers, there must be at least one of the $n$ oldest 
packets in $Q_r$ remaining unmatched. Suppose this packet is the $k$-th oldest 
packet in queue $Q_r$, then $Z_{r,k}(t) \ge Z_{r,n}(t) > W_{i(j,t)}(t)$. Hence, 
DWM policy must match $S_j$ to the $k$-th oldest packet in queue $Q_r$, i.e., 
DWM must allocate $S_j$ to serve $Q_r$ rather than $Q_{i(j,t)}$, which is a 
contradiction.

Therefore, DWM policy is an MWF policy and is thus throughput-optimal.

\section{Proof of Proposition~\ref{pro:dmws-rf}} \label{app:pro:dmws-rf}
By an argument similar to that in Theorem~3 of \cite{bodas10}, we want 
to show that under D-MWS, the delay-violation event occurs with at 
least a constant probability for any fixed delay threshold even if 
$n$ is large.

First, we define $D(x||y) \triangleq x \log \frac {x} {y} + (1-x) \log \frac 
{1-x} {1-y}$. Then, fix any real number $p^{\prime} \in (0,p)$ and any integer 
$T$, and consider any configuration of queues at the end of time-slot $T$.

\noindent {\bf In time-slot $T+1$:} \\
By the Chernoff bound, there exists an integer $N_1$ such that for all
$n \ge N_1$, with probability at least $1-e^{-D(p^{\prime}||p)n}$, at 
least $np^{\prime}$ queues have packet arrivals at the beginning of 
time-slot $T+1$. Define $\mu \triangleq -2/\log(1-q)$ and $\nu 
\triangleq \mu \log n$. Fix an integer $N_2$ such that for all $n 
\ge N_2$, we have $\nu \ge 1$. Sort the queues in the order of priority 
for service under D-MWS, i.e., after sorting, the first queue has the 
largest weight (HOL delay) with the smallest index; the 
second queue has the largest weight with the second smallest index, or 
has the second largest weight with the smallest index if there is only 
one queue having the largest weight; and so on. Let the set of the first 
$\nu$ queues after sorting be $\mathcal{Q}^* \triangleq \{Q_{i_1}, Q_{i_2}, 
\dots, Q_{i_{\nu}}\}$. Let $E_j$ denote the event that server $S_j$ is 
not connected with any of the queues in $\mathcal{Q}^{*}$. Then, $\Prob(E_j)
=(1-q)^{\nu}=(1-q)^{\mu \log n}$, and we have that
\begin{equation}
\Prob \left( \cup_{j=1}^n E_j\right) \le \sum_{j=1}^n \Prob (E_j)
=n(1-q)^{\mu \log n} = \frac {1} {n},
\end{equation}
where the last equality is because $(1-q)^{\mu \log n} = \exp (\mu 
\log n \log (1-q) ) = \frac {1} {n^2}$.
Thus, with probability at least $1-\frac {1} {n}$, each server is connected 
to at least one queue in $\mathcal{Q}^*$. According to the operations of
D-MWS, a server connected to at least one queue in $\mathcal{Q}^*$ must be 
allocated to one of the queues in $\mathcal{Q}^*$. Hence, with probability
at least $1-\frac {1} {n}$, all the servers serve queues in $\mathcal{Q}^*$.
Since $|\mathcal{Q}^*|=\nu$ and with probability at least $1-e^{-D(p^{\prime}||p)n}$, 
at least $np^{\prime}$ queues had packet arrivals, it follows that for $n \ge 
N_3 \triangleq \max \{N_1,N_2\}$, with probability at least $1-\frac {1} {n} 
- e^{-D(p^{\prime}||p)n}$
by the union bound, at the end of time-slot $T+1$ (and at the beginning of 
time-slot $T+2$), the system has at least $np^{\prime} - \nu$ queues having 
a weight (HOL delay) of at least 1. Let this set of queues (of weight 
being 1) be $\mA_1$.

\noindent {\bf In time-slot $T+2$:} \\
By the similar argument above for time-slot $T+1$, it follows that, with probability
at least $1 - \frac {1} {n}$, no more than $\nu$ queues can receive service. 
Combining this with the result for time-slot $T+1$ and using the union bound,
we have that for all $n \ge N_3$, with probability at least $1-\frac 
{2} {n} - e^{-D(p^{\prime}||p)n}$, at the end of time-slot $T+2$ (and at the 
beginning of time-slot $T+3$), there exists a set $\mA_2$ of queues such that 
$|\mA_2| \ge |\mA_1| - \nu \ge np^{\prime} - 2\nu$, and each queue in $\mA_2$ 
has a weight (HOL delay) of at least 2.

Repeating the same argument above, we have that for all $n \ge N_3$, 
with probability at least $1-\frac {b+1} {n} - e^{-D(p^{\prime}||p)n}$, at the 
end of time-slot $T+b+1$, there exists a set $\mA_{b+1}$ of queues such that
$|\mA_{b+1}| \ge np^{\prime} - (b+1)\nu$, and each queue in $\mA_{b+1}$ has a 
weight (HOL delay) of at least $b+1$.

Fix a real number $\epsilon \in (0,1)$, there exists an integer $N_4$ such that
for all $n \ge N_4$, we have $1-\frac {b+1} {n} - e^{-D(p^{\prime}||p)n} \ge \epsilon$
and $np^{\prime} - (b+1)\nu = np^{\prime} - (b+1) \mu \log n \ge 1$. Hence, for
a system with $n \ge N_5 \triangleq \max \{N_3,N_4\}$, starting with time-slot 
$T$, with probability at least $\epsilon$, we have at least one queue having a 
HOL delay of at least $b+1$ at the end of time-slot $T+b+1$ (or at the beginning
of time-slot $T+b+2$). Let $T=-b-2$, then the above result shows that the delay 
violation event occurs with at least a constant probability even if $n$ is large. 
This completes the proof.

\section{Proof of Theorem~\ref{thm:hybrid}} \label{app:thm:hybrid}
We first show that a hybrid OPF-MWF policy is an (overall) OPF policy 
and is thus rate-function delay-optimal. Note that in stage~1, the 
operations of an OPF policy already guarantees that the sufficient 
condition in Theorem~\ref{thm:suff-rf} is satisfied. Since in stage~2, 
the matched servers and packets in stage~1 will not be considered, 
it ensures that the operations do not perturb the satisfaction of the 
sufficient condition for rate-function delay optimality. 

In the following, we want to show that a hybrid OPF-MWF policy is an 
(overall) MWF policy and is thus throughput-optimal. Let $M=n$. We want 
to show that the sufficient condition in Theorem~\ref{thm:suff-to} is 
satisfied, i.e., in any time-slot $t$ and for all $j \in \{1,2,\dots,n \}$, 
a hybrid OPF-MWF policy allocates server $S_j$ to serve queue $Q_{i(j,t)}$, 
which satisfies that $W_{i(j,t)}(t) \ge Z_{r,n}(t)$ for all $r \in 
\Gamma_j(t)$ such that $Q_r(t) \ge n$.

First, we want to show that in stage~1, an OPF policy also guarantees that
\emph{all allocated servers in stage~1 satisfies the sufficient condition 
for throughput optimality.} Consider each server $S_l$ such that $l \in 
\{1,2,\dots,n \} \backslash R(t)$, i.e., all servers $S_j$ that are 
allocated in stage~1. Then, $Q_{i(l,t)}$ is the queue served by server $S_l$ 
in stage~1 of time-slot $t$. Since we run an OPF policy in stage~1, server 
$S_l$ serves a packet among the $n$ oldest packets in the system, and it 
must satisfy that $W_{i(j,t)}(t) \ge Z_{r,n}(t)$ for any $r \in \Gamma_l(t)$ 
such that $Q_r(t) \ge n$.

Next, consider each server $S_j$ such that $j \in R(t)$, then $Q_{i(j,t)}$ 
is the queue served by server $S_j$ in stage 2 of time-slot $t$. It is 
clear from Condition 2) of Definition~\ref{def:hybrid} that $W_{i(j,t)}(t) 
\ge Z_{r,n}(t)$ for all $r \in \Gamma_j(t)$ such that $Q_r(t) \ge n$.

Therefore, a hybrid OPF-MWF policy is an (overall) MWF policy and is 
thus throughput-optimal.

\section{Proof of Theorem~\ref{thm:dwm-mws}} \label{app:thm:dwm-mws}
To show that DWM-$n$-MWS is a hybrid OPF-MWF policy, it is sufficient to show 
that Condition 2) of Definition~\ref{def:hybrid} is satisfied.

Given any time-slot $t$, consider each server $S_j$ such that $j \in R(t)$, then 
$Q_{i(j,t)}$ is the queue served by server $S_j$ in stage~2 under D-MWS. Let $M=n$. 
We want to show that $W_{i(j,t)}(t) \ge Z_{r,n}(t)$ for all $r \in \Gamma_j(t)$ 
such that $Q_r(t) \ge n$. 

Let $W^{\prime}_i(t)$ be the HOL delay of queue $Q_i$ at the beginning of 
stage~2. Let $\Gamma^{\prime}_j(t)$ denote the set of queues that are 
connected to server $S_j$ and have the largest weight among the connected queues 
at the beginning of stage 2 of time-slot $t$, i.e., $\Gamma^{\prime}_j(t) 
\triangleq \{i \in \mS_j(t) ~|~ W_i^{\prime}(t) = \max_{l \in \mS_j(t)} W_l^{\prime}(t) 
\}$, where $\mS_j(t)= \{1 \le i \le n ~|~ C_{i,j}(t)=1 \}$. According to the operations
of D-MWS, the index of queue that is served by server $S_j$ satisfies that $i(j,t) = 
\min \{i ~|~ i \in \Gamma^{\prime}_j(t) \}$, hence, we have $W^{\prime}_{i(j,t)}(t) 
= W^{\prime}_r(t)$ for any $r \in \Gamma^{\prime}_j(t)$. This implies that 
$W_{i(j,t)}(t) \ge W^{\prime}_{i(j,t)}(t) = W^{\prime}_r(t) \ge Z_{r,n}(t)$ 
for any $r \in \Gamma^{\prime}_j(t)$ such that $Q_r(t) \ge n$, where the last 
inequality is because $Q_r(t) \ge n$ and thus the HOL packet of queue $Q_r$ at 
the beginning of stage~2 must not have a later position than the $n$-th 
packet in queue $Q_r$ at the beginning of time-slot $t$. This holds for all $j 
\in R(t)$ and any time-slot $t$. Therefore, DWM-$n$-MWS is a hybrid OPF-MWF policy.

Since the complexity of DWM-$n$ and D-MWS is $O(n^{2.5} \log n)$ and $O(n^2)$, 
respectively, the overall complexity of DWM-$n$-MWS policy is $O(n^{2.5} \log n)$. 


\bibliographystyle{IEEEtran}
\bibliography{ofdm}

\begin{thebibliography}{10}
\providecommand{\url}[1]{#1}
\csname url@samestyle\endcsname
\providecommand{\newblock}{\relax}
\providecommand{\bibinfo}[2]{#2}
\providecommand{\BIBentrySTDinterwordspacing}{\spaceskip=0pt\relax}
\providecommand{\BIBentryALTinterwordstretchfactor}{4}
\providecommand{\BIBentryALTinterwordspacing}{\spaceskip=\fontdimen2\font plus
\BIBentryALTinterwordstretchfactor\fontdimen3\font minus
  \fontdimen4\font\relax}
\providecommand{\BIBforeignlanguage}[2]{{%
\expandafter\ifx\csname l@#1\endcsname\relax
\typeout{** WARNING: IEEEtran.bst: No hyphenation pattern has been}%
\typeout{** loaded for the language `#1'. Using the pattern for}%
\typeout{** the default language instead.}%
\else
\language=\csname l@#1\endcsname
\fi
#2}}
\providecommand{\BIBdecl}{\relax}
\BIBdecl

\bibitem{shah11}
D.~Shah, D.~N.~C. Tse, and J.~N. Tsitsiklis, ``{Hardness of low delay network
  scheduling},'' \emph{IEEE Transactions on Information Theory}, vol.~57,
  no.~12, pp. 7810--7817, 2011.

\bibitem{tassiulas93}
L.~Tassiulas and A.~Ephremides, ``{Dynamic server allocation to parallel queues
  with randomly varying connectivity},'' \emph{IEEE Transactions on Information
  Theory}, vol.~39, no.~2, pp. 466--478, 1993.

\bibitem{ganti07}
A.~Ganti, E.~Modiano, and J.~N. Tsitsiklis, ``{Optimal transmission scheduling
  in symmetric communication models with intermittent connectivity},''
  \emph{IEEE Transactions on Information Theory}, vol.~53, no.~3, pp.
  998--1008, 2007.

\bibitem{kittipiyakul09}
S.~Kittipiyakul and T.~Javidi, ``{Delay-optimal server allocation in multiqueue
  multiserver systems with time-varying connectivities},'' \emph{IEEE
  Transactions on Information Theory}, vol.~55, no.~5, pp. 2319--2333, 2009.

\bibitem{kittipiyakul07}
------, ``{Resource allocation in OFDMA with time-varying channel and bursty
  arrivals},'' \emph{IEEE Communications Letters}, vol.~11, no.~9, pp.
  708--710, 2007.

\bibitem{bodas09}
S.~Bodas, S.~Shakkottai, L.~Ying, and R.~Srikant, ``{Scheduling in
  multi-channel wireless networks: Rate function optimality in the small-buffer
  regime},'' in \emph{ACM Proceedings of the eleventh international joint
  conference on Measurement and modeling of computer systems (SIGMETRICS)},
  2009, pp. 121--132.

\bibitem{bodas10}
------, ``{Low-complexity scheduling algorithms for multi-channel downlink
  wireless networks},'' in \emph{The IEEE International Conference on Computer
  Communications (INFOCOM)}.\hskip 1em plus 0.5em minus 0.4em\relax IEEE, 2010,
  pp. 1--9.

\bibitem{bodas11a}
------, ``{Scheduling for small delay in multi-rate multi-channel wireless
  networks},'' in \emph{The IEEE International Conference on Computer
  Communications (INFOCOM)}.\hskip 1em plus 0.5em minus 0.4em\relax IEEE, 2011,
  pp. 1251--1259.

\bibitem{bodas11b}
S.~Bodas and T.~Javidi, ``{Scheduling for multi-channel wireless networks:
  Small delay with polynomial complexity},'' in \emph{2011 International
  Symposium on Modeling and Optimization in Mobile, Ad Hoc and Wireless
  Networks (WiOpt)}.\hskip 1em plus 0.5em minus 0.4em\relax IEEE, 2011, pp.
  78--85.

\bibitem{sharma11b}
\BIBentryALTinterwordspacing
M.~Sharma and X.~Lin, ``{OFDM downlink scheduling for delay-optimality:
  Many-channel many-source asymptotics with general arrival processes},''
  Purdue University, Tech. Rep., 2011. [Online]. Available:
  \url{https://engineering.purdue.edu/%7elinx/papers.html}
\BIBentrySTDinterwordspacing

\bibitem{sharma11}
------, ``{OFDM downlink scheduling for delay-optimality: Many-channel
  many-source asymptotics with general arrival processes},'' in \emph{The IEEE
  Information Theory and Applications Workshop (ITA)}, 2011.

\bibitem{ji13c}
B.~Ji, C.~Joo, and N.~B. Shroff, ``{Delay-Based Back-Pressure Scheduling in
  Multihop Wireless Networks},'' \emph{IEEE/ACM Transactions on Networking},
  vol.~21, no.~5, pp. 1539--1552, 2013.

\bibitem{andrews04}
M.~Andrews, K.~Kumaran, K.~Ramanan, A.~Stolyar, R.~Vijayakumar, and P.~Whiting,
  ``{Scheduling in a queuing system with asynchronously varying service
  rates},'' \emph{Probability in the Engineering and Informational Sciences},
  vol.~18, pp. 191--217, 2004.

\bibitem{fredman87}
M.~Fredman and R.~Tarjan, ``{Fibonacci heaps and their uses in improved network
  optimization algorithms},'' \emph{Journal of the ACM (JACM)}, vol.~34, no.~3,
  pp. 596--615, 1987.

\bibitem{spencer84}
T.~Spencer and E.~Mayr, ``{Node weighted matching},'' \emph{Automata, Languages
  and Programming}, pp. 454--464, 1984.

\bibitem{gupta09}
G.~Gupta, S.~Sanghavi, and N.~Shroff, ``{Node weighted scheduling},'' in
  \emph{ACM Proceedings of the eleventh international joint conference on
  Measurement and modeling of computer systems (SIGMETRICS)}, 2009, pp.
  97--108.

\bibitem{dai95}
J.~Dai, ``{On positive Harris recurrence of multiclass queueing networks: a
  unified approach via fluid limit models},'' \emph{The Annals of Applied
  Probability}, pp. 49--77, 1995.

\bibitem{tassiulas92}
L.~Tassiulas and A.~Ephremides, ``{Stability properties of constrained queueing
  systems and scheduling policies for maximum throughput in multihop radio
  networks},'' \emph{IEEE Transactions on Automatic Control}, vol.~37, no.~12,
  pp. 1936--1948, 1992.

\bibitem{lin06c}
X.~Lin, N.~B. Shroff, and R.~Srikant, ``{A tutorial on cross-layer optimization
  in wireless networks},'' \emph{IEEE Journal on Selected Areas in
  Communications}, vol.~24, no.~8, pp. 1452--1463, Aug. 2006.

\bibitem{georgiadis06}
L.~Georgiadis, M.~Neely, M.~Neely, and L.~Tassiulas, ``{Resource allocation and
  cross-layer control in wireless networks},'' \emph{Foundations and Trends in
  Networking}, vol.~1, no.~1, pp. 1--144, 2006.

\bibitem{eryilmaz05}
A.~Eryilmaz, R.~Srikant, and J.~Perkins, ``{Stable scheduling policies for
  fading wireless channels},'' \emph{IEEE/ACM Transactions on Networking},
  vol.~13, no.~2, pp. 411--424, 2005.

\bibitem{boyd04}
S.~P. Boyd and L.~Vandenberghe, \emph{{Convex optimization}}.\hskip 1em plus
  0.5em minus 0.4em\relax Cambridge university press, 2004.

\bibitem{cormen09}
T.~Cormen, C.~Leiserson, R.~Rivest, and C.~Stein, \emph{{Introduction to
  Algorithms}}, 3rd~ed.\hskip 1em plus 0.5em minus 0.4em\relax The
  Massachusetts Institute of Technology, 2009.

\bibitem{rybko92}
A.~Rybko and A.~Stolyar, ``{Ergodicity of stochastic processes describing the
  operation of open queueing networks},'' \emph{Problems of Information
  Transmission}, vol.~28, pp. 199--220, 1992.

\bibitem{malyshev79}
V.~Malyshev and M.~Menshikov, ``{Ergodicity, continuity and analyticity of
  countable Markov chains},'' \emph{Transactions of the Moscow Mathematical
  Society}, vol.~39, pp. 3--48, 1979.

\end{thebibliography}

\end{document}